\def\eod{\penalty 1000\hfill\penalty 1000$\diamond$\hskip 0pt\penalty-1000\par}
\newcommand{\qsasTOqscs} {\operatorname{qsas2qscs}}
\newcommand{\drop}[1]{}
\newcommand{\qso}{\textit{qso}}
\newcommand{\qss}{\textit{qss}}
\newcommand{\qsss}{\textit{qsss}}
\newcommand{\QSO}{\textit{QSO}} 
\newcommand{\QSS}{\textit{QSS}} 
\newcommand{\QSSS}{\textit{QSSS}} 
\newcommand{\QSMS}{\textit{QSMS}}
\newcommand{\qsms}{\textit{qsms}} 
\newcommand{\QSAS}{\textit{QSAS}}
\newcommand{\qsas}{\textit{qsas}} 
\newcommand{\QSCS}{\textit{QSCS}}
\newcommand{\qscs}{\textit{qscs}}
\newcommand{\predominants}{\textit{predmt}}
\newcommand{\CSCS}{\textit{CSCS}} 
\newcommand{\BCSCS}{\textit{CSCS}^\dagger}
\newcommand{\adding}[3]{[#1 #2 #3]}
\newcommand{\DiagramBig}[1][0.5]
     {
     \begin{tikzpicture}[node distance=1.3cm,>=arrow30,%
     line  width=0.3mm,scale=#1,bend angle=45]
     \tikzstyle{box}=[draw,regular polygon,thick,%
     regular polygon sides=4,minimum size=22mm, inner sep = -3pt]
     \tikzstyle{ybox}=[draw,regular polygon,thick,%
     regular polygon sides=4,minimum size=22mm, inner sep = -3pt,fill=yellow]
     \tikzstyle{rbox}=[draw,regular polygon,thick,%
     regular polygon sides=4,minimum size=22mm, inner sep = -3pt,fill=red]
     \tikzstyle{gbox}=[draw,regular polygon,thick,%
     regular polygon sides=4,minimum size=22mm, inner sep = -3pt,fill=green]

     \tikzstyle{dybox}=[draw,regular polygon,thick,style=dashed,%
     regular polygon sides=4,minimum size=22mm, inner sep = -3pt,fill=yellow]
     \tikzstyle{drbox}=[draw,regular polygon,thick,style=dashed,%
     regular polygon sides=4,minimum size=22mm, inner sep = -3pt,fill=red]
     \tikzstyle{dgbox}=[draw,regular polygon,thick,style=dashed,%
     regular polygon sides=4,minimum size=22mm, inner sep = -3pt,fill=green]
     }
\newcommand{\Diagram}[1][0.5]
     {
     \begin{tikzpicture}[node distance=1.3cm,>=arrow30,%
     line  width=0.3mm,scale=#1,bend angle=45]
     \tikzstyle{box}=[draw,regular polygon,thick,rounded corners,%
     regular polygon sides=4,minimum size=14mm, inner sep = -3pt]
     \tikzstyle{ybox}=[draw,regular polygon,thick,rounded corners,%
     regular polygon sides=4,minimum size=14mm, inner sep = -3pt,fill=yellow!40]
     \tikzstyle{rbox}=[draw,regular polygon,thick,rounded corners,%
     regular polygon sides=4,minimum size=14mm, inner sep = -3pt,fill=red!40]
     \tikzstyle{gbox}=[draw,regular polygon,thick,rounded corners,%
     regular polygon sides=4,minimum size=14mm, inner sep = -3pt,fill=green!40]
     \tikzstyle{bbox}=[draw,regular polygon,thick,rounded corners,%
     regular polygon sides=4,minimum size=14mm, inner sep = -3pt,fill=cyan!40]

     \tikzstyle{dybox}=[draw,regular polygon,thick,style=dashed,rounded corners,%
     regular polygon sides=4,minimum size=14mm, inner sep = -3pt,fill=yellow!40]
     \tikzstyle{drbox}=[draw,regular polygon,thick,style=dashed,rounded corners,%
     regular polygon sides=4,minimum size=14mm, inner sep = -3pt,fill=red!40]
     \tikzstyle{dgbox}=[draw,regular polygon,thick,style=dashed,rounded corners,%
     regular polygon sides=4,minimum size=14mm, inner sep = -3pt,fill=green!40]
     \tikzstyle{dbbox}=[draw,regular polygon,thick,style=dashed,rounded corners,%
     regular polygon sides=4,minimum size=14mm, inner sep = -3pt,fill=cyan!40]
     }
\newcommand{\Relation}[2]
     {
     \begin{tikzpicture}[baseline=-0.5cm, node distance=.8cm,>=arrow30,line width=0.3mm]
     \tikzstyle{tlcorner}=[xshift=#2,yshift=#1]
     \tikzstyle{brcorner}=[xshift=#1,yshift=#2]
     \tikzstyle{bgcolor}=[fill=brown!20]
     }
\newcommand{\StandardNet}[1][0.5]
     {
     \begin{tikzpicture}[node distance=1.3cm,>=arrow30,line  width=0.3mm,scale=#1,auto,bend angle=45]
     \tikzstyle{place}=[draw,circle,thick,minimum size=4.5mm]
     \tikzstyle{transition}=
                [draw,regular polygon,thick,
                 regular polygon sides=4,minimum size=8mm, inner sep = -2pt]
     }
\newcommand{\StandardTS}[1][0.5]
     {
     \begin{tikzpicture}[node distance=0.5cm,>=arrow30,bend angle=45,scale=#1]
     }
\newcommand{\colorarc}[4]
    {\path [color=#1, -stealth, line width=#2,
     postaction={draw, line width=#3, shorten >=#4, -}]}
\newcommand{\Ext}[2]         {\operatorname{ext}_{#1}(#2)}
\DeclareMathOperator{\cls}            {clo}
\DeclareMathOperator{\satmap}         {max}
\newcommand{\ie}           {i.e., }
\newcommand{\eg}           {e.g., }
\newcommand{\EOD}          {\hspace{5mm}$\diamond$}
\newcommand{\sat}            {\mathit{max}}
\newcommand{\closed}         {\mathit{clo}}
\newcommand{\sq}             {\sqsubset}
\newcommand{\es}             {\varnothing}
\newcommand{\FA}[1]          {\langle #1\rangle}
\newcommand{\pref}           {\trianglelefteq}
\newcommand{\po}             {\mathit{po}}
\newcommand{\rs}             {\mathit{rs}}
\newcommand{\rss}            {\mathit{RS}}
\newcommand*{\SigmaStar}[1][]
{\mathbin{\tikz [baseline=-0.25ex]
\draw [very thin]
(0pt,0.5ex) --
(0.25em,0.5ex) --
(0.125em,0.1ex) --
(0.25em,0.5ex) --
(0.125em,0.9ex) --
(0.25em,0.5ex) --
(0.35em,0.5ex) --
(0.225em,0.2ex) --
(0.4em,0.2ex) --
(0.225em,0.2ex) --
(0.35em,0.5ex) --
(0.225em,0.8ex)  --
(0.4em,0.8ex) --
(0.225em,0.8ex) --
(0.35em,0.5ex);
}}
\newcommand*{\SigmaAsterisc}[1][]
{\mathbin{\tikz [baseline=-0.25ex]
\draw [very thin]
(0pt,0.5ex) --
(0.25em,0.5ex) --
(0.125em,0.1ex) --
(0.375em,0.1ex) --
(0.125em,0.1ex) --
(0.25em,0.5ex) --
(0.125em,0.9ex) --
(0.375em,0.9ex) --
(0.125em,0.9ex) --
(0.25em,0.5ex) --
(0.3125em, 0.3ex) --
(0.25em,0.5ex) --
(0.3125em, 0.7ex) --
(0.25em,0.5ex) --
(0.5em,0.5ex);
}}
\newcommand{\RR}             {\mathcal{R}}
\newcommand{\SSS}            {\mathcal{S}}
\newtheorem{theorem}{\vspace{3mm}Theorem}[section]
\newtheorem{proposition}[theorem]{\vspace{3mm}Proposition}
\newtheorem{remark}[theorem]{\vspace{3mm}Remark}
\newtheorem{definition}[theorem]{\vspace{3mm}Definition} 
\newtheorem{example}[theorem]{\vspace{3mm}Example}
\newenvironment{proof}{\noindent\textbf{Proof.} \rm}{\hfill $ \Box $}
\title{Quasi-stratified Order Semantics of Concurrency}
\author{ \href{https://orcid.org/0000-0003-4563-1378}{\includegraphics[scale=0.06]{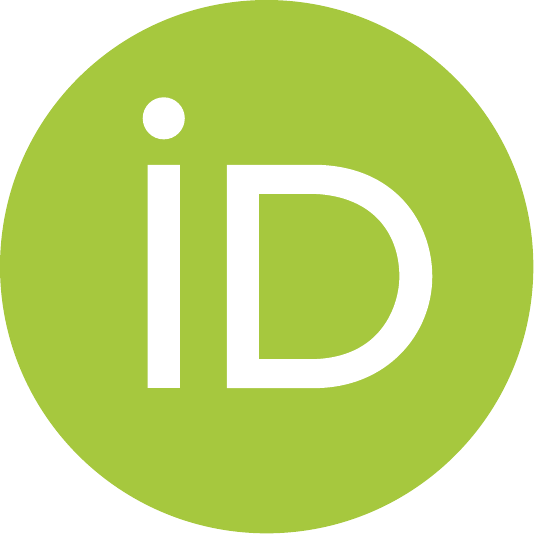}\hspace{1mm}Maciej Koutny}\\
	School of Computing\\
	Newcastle University\\
	1 Science Square, Newcastle upon Tyne, NE4 5TG, U.K. \\
	\texttt{maciej.koutny@ncl.ac.uk} \\
	\And
	\href{https://orcid.org/0000-0002-6711-557X}{\includegraphics[scale=0.06]{orcid.pdf}\hspace{1mm}{\L}ukasz Mikulski} \\
	Faculty of Mathematics and Computer Science\\
	Nicolaus Copernicus University in Toru{\'n}\\
	Chopina 12/18, Toru{\'n}, Poland \\
	\texttt{lukasz.mikulski@mat.umk.pl} \\
}
\begin{document}

\maketitle
 
\begin{abstract}
In the development of operational semantics of concurrent systems, a key decision concerns the 
adoption of a suitable notion of execution model, which basically amounts to choosing a class of partial orders 
according to which events are arranged along the execution line. Typical kinds of such partial orders are
the total, stratified and interval orders. 
In this paper, we introduce quasi-stratified orders --- positioned in-between the stratified and interval orders ---
which are tailored for transaction-like or hierarchical concurrent executions.

Dealing directly with the vast number of executions of concurrent system is far from being practical.
It was realised long time ago that it can be much more effective to consider behaviours at a more abstract level of 
behavioural specifications (often based on intrinsic relationships
between events such as those represented by causal partial orders),
each such specification --- typically, a relational structure --- encompassing a (large) number
of executions. 

In this paper, we introduce and investigate suitable specifications for behaviours represented by quasi-stratified orders.
The proposed model of quasi-stratified relational structures is based on two 
relationships between events --- the `before' and ‘not later’ relationships --- which can be used to 
express and analyse causality, independence, and simultaneity between events. 
\end{abstract}

\keywords{partial order \and total order \and stratified order \and interval order \and quasi-stratified order \and concurrency \and relational structure \and causality \and simultaneity \and closure \and precedence and weak precedence}

\section{Introduction}
 
When designing a sound semantics of a class of concurrent systems, 
it is paramount to address the following two aspects: (i) 
faithful formal modelling of the executed behaviours; and (ii) 
providing support for effective validation techniques of 
system behaviour.
The former basically amounts to choosing a class of partial orders 
(\eg total, stratified or interval orders)
according to which events are arranged along the execution line. 
The latter is often realised through devising suitable relational structures which 
are more abstract and encompassing a (large) number
of executed behaviours.
Typically, such relational structures capture intrinsic relationships
between events (such as those represented by causal partial orders),
and can be used to express and analyse
causality, independence, and simultaneity between events. 

Each of the existing ways of modelling 
concurrent behaviours using partial orders comes with its own merits.
We feel, however, that neither addresses in a satisfactory manner the behavioural features 
of systems which support hierarchical executions, such as 
those built on transaction-like executions.
In this paper, we aim at providing some initial ideas and solutions to 
rectify this.

\subsection{Background}

Virtually any approach to modelling operational semantics of concurrent systems 
is fundamentally based on a model for representing individual system behaviours.
In such representations, two events (executed actions) 
can be observed as either happening one after another or simultaneously.
This, in particular, means that such a behaviour can be represented as 
a partially ordered set of events as precedence is a transitive relationship, and the
simultaneity between events can be understood as the relationship of being unordered. 
In the literature, one can identify three main kinds of 
partial orders modelling concurrent behaviours:
(i) total orders modelling sequential executions;
(ii) stratified orders modelling behaviours in which sets of simultaneous events are executed sequentially;
and
(iii) interval orders modelling behaviours where events can take time and the corresponding time intervals can 
arbitrarily overlap. 
In this paper we are concerned with modelling of executions where events can have duration,
and so can be modelled by interval orders.

The relevance of interval orders follows from an observation, credited to Wiener~\cite{Wie},
that any execution of a physical system that can be observed by a single observer
must be an interval order. It implies that the most precise observational
semantics should be defined in terms of interval orders (cf.~\cite{JanKou93}).
In the area of concurrency theory,
the use of interval orders 
can be traced back to~\cite{Lam86,DBLP:journals/ijpp/Pratt86,JanKou93}, and
processes of concurrent systems with interval order semantics were studied in~\cite{JanKou97,DBLP:journals/iandc/JanickiY17}.
Interval orders were used to investigate communication protocols in~\cite{ABM}, using the approach of~\cite{Lam78}.
Interval semantics (ST-semantics) was investigated for Petri nets with read arcs~\cite{Vog02} and discussions on distributability of concurrent systems~\cite{Glabb}.

As an example, consider four transactions in the distributed environment: 
$a$, $b$, $c$, and $d$. Moreover, suppose that $a$ precedes $b$ and $c$
precedes $d$. Suppose further that $a$ does not precede $d$ and $c$ does not precede $b$.
Then, it is possible for two messages $\alpha$ (from $d$ to $a$) and 
$\beta$ (from $b$ to $c$) to be sent and delivered. Hence 
$\alpha_\textit{snd}\prec\alpha_\textit{rcv}\prec
\beta_\textit{snd}\prec\beta_\textit{rcv}\prec\alpha_\textit{snd}$, which is impossible.
Hence $a$ precedes $d$ or $c$ precede $b$, and so the precedence 
relationship between the four transactions is an interval order. 

In this paper, we consider a model of interval order executions tailored for concurrent systems 
such as distributed systems with nested transaction, and concurrent systems employing action
refinement. In such cases, the underlying assumption is that 
an execution can be seen as a sequence of generalised strata,
where each generalised stratum is composed, \eg of  a base event $x$ which is simultaneous 
with two events, $y$ and $z$, and $y\prec z$. Executions of this type 
can be represented by interval orders of special type, which we call 
\emph{quasi-stratified}. 

\subsection{The approach followed in this paper}

Dealing directly with the vast number of actual behaviours of a concurrent system is far from being practical.
It was realised long time ago that it is more effective to discuss behaviours at more abstract level of 
behavioural specifications (often based on intrinsic relationships
between events such as those represented by causal partial orders),
each such specification --- typically, a relational structure --- encompassing a (large) number
of behaviours.

 During the operation of a 
concurrent or distributed system, system actions
may be executed sequentially or simultaneously. To faithfully reflect the simultaneity of actions, one 
may choose step sequences (or stratified orders of events) where 
actions are regarded as happening instantaneously in sets (called steps), 
or interval orders where actions are executed over time intervals. 
In this paper, we investigate a novel representation of concurrent behaviours
(such as those exhibited by transactional distributed systems),
called quasi-stratified orders, which lie in-between the stratified and interval orders.
 
Dealing with the semantics of concurrent systems purely in terms of their individual executions, such as total orders (sequences), stratified orders (step sequences), 
quasi-stratified orders,
or interval orders is far from being computationally efficient in terms both of behaviour modelling and of 
property validation. 
To address this shortcoming, more involved relational structures have been introduced, aiming at a succinct and faithful
representation $\rs$ of (often exponentially large) sets $\rss$ of closely related individual executions. 
Examples include causal partial orders for sequences, and 
invariant structures for step sequences. 
Succinctness is usually achieved by 
retaining in $\rs$ (through intersection) only those relationships which are common to all executions in $\rss$.
Faithfulness, on the other hand, requires that all potential executions which are extensions of $\rs$ belong to $\rss$.
Structures like $\rs$, 
referred to as invariant structures 
represent concurrent histories 
and both desired properties 
(succinctness and faithfulness) 
follow from generalisations of Szpilrajn's Theorem (\ie a partial order is the intersection of all
its total order extensions).
Although $\rs$ provides a clean theoretical capture of the set $\rss$, to turn them into a practical tool (as, \eg in~\cite{DBLP:conf/cav/McMillan92}) 
one needs to be able to derive them directly from single executions using 
the relevant structural properties of the concurrent system. 
This brings into focus relational structures with acyclic relations on events (\eg dependence graphs introduced
in~\cite{DBLP:conf/ac/Mazurkiewicz86} and analysed in detail in~\cite{HoogRoz95}), which 
yield invariant structures after applying a suitable closure operation (\eg the transitive closure 
for acyclic relations). 

The approach sketched above has been introduced and investigated 
in~\cite{DBLP:series/sci/2022-1020,DBLP:journals/tcs/JanickiKKM21}
as a generic model 
that provides general recipes for building analytic frameworks based on 
relational structures.
It starts from 
a class $\RR$ of relational structures (such as acyclic relations), 
which are compared w.r.t.\ the information they convey 
(expressed by the relation $\pref$ with $\rs'\pref\rs$ if
$\rs$ is obtained from $\rs'$ by adding
new relationships, \ie $\rs$ is more \emph{concrete} than $\rs'$). 
Then the \emph{maximal} relational structures 
(such as total orders) 
$\RR^\sat \subseteq \RR$ represent individual executions. 
Then \emph{closed} 
relational structures (like causal partial orders) $\RR^\closed \subseteq \RR$
can be used to analyse intrinsic relationships between executed events. 
 
\subsection{Contribution of this paper}

There is a vast volume of literature concerned with the semantics 
in which the three established ways 
 of modelling executed behaviours (\ie those based on total, stratified and interval orders)
are used. This paper paves the way for a new kind of semantics 
based on quasi-stratified orders. 
In particular, we introduce for the first time (as far as we are aware) the notions
of quasi-stratified order and the corresponding specification technique based
on quasi-stratified relational structures.

Throughout the paper we follow closely the general approach 
to defining order-theoretic semantics of concurrent systems
formalised 
in~\cite{DBLP:journals/tcs/JanickiKKM21,DBLP:series/sci/2022-1020},
taking advantage of the existing result where it is possible. 
It captures behaviour specifications (concurrent histories) 
which are based on two relationships between events, namely \emph{precedence}
and \emph{weak precedence}, intuitively corresponding to the `earlier than' and 
`not later than' relative positions in individual executions. 
The same two relationships were discussed in~\cite{JanKou93,DBLP:series/sci/2022-1020,JKKM-PN2024}, and the resulting 
relational structures proved to be expressive 
enough to faithfully model inhibitor nets, Boolean networks, reaction systems, membrane systems, 
etc.~\cite{DBLP:series/sci/2022-1020,KleKou06}.

\subsection{Structure of this paper}
 
In the next section, we recall basic facts involved in the modelling of 
concurrent behaviours using the total, stratified, and interval orders. 
Moreover, we sketch the way in which relational structures with two
relationships have been designed to serve as specifications of sets of related 
behaviours. 
Sections~\ref{sect-qso} and~\ref{sect-qscs} introduces quasi-stratified orders and quasi-stratified acyclic structures, and 
makes precise their algebraic and operational interpretations. 
Section~\ref{sec-max} introduces maximal quasi-stratified structures and
shows that the quasi-stratified structures provide a suitable specification
model for quasi-order behaviours. 
Section~\ref{sec-inv} is concerned with closed quasi-stratified order structures and 
the corresponding structure closure mapping. 

\section{Modelling concurrent behaviours}
\label{sect-prelim}
 
There is a natural distinction between specifications of concurrent 
system behavior (which may be given as sets of constraints about 
the ordering of actions to be executed), and actual observed behaviours
(executions of actions) which should adhere to these specifications.
Usually, a single specification can be realised by (very) many actual behaviours.

In this paper, we will address both specifications and behaviours within a single framework.
We will consider events (\ie executed actions) 
as activities taking time which can therefore be understood as 
time intervals. We are not interested in the precise timings of such intervals,
but only whether they are ordered in time 
(\ie that one interval precedes another interval) or simultaneous 
(\ie that two time intervals overlap). 

\subsection{Partial order models of system executions}
\label{subs-1}

We adopt the view that observed behaviours of concurrent systems 
are represented by partial orders (as acyclicity of events
is a very minimal requirement resulting from physical considerations, and event precedence 
is transitive).
Note that for partial order system executions, simultaneity 
of events
(corresponding to the overlapping of time intervals)
amounts to the lack of ordering between events.

There are different partial order models of concurrent 
behaviours, \eg reflecting the 
underlying hardware and/of software.
In the literature, there are three main kinds of such models,
namely the total, stratified, and interval orders, as recalled below. 
 
Let $\Delta$ be a finite set (of events) and $\prec$ be
a binary (precedence) relation over $\Delta$.
Then $\po=\FA{\Delta,\prec}$ is called (below $x,y,z,w\in\Delta$):
\begin{itemize} 
\item
\emph{partial order} if 
\\
$ 
\begin{array}{l@{~~}l@{~~}l@{~~~~~~~}l@{~~}l@{~~}l}
 \textsc{po{:}1}
 &: 
 &\makebox[3cm][l]{\mbox{$x\not\prec x$}} 
 &
 \textsc{po{:}2} 
 & :
 & x\prec y \prec z \Longrightarrow x\prec z
\end{array}
$ 

\item 
\emph{total order} if 
\\
$ 
\begin{array}{l@{~~}l@{~~}l@{~~~~~~~}l@{~~}l@{~~}l}
 \textsc{to{:}1}
 &:
 & \makebox[3cm][l]{\mbox{$x\not\prec x$}} 
 & 
 \textsc{to{:}3} 
 & :
 & x\neq y \Longrightarrow x\prec y~\vee~y\prec x 
\\ 
 \textsc{to{:}2} 
 & :
 & x\prec y \prec z \Longrightarrow x\prec z 
\end{array}
$

\item
\emph{stratified order} if 
\\
$
\begin{array}{l@{~~}l@{~~}l@{~~~~~~~}l@{~~}l@{~~}l}
 \textsc{so{:}1}
 &:
 & \makebox[3cm][l]{\mbox{$x\not\prec x$}} 
 & \textsc{so{:}3} 
 & :
 & x\not\prec y\not\prec x ~\wedge~ x\prec z \Longrightarrow y\prec z 
\\ 
\textsc{so{:}2} 
 & :
 & x\prec y \prec z \Longrightarrow x\prec z
 &\textsc{so{:}4} 
 & :
 & x\not\prec y\not\prec x ~\wedge~ z\prec x \Longrightarrow z\prec y
\end{array}
$
 
\item 
\emph{interval order} if 
\\
$
\begin{array}{l@{~~}l@{~~}l@{~~~~~~~}l@{~~}l@{~~}l}
 \textsc{io{:}1}
 &:
 &\makebox[3cm][l]{\mbox{$x\not\prec x$}} 
 & 
 \textsc{io{:}2} 
 & :
 &
 x\prec y~\wedge~z\prec w \Longrightarrow x\prec w~\vee~z\prec y 
\end{array}
$
\end{itemize} 
All total orders are stratified, all stratified orders are interval, 
and all interval orders are partial.

The following result 
provides a useful characterisation of stratified orders.

\begin{proposition} 
\label{prec-uuggygyg}
$ \FA{\Delta,\prec}$ is a stratified order iff 
there is a partition $\Delta_1,\dots,\Delta_n$ of $\Delta$ such that 
$\prec=\bigcup_{1\leq i<j\leq n}\Delta_i\times\Delta_j$.
\end{proposition}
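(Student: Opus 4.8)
The plan is to prove both implications directly from the axioms \textsc{so{:}1}--\textsc{so{:}4}. For the backward direction, suppose $\prec=\bigcup_{1\leq i<j\leq n}\Delta_i\times\Delta_j$ for a partition $\Delta_1,\dots,\Delta_n$. Irreflexivity (\textsc{so{:}1}) is immediate since no block meets itself. Transitivity (\textsc{so{:}2}): if $x\in\Delta_i$, $y\in\Delta_j$, $z\in\Delta_k$ with $i<j$ and $j<k$, then $i<k$. For \textsc{so{:}3} and \textsc{so{:}4}, the key observation is that $x\not\prec y\not\prec x$ holds exactly when $x$ and $y$ lie in the \emph{same} block $\Delta_i$; then $x\prec z$ forces $z\in\Delta_j$ with $i<j$, whence $y\prec z$, and symmetrically for \textsc{so{:}4}. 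This direction is routine.

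For the forward direction, let $\po=\FA{\Delta,\prec}$ be a stratified order. Define a relation $\sim$ on $\Delta$ by $x\sim y$ iff $x=y$ or ($x\not\prec y$ and $y\not\prec x$). First I would show $\sim$ is an equivalence relation: reflexivity and symmetry are immediate, and transitivity is where \textsc{so{:}3}/\textsc{so{:}4} do the work --- if $x\sim y\sim z$ but, say, $x\prec z$, then from $x\not\prec y\not\prec x$ and $x\prec z$ we get $y\prec z$ by \textsc{so{:}3}, contradicting $y\sim z$; the case $z\prec x$ is symmetric via \textsc{so{:}4}. Let $\Delta_1,\dots,\Delta_n$ be the equivalence classes. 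Next I would define a relation on classes: for distinct classes $C,C'$ set $C\lhd C'$ iff $x\prec y$ for some (equivalently, by \textsc{so{:}3}--\textsc{so{:}4} together with transitivity, every) $x\in C$, $y\in C'$. The "equivalently every" claim needs a short argument: if $x\prec y$ with $x\in C$, $y\in C'$, and $x'\in C$, $y'\in C'$, then $x'\sim x\prec y$ gives $x'\prec y$ by \textsc{so{:}4} (since $x'\not\prec x\not\prec x'$), and then $x'\prec y\sim y'$ gives $x'\prec y'$ by \textsc{so{:}3}. Moreover $C\lhd C'$ and $C'\lhd C$ cannot both hold (it would give $x\prec y\prec x'$ with $x\sim x'$, so $x\prec x'$ by transitivity, forcing $x'\not\sim x$ unless... actually $x\prec x$ if $x=x'$, violating \textsc{so{:}1}; and if $x\neq x'$ then $x\prec x'$ contradicts $x\sim x'$). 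So $\lhd$ is irreflexive and, via transitivity of $\prec$, transitive --- hence a strict partial order on a finite set, and in fact I claim it is total: for distinct classes $C,C'$ pick $x\in C$, $y\in C'$; since $x\neq y$ and $x\not\sim y$, either $x\prec y$ or $y\prec x$, giving $C\lhd C'$ or $C'\lhd C$. A strict total order on $\{\Delta_1,\dots,\Delta_n\}$ can be reindexed so that $\Delta_i\lhd\Delta_j$ iff $i<j$, and then unwinding the definitions gives $\prec=\bigcup_{1\leq i<j\leq n}\Delta_i\times\Delta_j$.

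The main obstacle is the bookkeeping in the forward direction: establishing that $\sim$ is transitive and that the induced order on classes is well-defined (independent of representatives) and total --- all of which rests on careful, repeated application of \textsc{so{:}3} and \textsc{so{:}4} together with \textsc{so{:}2}. None of the individual steps is deep, but one must be disciplined about which axiom licenses each inference and about the symmetric "mirror" cases. Everything else --- finiteness ensuring the class order can be linearised, and the backward direction --- is straightforward.
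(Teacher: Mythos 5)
Your proof is correct and takes essentially the same route as the paper's: both turn unorderedness into an equivalence relation (transitivity via \textsc{so{:}3}/\textsc{so{:}4}) and then totally order its classes, the paper merely compressing your explicit class-order argument into the remark that acyclicity of $\prec$ allows a suitable re-ordering of the classes. (One trivial slip in the representative-independence step: propagating $x\prec y$ to $x'\prec y$ along $x\sim x'$ is licensed by \textsc{so{:}3}, while propagating along $y\sim y'$ is \textsc{so{:}4} --- you cited them the other way round.)
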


\begin{proof}
($\Longleftarrow$) 
Clearly, \textsc{so{:}1,2} are satisfied.
Moreover, \textsc{so{:}3,4} hold
as $x\not\prec y\not\prec x$ implies that
there is $\Delta_i$ such that $x,y\in\Delta_i$.

($\Longrightarrow$)
We first observe that the relation
$S=\{\FA{x,y}\in\Delta\times\Delta\mid x\not\prec y\not\prec x\} $
is an equivalence relation.
Indeed, $\{\FA{x,x}\mid x\in\Delta\}\subseteq S$ by \textsc{so{:}1}.
Suppose now that
$x\not\prec y\not\prec x$ and $z\not\prec y\not\prec z$.
If $x\prec z$ or $z\prec x$, then, by \textsc{so{:}3,4}, we obtain that
$y\prec z$ or $z\prec y$, yielding
a contradiction with $z\not\prec y\not\prec z$. 
Hence, there is a partition $\Delta_1,\dots,\Delta_n$ of $\Delta$ into the equivalence classes of $S$.

Suppose that $i\neq j$, $x,y\in \Delta_i$ and $z\in\Delta_j$. Then, as $\Delta_i$ and $\Delta_j$ 
are equivalence classes of $S$,
$x\prec z$ or $z\prec x$. If $x\prec z$, then, by \textsc{so{:}3}, $y\prec z$.
Hence, $\Delta_i\times \Delta_j\subseteq \prec$. Moreover, as $\prec$ is acyclic (by \textsc{so{:}1,2}),
we have
$(\Delta_j\times \Delta_i)\cap\prec=\es$. Similarly, if $z\prec x$, then 
$\Delta_j\times \Delta_i\subseteq \prec$ and $(\Delta_i\times \Delta_j)\cap\prec=\es$.

As a result, for all different $\Delta_i$ and $\Delta_j$, 
$\prec|_{(\Delta_i\cup\Delta_j)\times(\Delta_j\cup\Delta_i)}$
is equal either to $\Delta_i\times \Delta_j$ or $\Delta_j\times \Delta_i$.
Therefore, because $\prec$ is acyclic, there is a re-ordering of the 
equivalence classes of $S$ such that 
$\prec=\bigcup_{1\leq i<j\leq n}\Delta_i\times\Delta_j$. 
\end{proof}

\begin{example}
\label{ex:orders}
\newsavebox\total
 \sbox{\total}{%
\begin{tikzpicture}[remember picture, node distance=1.5cm,>=arrow30]
 \node (n1) {$\bullet$};
 \node (n2) [right of=n1] {$\bullet$};
 \node (n3) [right of=n2] {$\bullet$};
 \node (n4) [right of=n3] {$\bullet$};
 \node (f1) [below of=n3, yshift=.8cm] {};
 \node (f2) [above of=n3, yshift=-.8cm] {};

 \draw (n1) [->, thick] edge (n2);
 \draw (n2) [->, thick] edge (n3);
 \draw (n3) [->, thick] edge (n4);
 \draw (n1) [->, thick, out=25, in=155] edge (n3);
 \draw (n2) [->, thick, out=25, in=155] edge (n4);
 \draw (n1) [->, thick, out=-25, in=-155] edge (n4);

\begin{pgfonlayer}{background}
 \node (p0) [draw,fill=gray!10,fit=(n1)(n4)(f1)(f2),inner sep=1.1mm,color=gray!10] {};
\end{pgfonlayer}
\end{tikzpicture}%
}

\newsavebox\stratified
 \sbox{\stratified}{%
\begin{tikzpicture}[remember picture, node distance=1.5cm,>=arrow30]
 \node (n1) {$\bullet$};
 \node (n2) [right of=n1] {$\bullet$};
 \node (n3) [right of=n2] {$\bullet$};
 \node (n4) [right of=n3] {$\bullet$};
 \node (f1) [below of=n3, yshift=.8cm] {};
 \node (f2) [above of=n3, yshift=-.8cm] {};

 \draw (n2) [->, thick] edge (n3);
 \draw (n3) [->, thick] edge (n4);
 \draw (n1) [->, thick, out=25, in=155] edge (n3);
 \draw (n2) [->, thick, out=25, in=155] edge (n4);
 \draw (n1) [->, thick, out=-25, in=-155] edge (n4);

\begin{pgfonlayer}{background}
 \node (p0) [draw,fill=gray!10,fit=(n1)(n4)(f1)(f2),inner sep=1.1mm,color=gray!10] {};
\end{pgfonlayer}
\end{tikzpicture}%
}

\newsavebox\interval
 \sbox{\interval}{%
\begin{tikzpicture}[remember picture, node distance=1.5cm,>=arrow30]
 \node (n1) {$\bullet$};
 \node (n2) [right of=n1] {$\bullet$};
 \node (n3) [right of=n2] {$\bullet$};
 \node (n4) [right of=n3] {$\bullet$};
 \node (f1) [below of=n3, yshift=.8cm] {};
 \node (f2) [above of=n3, yshift=-.8cm] {};

 \draw (n1) [->, thick, out=25, in=155] edge (n3);
 \draw (n2) [->, thick, out=25, in=155] edge (n4);
 \draw (n1) [->, thick, out=-25, in=-155] edge (n4);

\begin{pgfonlayer}{background}
 \node (p0) [draw,fill=gray!10,fit=(n1)(n4)(f1)(f2),inner sep=1.1mm,color=gray!10] {};
\end{pgfonlayer}
\end{tikzpicture}%
}

\newsavebox\Mypartial
 \sbox{\Mypartial}{%
\begin{tikzpicture}[remember picture, node distance=1.5cm,>=arrow30]
 \node (n1) {$\bullet$};
 \node (n2) [right of=n1] {$\bullet$};
 \node (n3) [right of=n2] {$\bullet$};
 \node (n4) [right of=n3] {$\bullet$};
 \node (f1) [below of=n3, yshift=.8cm] {};
 \node (f2) [above of=n3, yshift=-.8cm] {};

 \draw (n1) [->, thick, out=25, in=155] edge (n3);
 \draw (n2) [->, thick, out=25, in=155] edge (n4);

\begin{pgfonlayer}{background}
 \node (p0) [draw,fill=gray!10,fit=(n1)(n4)(f1)(f2),inner sep=1.1mm,color=gray!10] {};
\end{pgfonlayer}
\end{tikzpicture}%
}

In Figure~\ref{fig-orders} one can see ($a$) total order, ($b$) stratified order which is not total, 
($c$) interval order which is not stratified, and ($d$) partial order which is not interval.
\EOD

\begin{figure}[ht]
\begin{center}
\begin{tikzpicture}[node distance=6.5cm]
 \node (A) {($a$) \usebox{\total}};
 \node (B) [right of=A, yshift=0mm] {($b$) \usebox{\stratified}};
 \node (C) [below of=A, yshift=45mm] {($c$) \usebox{\interval}};
 \node (D) [right of=C, yshift=0mm] {($d$) \usebox{\Mypartial}};
 \end{tikzpicture}

 \end{center}
 \caption
{
 The hierarchy of partial orders represented as directed graphs.
}
\label{fig-orders}
 \end{figure}
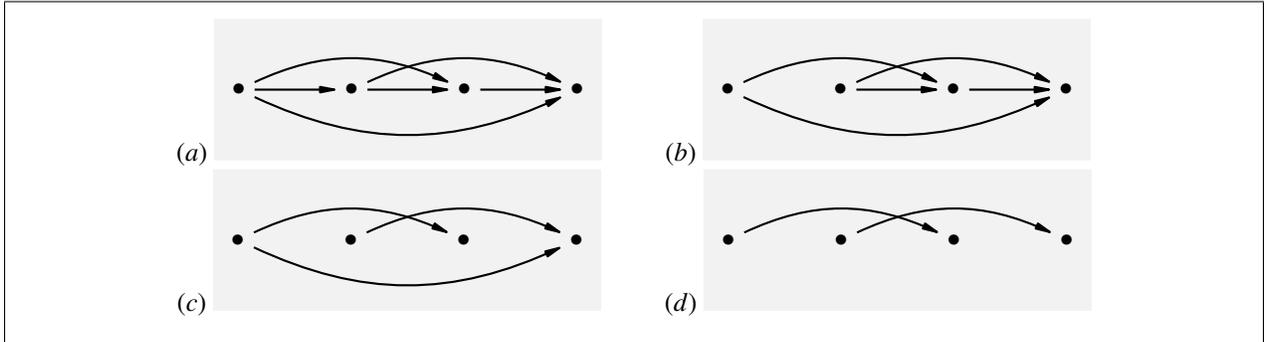
\end{example}

\begin{remark}[models of execution]
 \label{rem-1}
 
Thinking about partial orders as representations of behaviours, 
leads to the following observations.
\begin{itemize}
\item 
The adjective in `total order' derives from an alternative 
definition: 
\begin{quote}
$\po=\FA{\Delta,\prec}$ is a total order  
if $\Delta$ can be enumerated as $x_1,\dots,x_n$ so that 
$\prec=\{\FA{x_i,x_j}\mid 1\leq i<j\leq n\}$.
\end{quote} 
Intuitively, this means that the intervals corresponding to events are linearly 
ordered, with no overlapping (no two events are simultaneous).

\item
The adjective in `stratified order' derives from an alternative 
definition (see Proposition~\ref{prec-uuggygyg}): 
\begin{quote}
$\po=\FA{\Delta,\prec}$ is a stratified order
if there are sets (strata)
$\Delta_1,\dots,\Delta_n$ forming a partition of $\Delta$ so that 
$\prec=\bigcup_{1\leq i<j\leq n}\Delta_i\times\Delta_j$.
\end{quote} 
Intuitively, this means that the intervals corresponding to events can be 
partitioned into non-overlapping totally ordered strata (or steps) such that within each stratum 
all the intervals overlap.

\item
The adjective in `interval order' derives from an alternative 
definition (see~\cite{Fish}):
\begin{quote}
$\po=\FA{\Delta,\prec}$ is an interval order
if there exist real-valued mappings, 
$\beta$ and $\epsilon$, such that, for
all $x,y\in\Delta$, $\beta(x)< \epsilon(x)$ and 
$x\prec y \iff \epsilon(x)< \beta(y)$.
\end{quote}
Intuitively, $\beta$ and $\epsilon$ represent 
the `beginnings' and `endings', respectively, of the intervals corresponding to 
events in $\Delta$. 
A remark, credited to Wiener~\cite{Wie}, states
that any execution of a physical system that can 
be observed by a single observer is an interval order.
Interval order executions are typically generated 
after splitting each action in the model of a 
concurrent system into a `begin' and an `end' actions 
and, after executing the modified model
using sequential semantics, deriving 
the corresponding interval orders.
However, this can also be done directly, \ie without 
splitting actions, as shown in~\cite{MKMPK03}.
\eod
\end{itemize} 
\end{remark}

In this paper, we will introduce a new partial order model of concurrent behaviours,
called quasi-stratified orders, which
lies in-between the stratified and interval orders (in the sense presented in Examples~\ref{ex:orders} and~\ref{ex-qsorder}). 
The motivation comes from 
a realisation that in many systems (such as distributed transaction systems) a base activity 
can exist so that other (sub-activities) are executed and completed over its lifetime.
 
\subsection{Relational structures for specifying system executions}
\label{subs-2}

Concurrent system behaviours can be specified using relational structures.
Here we consider relational structures with two relationships `generating' behaviours 
consistent with these two relations. 

Throughout the paper, a \emph{structure} is a triple $\rs=\FA{\Delta,\prec,\sq}$,
where $\prec$ and $\sq$
are two binary relations over a finite domain $\Delta$.
Moreover, if both relations are irreflexive, 
then $\rs$ is a \emph{relational structure}. 
We also use 
$\Delta_\rs$, $\prec_\rs$, and $\sq_\rs$ to denote the components of $\rs$.
The notations below can be used to add new domain elements and relationships: c
\[
\rs[x]=\FA{\Delta\cup\{x\},\prec,\sq} 
~~~~~~~~ 
\rs{\adding{x}{\prec}{y}}=\FA{\Delta,\prec\cup\{\FA{x,y}\},\sq}
~~~~~~~~
\rs{\adding{x}{\sq}{y}}=\FA{\Delta,\prec,\sq\cup\{\FA{x,y}\}} 
\]
where in the first case $x\notin\Delta$ while in the other two cases $x,y\in\Delta$.

In the approach followed in this paper, relational structures are
used to specify some relevant
relationships between events represented by domain elements.
Intuitively, $x\prec y$ can be interpreted as stating that the execution of $x$ precedes the 
execution of $y$, whereas $x\sq y$ can be interpreted as stating that the execution of $x$ precedes or is simultaneous with the 
execution of $y$ (\ie $y$ does not precede $x$)\footnote{It is never the case that $x$ precedes $y$ while $y$ weakly precedes $x$. Hence, $\sq$ can be treated as the negation of the reversed $\prec$.}.

Let $\RR$ be a set of relational structures, and $\rs,\rs'$ be relational structures: 
\begin{itemize} 
\item 
$\rs \pref\rs'$ means that $\rs'$ is an extension of $\rs$, 
\ie
$\Delta_\rs= \Delta_{\rs'}$,
$\prec_\rs\subseteq \prec_{\rs'}$, and
$\sq_\rs\subseteq \sq_{\rs'}$.
 
\item 
 $\Ext{\RR}{\rs}=
 \{\rs'\in\RR\mid\rs \pref\rs'\}$
are the extensions of $\rs$ in $\RR$.
 
\item 
 $\RR^\sat=\{\rs'\in\RR\mid
 \Ext{\RR}{\rs'}=\{\rs'\}\}$ are the maximal structures in $\RR$.
 
\item
 $\satmap_\RR(\rs)=\Ext{\RR^\sat}{\rs}$ are 
 the maximal extensions (or saturations) of $\rs$ in $\RR$.
 
\item 
 $\rs|_\Phi=\FA{\Phi,\prec|_{\Phi\times \Phi}, \sq|_{\Phi\times \Phi}}$
 is the projection of $\rs$ onto $\Phi\subseteq \Delta_\rs$.

\item
 $\RR$ is projection-closed if $\rs|_\Phi\in\RR$, 
 for all $\rs\in\RR$ and $\Phi\subseteq\Delta_\rs$.
 \item
 $\RR$ is intersection-closed if
 $\FA{\Delta_\rs\cap\Delta_{\rs'},\prec_\rs\cap\prec_{\rs'},\sq_\rs\cap\sq_{\rs'}}\in \RR$,
 for all $\rs,\rs'\in\RR$.
\end{itemize}

A partial order $\po=\FA{\Delta,\prec}$ can be represented 
in a natural way as a relational structure 
\[\rho(\po) =\FA{\Delta,\prec,\{\FA{x,y}\in\Delta\times\Delta\mid y\not\prec x\neq y\}},\]
which means that $\prec$ is included in $\sq$ and
all unordered events are simultaneous.

As $\RR^\sat$ are intended to represent executions specified by relational
structures in $\RR$, we immediately face 
a key modelling question:
\begin{quote}
 \emph{What relational structures $\RR$ provide a 
 suitable specification model for partial orders $\mathcal X$ ?}
\end{quote}
The answer is rather straightforward, namely all one needs to require
is that $\RR^\sat=\rho(\mathcal X)$.
Note that in such a case, $\rs\in\RR$ iff 
$\satmap_\RR(\rs)\cap\rho(\mathcal X)\neq\es$. The latter 
implies that no specification is vacuous as it generates
at least one execution.

\begin{remark}[acyclity in execution]
 \label{rem-2}

 For the partial order models of behaviours in Section~\ref{subs-1}, 
 suitable relational structures $\rs=\FA{\Delta,\prec,\sq}$ satisfy 
the following:
\begin{itemize}
\item 
Total orders: 
there are no events $x_1,\dots,x_n(=x_1)$
such that $x_i(\prec\cup\sq) x_{i+1}$,
for every $1\leq i< n$. 

\item 
Stratified orders: 
there are no events $x_1,\dots,x_n(=x_1)$
such that: (i) $x_i(\prec\cup\sq) x_{i+1}$,
for every $1\leq i< n$; and (ii) 
there is $i< n$ such that $x_i \prec x_{i+1}$. 

\item
Interval orders: 
there are no events $x_1,\dots,x_n(=x_1)$
such that: (i) $x_i(\prec\cup\sq) x_{i+1}$,
for every $1\leq i< n$; 
and (ii) 
$x_{i-1}\prec x_i~\vee~x_i\prec x_{i+1}$, for all $1<i\leq n$,
and $x_{n-1}\prec x_n~\vee~x_1\prec x_2$.
\end{itemize}
Each case above captures some kind of `acyclicity',
or perhaps more accurately, `constrained cyclicity'.
Such a property is linked to the absence of substructures of $\rs$
which are deemed `unacceptable' or `forbidden' in a given execution model. For example,
if $x\sq y\sq x$ then no total order exists in which $x$ does not precede $y$
and, at the same time, $y$ does not precede $x$. However, such a cyclic relationship
is allowed in the stratified order case as one can execute $x$ and $y$ simultaneously. 
\eod
\end{remark}

\begin{example}
\label{ex:fcycles}
\newsavebox\wcycle
 \sbox{\wcycle}{%
\begin{tikzpicture}[remember picture, node distance=1.5cm,>=arrow30]
 \node (n1) {$\bullet$};
 \node (n2) [right of=n1] {$\bullet$};
 \node (n3) [below of=n2] {$\bullet$};
 \node (n4) [left of=n3] {$\bullet$};
  
 \draw (n1) [->, dashed] edge (n2);
 \draw (n2) [->, dashed] edge (n3);
 \draw (n3) [->, dashed] edge (n4);
 \draw (n4) [->, dashed] edge (n1);

\begin{pgfonlayer}{background}
 \node (p0) [draw,fill=gray!10,fit=(n1)(n3),inner sep=1.1mm,color=gray!10] {};
\end{pgfonlayer}
\end{tikzpicture}%
}

\newsavebox\onecycle
 \sbox{\onecycle}{%
\begin{tikzpicture}[remember picture, node distance=1.5cm,>=arrow30]
 \node (n1) {$\bullet$};
 \node (n2) [right of=n1] {$\bullet$};
 \node (n3) [below of=n2] {$\bullet$};
 \node (n4) [left of=n3] {$\bullet$};
  
 \draw (n1) [->] edge (n2);
 \draw (n2) [->, dashed] edge (n3);
 \draw (n3) [->, dashed] edge (n4);
 \draw (n4) [->, dashed] edge (n1);

\begin{pgfonlayer}{background}
 \node (p0) [draw,fill=gray!10,fit=(n1)(n3),inner sep=1.1mm,color=gray!10] {};
\end{pgfonlayer}
\end{tikzpicture}%
}

\newsavebox\twocycle
 \sbox{\twocycle}{%
\begin{tikzpicture}[remember picture, node distance=1.5cm,>=arrow30]
 \node (n1) {$\bullet$};
 \node (n2) [right of=n1] {$\bullet$};
 \node (n3) [below of=n2] {$\bullet$};
 \node (n4) [left of=n3] {$\bullet$};
  
 \draw (n1) [->] edge (n2);
 \draw (n2) [->] edge (n3);
 \draw (n3) [->, dashed] edge (n4);
 \draw (n4) [->, dashed] edge (n1);

\begin{pgfonlayer}{background}
 \node (p0) [draw,fill=gray!10,fit=(n1)(n3),inner sep=1.1mm,color=gray!10] {};
\end{pgfonlayer}
\end{tikzpicture}%
}

\newsavebox\intcycle
 \sbox{\intcycle}{%
\begin{tikzpicture}[remember picture, node distance=1.5cm,>=arrow30]
 \node (n1) {$\bullet$};
 \node (n2) [right of=n1] {$\bullet$};
 \node (n3) [below of=n2] {$\bullet$};
 \node (n4) [left of=n3] {$\bullet$};
  
 \draw (n1) [->] edge (n2);
 \draw (n2) [->, dashed] edge (n3);
 \draw (n3) [->] edge (n4);
 \draw (n4) [->, dashed] edge (n1);

\begin{pgfonlayer}{background}
 \node (p0) [draw,fill=gray!10,fit=(n1)(n3),inner sep=1.1mm,color=gray!10] {};
\end{pgfonlayer}
\end{tikzpicture}%
}

In Figure~\ref{fig-fcycles} one can see ($a$) cycle forbidden for sequential semantic, but allowed in all other semantics, ($b$) and ($c$) cycles forbidden for sequential and step sequence semantics, but allowed in interval semantic,
($d$) cycle forbidden in all considered semantics.
\EOD

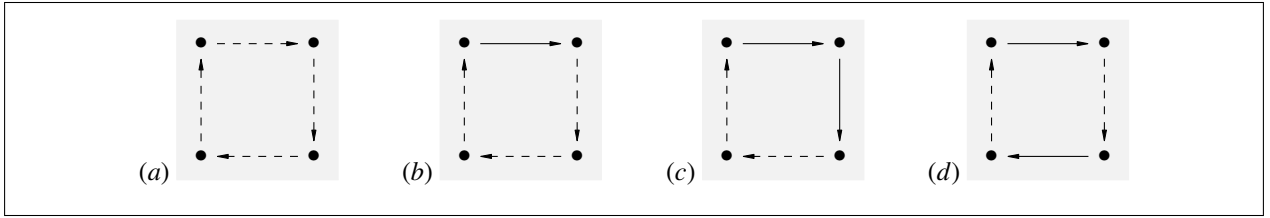
\begin{figure}[ht]
\begin{center}
\begin{tikzpicture}[node distance=3.5cm]
 \node (A) {($a$) \usebox{\wcycle}};
 \node (B) [right of=A, yshift=0mm] {($b$) \usebox{\onecycle}};
 \node (C) [right of=B, yshift=0mm] {($c$) \usebox{\twocycle}};
 \node (D) [right of=C, yshift=0mm] {($d$) \usebox{\intcycle}};
 \end{tikzpicture}

 \end{center}
 \caption
{
 The hierarchy of forbidden cycles.
}
\label{fig-fcycles}
 \end{figure}
\end{example}


\section{Quasi-stratified orders}
\label{sect-qso}

As far as we are aware
the next definition introduces a new class of partial 
orders, positioned in-between stratified and
interval orders.

\begin{definition}[\textsc{qs}-order]
\label{def-qso} 
\emph{Quasi-stratified orders} (or \textsc{qs}-orders) 
are the smallest set $\QSO$ such that $\FA{\es,\es}\in\QSO$ and the following hold, 
for all $\qso=\FA{\Delta,\prec}$ and $\qso'=\FA{\Delta',\prec'}$ in $\QSO$:
\begin{itemize} 
\item 
 If $x\notin \Delta$,
 then $ \qso[x]=\FA{\Delta\cup\{x\},\prec}$ belongs to $\QSO$.
\item 
 If $\Delta\cap\Delta'=\es$, then
$\qso\circ\qso'=\FA{\Delta\cup \Delta', 
(\prec\cup\prec')\cup (\Delta\times \Delta')}$ belongs to $\QSO$.
\end{itemize}
Moreover, $\qso\in \QSO$ is a \textsc{qso}-stratum
if there are $\qso'\in\QSO$ and $x$ such that $\qso=\qso'[x]$.
\eod
\end{definition} 
 
\begin{proposition}
\label{prop-uqiqef}
Let $\qso=\FA{\Delta,\prec}\in\QSO$. Then:
\begin{enumerate}
 \item 
 $\qso\circ\FA{\es,\es}=\FA{\es,\es}\circ\qso=\qso$.
 \item
 $\qso|_\Phi=\FA{\Phi,\prec|_{\Phi\times\Phi}}\in\QSO$, for every $\Phi\subseteq \Delta$; 
 \ie \textsc{qs}-orders are projection-closed.  
 
\item 
$\qso$ is a \textsc{qso}-stratum iff 
$\qso\neq\FA{\es,\es}$ and there is 
$x\in\Delta_\qso$ such that 
$x\not\prec_\qso z\not\prec_\qso x$, for every $z\in\Delta_\qso\setminus\{x\}$.

\item 
If $\qso\neq\FA{\es,\es}$, then there is a unique
sequence $\qso_1,\dots,\qso_n$ ($n\geq 1$)
of \textsc{qso}-strata 
such that $\qso=\qso_1\circ\dots\circ\qso_n$.
\end{enumerate} 
\end{proposition}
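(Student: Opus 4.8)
Part~(1) is immediate from the definition of $\circ$: if one argument is $\FA{\es,\es}$, then the union of domains and the union of precedence relations are unchanged, and the added product $\Delta\times\es$ (or $\es\times\Delta$) is empty. A routine induction on the construction of $\QSO$ shows that every $\qso\in\QSO$ is irreflexive --- in fact a partial order --- and satisfies $\prec_\qso\subseteq\Delta_\qso\times\Delta_\qso$, and I use these facts freely below. For part~(2), I induct on the construction of $\qso$: the base case $\FA{\es,\es}$ is trivial; if $\qso=\qso_0[x]$ then $\qso|_\Phi$ equals $\qso_0|_\Phi$ when $x\notin\Phi$ and $(\qso_0|_{\Phi\setminus\{x\}})[x]$ when $x\in\Phi$, both in $\QSO$ by the induction hypothesis and the two construction rules; if $\qso=\qso'\circ\qso''$ then $\qso|_\Phi=(\qso'|_{\Phi\cap\Delta_{\qso'}})\circ(\qso''|_{\Phi\cap\Delta_{\qso''}})$ (a short check using $\prec_\qso\subseteq\Delta_\qso\times\Delta_\qso$), again in $\QSO$. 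For part~(3): ($\Rightarrow$) if $\qso=\qso'[x]$ then, by the definition of $[\cdot]$, $x$ is involved in no pair of $\prec_\qso$, so $x\not\prec_\qso z\not\prec_\qso x$ for every $z\neq x$, and $\qso\neq\FA{\es,\es}$ because $x\in\Delta_\qso$; ($\Leftarrow$) if $\qso\neq\FA{\es,\es}$ and $x\in\Delta_\qso$ is incomparable to every other element, put $\qso':=\qso|_{\Delta_\qso\setminus\{x\}}$, which is in $\QSO$ by~(2); since $x$ is incomparable to all others and irreflexivity gives $x\not\prec_\qso x$, no pair of $\prec_\qso$ involves $x$, whence $\qso'[x]=\qso$ and $\qso$ is a $\qso$-stratum.

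\textbf{Part~(4), existence.} I argue by induction on $|\Delta_\qso|$; by~(1) a derivation of $\qso$ that applies $\circ$ with an argument $\FA{\es,\es}$ can be shortened, so I may assume every $\circ$-step has two non-empty arguments. Since $\qso\neq\FA{\es,\es}$, the inductive definition of $\QSO$ gives that $\qso$ arises from strictly smaller elements of $\QSO$ either as $\qso_0[x]$ --- and then $\qso$ is itself a $\qso$-stratum, so $n=1$ works --- or as $\qso'\circ\qso''$ with $\qso',\qso''$ non-empty. In the latter case the induction hypothesis supplies stratum decompositions $\qso'=\qso'_1\circ\cdots\circ\qso'_p$ and $\qso''=\qso''_1\circ\cdots\circ\qso''_q$, and associativity of $\circ$ (a direct computation) yields the stratum decomposition $\qso=\qso'_1\circ\cdots\circ\qso'_p\circ\qso''_1\circ\cdots\circ\qso''_q$.

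\textbf{Part~(4), uniqueness.} This is the substantive part; I prove it by induction on $|\Delta_\qso|$, the crux being to recover the first stratum intrinsically. Fix any decomposition $\qso=\qso_1\circ\cdots\circ\qso_n$; by the iterated definition of $\circ$, the precedences of $\qso$ are the internal precedences of the $\qso_i$ together with all pairs from an earlier stratum to a later one. I claim $\Delta_{\qso_1}=F(\qso)$, where
\[
F(\qso)\;=\;\bigcup\bigl\{\,\{\,b\in\Delta_\qso : x\not\prec_\qso b\,\}\ :\ x\in\Delta_\qso\text{ and no }c\text{ satisfies }c\prec_\qso x\,\bigr\}.
\]
For the inclusion ``$\subseteq$'': any $\prec_\qso$-minimal $x$ lies in $\Delta_{\qso_1}$ (were it in a later stratum, any element of $\Delta_{\qso_1}$ would precede it), hence $x$ precedes every $b\notin\Delta_{\qso_1}$ and $\{b:x\not\prec_\qso b\}\subseteq\Delta_{\qso_1}$. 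For ``$\supseteq$'': by~(3) the stratum $\qso_1$ has an element $x_1$ incomparable to all of $\Delta_{\qso_1}$; then $x_1$ is $\prec_\qso$-minimal (nothing in $\Delta_{\qso_1}$ precedes it and no later-to-earlier precedences exist), and $\{b:x_1\not\prec_\qso b\}$ is precisely $\Delta_{\qso_1}$. Thus $F(\qso)=\Delta_{\qso_1}$; as $F(\qso)$ refers only to $\qso$, all decompositions share the same first-stratum domain, and since $\qso_1=\qso|_{F(\qso)}$ they share the same first stratum. Finally, either $F(\qso)=\Delta_\qso$, which forces $n=1$ for every decomposition (strata are non-empty), and the decomposition is just $(\qso)$; or $F(\qso)\subsetneq\Delta_\qso$, and then $\qso|_{\Delta_\qso\setminus F(\qso)}=\qso_2\circ\cdots\circ\qso_n$ is in $\QSO$ by~(2) with strictly smaller domain, so its stratum decomposition is unique by the induction hypothesis; stripping the common first stratum from the two given decompositions and invoking this gives $n=m$ and $\qso_i=\qso'_i$ for all $i$.

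\textbf{Main obstacle.} Everything except uniqueness in~(4) is bookkeeping. The delicate point is the intrinsic description of $\Delta_{\qso_1}$: the naive candidates --- the set of all $\prec_\qso$-minimal elements, or the set of non-successors of a single chosen minimal element --- are in general strictly smaller than $\Delta_{\qso_1}$, because a $\qso$-stratum may carry internal precedences (as in the ``base event simultaneous with $y\prec z$'' situation). Taking the union over all $\prec_\qso$-minimal $x$ of the non-successor sets of $x$ is exactly what repairs this, and the ``$\supseteq$'' inclusion, using the universal-concurrent element provided by~(3), is the heart of the proof.
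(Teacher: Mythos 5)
Your proof is correct. The paper disposes of this proposition in one line (``all parts follow directly from Definition~\ref{def-qso}''), recording only the two facts you also rely on --- associativity of $\circ$ and the identity $(\qso'\circ\qso'')|_\Psi=\qso'|_{\Psi\cap\Delta_{\qso'}}\circ\qso''|_{\Psi\cap\Delta_{\qso''}}$ --- so your parts (1)--(3) and the existence half of (4) coincide with the intended argument. Where you genuinely add something is the uniqueness half of (4), for which the paper gives no argument at all: you recover the first stratum intrinsically as $F(\qso)$, the union over all $\prec_\qso$-minimal elements of their non-successor sets, using part (3) to exhibit a universally concurrent element of the first stratum whose non-successor set is exactly $\Delta_{\qso_1}$, and then recurse on the remainder. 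This is sound, and it parallels the machinery the paper only deploys later, in the proof of Theorem~\ref{prop-axioms_qso} ($\Leftarrow$, Case~2), where the first block is carved out as $P=\Delta\setminus N_x$ for a $\prec$-minimal $x$ of minimal out-degree; your union over all minimal elements and the paper's out-degree-minimal representative identify the same set, but your version works directly from the recursive definition without invoking the axiom \textsc{qso{:}2}, which is arguably the cleaner route at this point of the development. Your closing remark --- that the naive candidates (the set of all minimal elements, or the non-successors of an arbitrarily chosen minimal element) can be strictly smaller than the first stratum --- is exactly the subtlety the paper leaves unsaid.
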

\begin{proof}
All parts follows directly from Definition~\ref{def-qso}. We only note that
`$\circ$' is an associative operation, and
$ 
 (\qso'\circ\qso'')|_\Psi= \qso'|_{\Psi\cap\Delta_{\qso'}}\circ\qso''|_{\Psi\cap\Delta_{\qso''}}$, 
 for all $\qso',\qso''\in\QSO$ with disjoint domains and
 $\Psi\subseteq \Delta_{\qso'}\cup \Delta_{\qso''}$.
\end{proof}
 
The following result provides an axiomatic characterisation of \textsc{qs}-orders, 
similar to those given for stratified and interval orders.

\begin{theorem}
\label{prop-axioms_qso}
Let $\rs=\FA{\Delta, \prec}$, where $\prec$ is a binary relation over 
a finite set $\Delta$.
Then $\rs\in\QSO$ iff the following hold, for all $x,y,z,t\in \Delta$:
\begin{equation*}
\label{axioms-qso}
\begin{array}{lll@{~~~}c@{~~~}l@{~~~~~}l@{~~~}lr@{~~~}c@{~~~}l@{~~~~~}l@{~~~}l}
 \textsc{qso{:}1}
 &:
 & 
 x\not\prec x 
 \\ 
 \textsc{qso{:}2} 
 & :
 &
 x\prec y \land z \prec t
 & \Longrightarrow
 & x\prec t \land z\prec y \,~\lor~\,x\prec z \land x\prec t \,~\lor~\,
 z\prec x \land z\prec y \,~\lor~\, 
 t\prec y \land z\prec y \,~\lor~\, 
 y\prec t \land x\prec t 
\end{array}
\end{equation*} 
\end{theorem}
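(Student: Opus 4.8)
The plan is to establish the two implications separately. The $(\Longrightarrow)$ direction goes by structural induction on the construction of $\QSO$, and the $(\Longleftarrow)$ direction by induction on $|\Delta|$; but before either I would record two easy consequences of \textsc{qso{:}1,2}. Instantiating \textsc{qso{:}2} with the premises $x\prec y$ and $y\prec z$, three of the five disjuncts contain a subformula of the form $u\prec u$ and so are refuted by \textsc{qso{:}1}, while the remaining two each assert $x\prec z$; hence $\prec$ is transitive, so $\rs$ is a strict partial order and has $\prec$-maximal elements whenever $\Delta\neq\es$. Moreover, each of the five disjuncts of \textsc{qso{:}2} entails $x\prec t\lor z\prec y$, so \textsc{qso{:}2} implies the interval-order axiom \textsc{io{:}2}; a single application of \textsc{io{:}2} then shows that the predecessor sets $\downarrow x=\{z\in\Delta\mid z\prec x\}$ form a chain under inclusion.

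For $(\Longrightarrow)$, it suffices to show that $\FA{\es,\es}$ satisfies \textsc{qso{:}1,2} and that this property is preserved by the two generators of $\QSO$. Adding an isolated point, $\qso[x]$, is immediate, since $x$ occurs in no $\prec$-pair. For $\qso\circ\qso'$, the key remark is that every edge of $\prec_{\qso\circ\qso'}$ lies in $\Delta_\qso\times\Delta_\qso$, in $\Delta_{\qso'}\times\Delta_{\qso'}$, or in $\Delta_\qso\times\Delta_{\qso'}$, and never runs from $\Delta_{\qso'}$ to $\Delta_\qso$. Given $x\prec y$ and $z\prec t$ in $\qso\circ\qso'$, I would then run the short case analysis on the location of $x,y,z,t$: if both edges stay within one factor, the inductive hypothesis for that factor applies; and in each of the mixed configurations one of the five disjuncts of \textsc{qso{:}2} holds outright, because every pair in $\Delta_\qso\times\Delta_{\qso'}$ belongs to $\prec$ (for instance, if $x,y,z\in\Delta_\qso$ and $t\in\Delta_{\qso'}$ then $x\prec t$ and $y\prec t$, which is the last disjunct).

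For $(\Longleftarrow)$, assume \textsc{qso{:}1,2} and induct on $|\Delta|$, the cases $|\Delta|\le 1$ being trivial. If some $x\in\Delta$ is incomparable to all other elements, then $\rs=(\rs|_{\Delta\setminus\{x\}})[x]$; since the axioms are universally quantified, $\rs|_{\Delta\setminus\{x\}}$ satisfies them too, so the inductive hypothesis gives $\rs|_{\Delta\setminus\{x\}}\in\QSO$ and hence $\rs\in\QSO$. Otherwise, pick a $\prec$-maximal element $m$ whose predecessor set $S=\downarrow m$ is inclusion-minimal among the predecessor sets of maximal elements (well defined because these sets form a chain). Then $m\notin S$, so $S\subsetneq\Delta$, and $S\neq\es$, for otherwise $m$ would be incomparable to everything, against the case assumption. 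The heart of the argument is to show $S\times(\Delta\setminus S)\subseteq\prec$: take $x\in S$, so $x\prec m$, and $y\in\Delta\setminus S$; if $y=m$ we are done; if $y$ is $\prec$-maximal then $S\subseteq\downarrow y$ by minimality of $S$, whence $x\prec y$; and if $y$ is not maximal, choose $w$ with $y\prec w$ and apply \textsc{qso{:}2} to $x\prec m$ and $y\prec w$ --- three of its disjuncts fail because $y\not\prec m$, one fails because $m\not\prec w$ (as $m$ is maximal), and the only survivor is $x\prec y\land x\prec w$. Because $\rs$ is a strict partial order, $S\times(\Delta\setminus S)\subseteq\prec$ forbids any $\prec$-edge from $\Delta\setminus S$ back into $S$, so $\prec=\prec|_S\cup\prec|_{\Delta\setminus S}\cup(S\times(\Delta\setminus S))$; that is, $\rs=\rs|_S\circ\rs|_{\Delta\setminus S}$ in the sense of Definition~\ref{def-qso}. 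Both factors are proper substructures satisfying \textsc{qso{:}1,2}, so the inductive hypothesis places them in $\QSO$, and therefore $\rs\in\QSO$.

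The main obstacle, and the only point where \textsc{qso{:}2} is used at full strength, is the inclusion $S\times(\Delta\setminus S)\subseteq\prec$ in the $(\Longleftarrow)$ step. It relies crucially on taking $m$ to be a maximal element with inclusion-minimal predecessor set: this is exactly the choice that makes all but the intended disjunct of the relevant instance of \textsc{qso{:}2} collapse. I would additionally check that the argument is not circular --- transitivity, the interval-order property, and the chain condition on predecessor sets are all derived from \textsc{qso{:}1,2} alone, and the induction on $|\Delta|$ appeals to the inductive hypothesis only on strictly smaller structures (a single removed point, or the two parts of a nontrivial split).
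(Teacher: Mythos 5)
Your proposal is correct and follows essentially the same strategy as the paper: structural induction with the identical case analysis for the ($\Longrightarrow$) direction, and for ($\Longleftarrow$) an induction on $|\Delta|$ that either peels off an isolated element or splits $\Delta$ into two blocks with full cross-precedence forced by \textsc{qso{:}2}. The only difference is a dual bookkeeping choice --- you pick a maximal element with inclusion-minimal predecessor set (using the chain property of down-sets in interval orders) and verify $S\times(\Delta\setminus S)\subseteq\prec$ directly, whereas the paper picks a minimal element with smallest successor set and argues by contradiction --- which does not change the substance of the argument.
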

\begin{proof}
($\Longrightarrow$) 
We proceed by induction on the derivation of $\rs\in\QSO$ (see Definition~\ref{def-qso}).

First, we observe that $\rs=\FA{\es,\es}$ trivially satisfies \textsc{qso{:}1,2}.
In the inductive case, we consider the following two cases.

\emph{Case 1:}
$\rs=\qso[x]$, where $\qso\in\QSO$ and $x\notin\Delta$. Then, by the induction 
hypothesis, $\qso$ satisfies \textsc{qso{:}1,2}. Hence,
$\rs$ also satisfies \textsc{qso{:}1,2} as we neither introduced nor 
removed any relationships.

\emph{Case 2:}
$\rs=\qso'\circ\qso''$, where $\qso'=\FA{\Delta',\prec'}\in\QSO$ and 
$\qso''=\FA{\Delta'',\prec''}\in\QSO$ satisfy 
$\Delta'\cap\Delta''=\es$.
Then we may additionally assume that $\Delta'\neq\es\neq\Delta''$ as $\qso'=\FA{\es,\es}$ 
(or $\qso''=\FA{\es,\es}$) implies $\rs=\qso''$ (resp.\ $\rs=\qso'$). As a result, by the 
induction hypothesis, both $\qso'$ and $\qso''$ satisfy \textsc{qso{:}1,2}.
Hence, $\rs$ satisfies \textsc{qso{:}1} as we introduced new relationships only between disjoint sets $\Delta'$ and $\Delta''$. 
\\
To show that \textsc{qso{:}2} is satisfied by $\rs$, we take any $x,y,z,t\in\Delta'\cup\Delta''$ such that $x\prec y$ and $z\prec t$. 
As 
$y\in\Delta'$ / $t\in\Delta'$ / $x\in\Delta''$ / $z\in\Delta''$
implies
$x\in\Delta'$ / $z\in\Delta'$ / $y\in\Delta''$ / $t\in\Delta''$, respectively,
we have the following cases to consider (after omitting the symmetric ones):
\begin{itemize}
\item 
 $x,y,z,t\in\Delta'$ or $x,y,z,t\in\Delta''$. Then
 \textsc{qso{:}2} is satisfied as
 it holds for both $\qso'$ and $\qso''$.
 
\item 
 $x\in\Delta'$ and $y,z,t\in\Delta''$. Then we have $x\prec z$ and $x\prec t$.
 
\item 
 $x,y,z\in\Delta'$ and $t\in\Delta''$. Then 
 we have $z\prec y$ and $t\prec y$.
 
\item 
 $x,y\in\Delta'$ and $z,t\in\Delta''$. Then 
 we have $x\prec z$ and $x\prec t$.
 
\item 
 $x,z\in\Delta'$ and $y,t\in\Delta''$. Then 
 $x\prec t$ and $z\prec y$.
\end{itemize}
Hence $\rs$ satisfies \textsc{qso{:}2}. 

($\Longleftarrow$) 
Suppose that $\rs=\FA{\Delta,\prec}$ satisfies \textsc{qso{:}1,2}.
We first observe that $\rs$ is an interval order (and so also a partial order). 
Indeed, since \textsc{qso{:}1}
holds for $\rs$, 
all we need to show is that $x\prec y \wedge z\prec t$ implies $x\prec t \wedge z\prec y$. 
And this, clearly, follows from \textsc{qso{:}2}.

We then prove 
the ($\Longleftarrow$) implication by induction on $|\Delta|$.
To start with, if $|\Delta|\leq 1$ then, by \textsc{qso{:}1}, $\rs\in\QSO$.
In the inductive case $|\Delta|>1$ and we consider the following two cases.

\emph{Case 1:}
There is $x\in\Delta$ such that $x\not\prec y\not\prec x$,
 for every $y\in\Delta$. 
 Then, by the induction hypothesis, 
 $\rs' = \FA{\Delta\setminus\{x\},\prec}$ is an \textsc{qs}-order.
 Moreover, $\rs = \rs'[x]$ and so,
 by Definition~\ref{def-qso}, $\rs$ also is an \textsc{qs}-order.
 
\emph{Case 2:}
 For every $x\in\Delta$ there is $y\in\Delta$ such that $x\prec y$ or $y\prec x$ ($\dagger$).
Then, let 
\[
M=\{x\in\Delta\mid \neg\exists y\in\Delta:~y\prec x\}
~~\mbox{and}~~
B=\{x\in M \mid \forall z\in M :~ |N_x|\leq |N_z|\}\;,
\]
 where $N_z=\{w\in\Delta \mid z\prec w\}$, for every $z\in\Delta$.
 We observe that $M\neq\es$ as $\rs$ is a finite partial
 order (note that \textsc{po{:}2} follows from \textsc{qso{:}1,2}),
 and so also $B\neq\es$.
 Moreover, for every $z\in M$, we have $N_z\neq \es$, by 
 $z\in M$ and ($\dagger$). 
 (Intuitively, from the set $M$ of topologically minimal elements 
 of $\rs$ we distinguished the minimal ones w.r.t.\ the out-degree; the set of such 
 elements is the root of the \textsc{qs}-sequence corresponding to $\rs'$ defined below.) 

 We first observe that $N_x\subseteq N_w$, for all $w\in M$ and $x\in B$ ($\ddagger$). 
 Indeed, if this is not true, then $N_x\setminus N_w\neq \es$. Moreover, by the definition of $B$,
 $|N_x|\leq |N_w|$. Hence, $N_w\setminus N_x\neq \es$. 
 Therefore, there are $y,t\in\Delta$ such that
 $x\prec y$, $w\prec t$, $x\not\prec t$, and $w\not\prec y$,
 yielding to a contradiction with \textsc{qso{:}2}.

 Let $x$ be any element of $B\neq\es$, $N=N_x$, and $P = \Delta\setminus N_x$. 

 Moreover, $(N\times P)\cap\prec=\es$ ($\dagger\dagger$). Indeed, if $a\in N$ and $b\in P$
 such that $a\prec b$, then $x\prec a\prec b$, and so $x\prec b$.
 This, however, means that $b\in N$, a contradiction.

 Both $P$ and $N$ are nonempty disjoint sets ($P\neq\es$ as $x\notin N_x$).
 Hence, 
 by the induction hypothesis, 
 $\rs'=\rs|_P$ and $\rs''=\rs|_N$ are \textsc{qs}-orders.
 
 Suppose that $\rs \neq \rs' \circ \rs''$.
 Then, by Definition~\ref{def-qso} and ($\dagger\dagger$), there are $a\in P$ and $b\in N$ such that $a\not\prec b$.
 By ($\ddagger$), $a\in P\setminus M$ and, since $\prec$ is transitive, there is $c\in M$ such that $c\prec a$. 
 Also, $x\prec b$. Therefore, by \textsc{qso{:}2}, one of the following five cases must hold.
\begin{itemize}
\item $x\prec a$ and $c\prec b$. Then $a\in N$, a contradiction.
\item $x\prec c$ and $x\prec a$. Then $c\notin M$, a contradiction.
\item $c\prec x$ and $c\prec b$. Then $x\notin M$, a contradiction.
\item $b\prec a$ and $x\prec a$. Then $a\in N$, a contradiction.
\item $a\prec b$ and $c\prec b$. Then we obtain a contradiction as we assumed $a\not\prec b$.
\end{itemize}
As a result, $\rs = \rs' \circ \rs''$ and so, by Definition~\ref{def-qso}, $\rs$ is a \textsc{qs}-order. 
\end{proof}
 
We can now show that

\begin{proposition}
\label{prop-vrev}
All stratified orders are 
\textsc{qs}-orders, and all \textsc{qs}-orders are interval orders.
\end{proposition}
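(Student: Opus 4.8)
The plan is to prove the two inclusions separately, using the axiomatic characterisation of \textsc{qs}-orders from Theorem~\ref{prop-axioms_qso}, i.e.\ checking that stratified orders satisfy \textsc{qso{:}1,2}, and that \textsc{qso{:}1,2} imply \textsc{io{:}1,2}. The second inclusion is in fact already observed inside the proof of Theorem~\ref{prop-axioms_qso} (the ``($\Longleftarrow$)'' direction opens by noting that any $\rs$ satisfying \textsc{qso{:}1,2} is an interval order), so here I would simply recall that argument: \textsc{qso{:}1} is literally \textsc{io{:}1}, and \textsc{io{:}2} follows because the very first disjunct of the conclusion of \textsc{qso{:}2}, namely $x\prec t \land z\prec y$, already entails $x\prec t \lor z\prec y$.

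For the first inclusion, let $\sos = \FA{\Delta,\prec}$ be a stratified order. By Proposition~\ref{prec-uuggygyg} there is a partition $\Delta_1,\dots,\Delta_n$ of $\Delta$ with $\prec = \bigcup_{1\le i<j\le n}\Delta_i\times\Delta_j$. I would then show $\sos\in\QSO$ directly from Definition~\ref{def-qso} rather than via the axioms: build $\sos$ as $\qso_1\circ\dots\circ\qso_n$, where each $\qso_k$ is the structure on $\Delta_k$ with empty precedence relation. Each $\qso_k$ lies in $\QSO$ because a set with no ordering is obtained from $\FA{\es,\es}$ by repeatedly applying the ``$\qso[x]$'' rule. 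The composition $\qso_1\circ\dots\circ\qso_n$ is in $\QSO$ by repeated application of the ``$\circ$'' rule (the domains $\Delta_k$ are pairwise disjoint), and by the definition of $\circ$ its precedence relation is exactly $\bigcup_{k}\es \cup \bigcup_{k<l}\Delta_k\times\Delta_l = \prec$. Hence $\sos\in\QSO$. (One should dispatch the degenerate case $\Delta=\es$ first, which is in $\QSO$ by definition.)

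I do not expect a serious obstacle here: the first inclusion is a routine unfolding of the two definitions once Proposition~\ref{prec-uuggygyg} supplies the partition, and the second inclusion is essentially a one-line implication, already present in the preceding proof. The only point requiring a little care is bookkeeping in the composition argument --- making sure the associativity of $\circ$ (noted in the proof of Proposition~\ref{prop-uqiqef}) is invoked so that ``$\qso_1\circ\dots\circ\qso_n$'' is unambiguous, and that the iterated $\circ$ indeed produces all and only the cross-pairs $\Delta_k\times\Delta_l$ for $k<l$. As an alternative to the direct construction, one could instead verify \textsc{qso{:}2} for a stratified order by a short case analysis on which blocks $x,y,z,t$ fall into (using \textsc{so{:}3,4}); either route is elementary, and I would pick the constructive one as it is the cleaner.
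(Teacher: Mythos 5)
Your proposal is correct and follows essentially the same route as the paper: the first inclusion is obtained exactly as in the paper's proof, by taking the partition supplied by Proposition~\ref{prec-uuggygyg}, building each stratum $\FA{\Delta_k,\es}$ from $\FA{\es,\es}$ via the $\qso[x]$ rule, and composing with $\circ$; and the second inclusion is referred to the same observation made at the start of the ($\Longleftarrow$) direction of Theorem~\ref{prop-axioms_qso}. One small wording point: \textsc{io{:}2} follows because \emph{each} of the five disjuncts in the conclusion of \textsc{qso{:}2} contains $x\prec t$ or $z\prec y$ (the hypothesis only guarantees that some disjunct holds, not specifically the first), but this check is immediate.
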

\begin{proof}
To show the first inclusion,
let $\rs=\FA{\Delta,\prec}\neq \FA{\es,\es}$ be a stratified order and
$\Delta_1,\dots,\Delta_n$ ($n\geq 1$) be a partition of $\Delta$ such that 
$\prec$ is equal to $\bigcup_{1\leq i<j\leq n}\Delta_i\times\Delta_j$.
Let $\rs_i=\FA{\Delta_i,\es}$, for every $1\leq i \leq n$.
By Definition~\ref{def-qso}, $\rs_i\in \QSO$, for every $1\leq i \leq n$.
Hence, if $n=1$, we obtain $\rs=\rs_1\in\QSO$.
Otherwise, again by Definition~\ref{def-qso},
$\rs=(\dots(\rs_1\circ \rs_2)\circ \dots )\circ\rs_n\in \QSO$.

The second inclusion
follows directly from Theorem~\ref{prop-axioms_qso}($\Rightarrow$) and 
the observation made at the beginning of the proof of
Theorem~\ref{prop-axioms_qso}($\Leftarrow$). 
\end{proof}

\begin{example}
\label{ex-qsorder}
\newsavebox\quasi
 \sbox{\quasi}{%
\begin{tikzpicture}[remember picture, node distance=1.5cm,>=arrow30]
 \node (n1) {$\bullet$};
 \node (n2) [right of=n1] {$\bullet$};
 \node (n3) [right of=n2] {$\bullet$};
 \node (n4) [right of=n3] {$\bullet$};
 \node (f1) [below of=n3, yshift=.8cm] {};
 \node (f2) [above of=n3, yshift=-.8cm] {};

 \draw (n3) [->, thick] edge (n4);
 \draw (n1) [->, thick, out=25, in=155] edge (n3);
 \draw (n2) [->, thick, out=25, in=155] edge (n4);
 \draw (n1) [->, thick, out=-25, in=-155] edge (n4);

\begin{pgfonlayer}{background}
 \node (p0) [draw,fill=gray!10,fit=(n1)(n4)(f1)(f2),inner sep=1.1mm,color=gray!10] {};
\end{pgfonlayer}
\end{tikzpicture}%
}

\newsavebox\notquasi
 \sbox{\notquasi}{%
\begin{tikzpicture}[remember picture, node distance=1.5cm,>=arrow30]
 \node (n1) {$\bullet$};
 \node (n2) [right of=n1] {$\bullet$};
 \node (n3) [right of=n2] {$\bullet$};
 \node (n4) [right of=n3] {$\bullet$};
 \node (f1) [below of=n3, yshift=.8cm] {};
 \node (f2) [above of=n3, yshift=-.8cm] {};

 \draw (n1) [->, thick, out=25, in=155] edge (n3);
 \draw (n2) [->, thick, out=25, in=155] edge (n4);
 \draw (n1) [->, thick, out=-25, in=-155] edge (n4);

\begin{pgfonlayer}{background}
 \node (p0) [draw,fill=gray!10,fit=(n1)(n4)(f1)(f2),inner sep=1.1mm,color=gray!10] {};
\end{pgfonlayer}
\end{tikzpicture}%
}
In Figure~\ref{fig-quasi} one can see the completion of the hierarchy from the Figure~\ref{fig-orders} (the order in part $(b)$ is repeated as an example of interval order which is not \textsc{qs}-order). The order in part $(a)$ is \textsc{qs}-order since one can construct it by composing sequentially first and third element, then adding in parallel the second element and, finally, sequentially composing the result with the forth element (see Definition~\ref{def-qso}). 

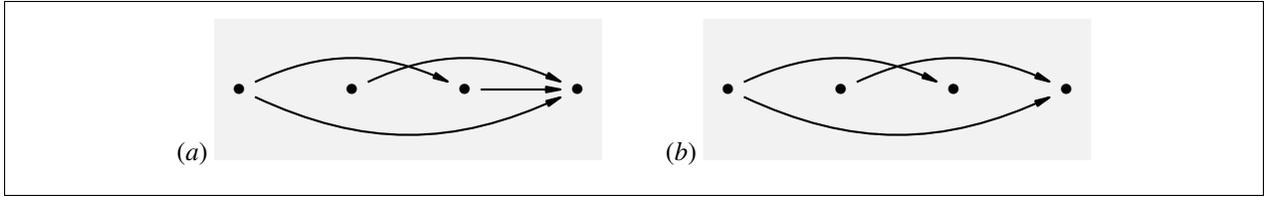
\begin{figure}[ht]
\begin{center} 
\begin{tikzpicture}[node distance=6.5cm]
 \node (A) {($a$) \usebox{\quasi}};
  \node (A) [right of=A] {($b$) \usebox{\notquasi}};
 \end{tikzpicture}

 \end{center}
 \caption
{
 A \textsc{qs}-order which is not stratified, and stratified order which is not \textsc{qs}-order.
}
\label{fig-quasi}
 \end{figure}
\end{example}

In this section, we introduced quasi-stratified orders which, we believe,
provide a faithful representation of behaviours for a range of concurrent computing 
systems. In fact, we provided two different ways of defining \textsc{qs}-orders, and
the decision which one to use will typically depend on 
the context.

\section{Quasi-stratified sequences}

To explain the rationale behind the adjective in `quasi-stratified order',
we need an auxiliary notion of
a \emph{quasi-stratified sequences} (or \textsc{qs}-sequences).

\begin{definition}[\textsc{qs}-sequence and \textsc{qss}-stratum]
\label{def-qss} 
\emph{Quasi-stratified sequences} (or \textsc{qs}-sequences) 
and \emph{quasi-stratified sequence strata} (or \textsc{qss}-strata)
are the smallest sets, $\QSS$ and $\QSSS$, respectively, 
such that the following hold
(below we also define the domain $\delta_\qss$ of each $\qss\in\QSS$):
\begin{itemize} 

\item
 $\Delta\in\QSS\cap\QSSS$, for every finite nonempty set $\Delta$.
 Moreover, $\delta_\Delta=\Delta$. 

\item
 $\qsss_1\dots\qsss_n\in\QSS$,
 for every $n\geq 2$ and all \textsc{qss}-strata $\qss_1,\dots,\qss_n$    
 with mutually disjoint domains.
 Moreover,
 $\delta_{\qsss_1\dots\qsss_n}=\delta_{\qsss_1}\cup\dots\cup\delta_{\qss_n}$. 
 
 \item 
 $\FA{\Delta,\qsss_1\dots\qsss_n}\in\QSS\cap\QSSS$,
 for every  $n\geq 2$, every finite nonempty set $\Delta$,   
 and all \textsc{qss}-strata $\qss_1,\dots,\qss_n$    
 with mutually disjoint domains satisfying 
 $\Delta\cap\delta_{\qsss_1\dots\qsss_n}=\es$. 
 Moreover,
 $\delta_{\FA{\Delta,\qsss_1\dots\qsss_n}}=
 \Delta\cup\delta_{\qsss_1\dots\qsss_n}$.
\eod 
\end{itemize}   
\end{definition} 
Clearly, \textsc{qss}-strata are also \textsc{qs}-sequences.    
Using a `tree-based methapor', each \textsc{qs}-sequence can be thought of as an ordered forest of trees whose
nodes are disjoint nonempty finite sets; moreover, 
each node either has no direct descendants or the direct descendants form an ordered sequence of 
at least two such trees. 

One can verify that \textsc{qs}-sequences are in a strict correspondence with 
nonempty \textsc{qs}-orders, through the mapping $\mathfrak{G}$ defined thus
(see Definitions~\ref{def-qso} and~\ref{def-qss}): 
\begin{enumerate} 
 \item 
 $\mathfrak{G}(\Delta)=\FA{\Delta,\es}$.

 \item
 $\mathfrak{G}(\qsss_1\dots\qsss_n)=\mathfrak{G}(\qsss_1)\circ \dots\circ\mathfrak{G}(\qsss_n)$.
 
 \item
 $\mathfrak{G}(\FA{\{x_1,\dots,x_k\},\qsss_1\dots\qsss_n})
 =\mathfrak{G}(\qsss_1\dots\qsss_n)[x_1]\dots[x_k]$. 
\end{enumerate} 
Note that in (2) and (3) above the order of application of the operations 
on the rhs does not matter.

We first show the soundness of the mapping $\mathfrak{G}$.

\begin{proposition}
\label{prop-ciue}
$\mathfrak{G}(\qss)\in \QSO$ and $\delta_\qss=\Delta_{\mathfrak{G}(\qss)}$, 
for every \textsc{qs}-sequence $\qss$.   
Moreover,
$\qss$ is a \textsc{qss}-stratum iff
$\mathfrak{G}(\qss)$ is a \textsc{qso}-stratum.
\end{proposition}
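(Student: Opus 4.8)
The plan is to prove Proposition~\ref{prop-ciue} by structural induction on the construction of the \textsc{qs}-sequence $\qss$ prescribed by Definition~\ref{def-qss}, carrying all three assertions --- $\mathfrak{G}(\qss)\in\QSO$, $\delta_\qss=\Delta_{\mathfrak{G}(\qss)}$, and ``$\qss$ is a \textsc{qss}-stratum iff $\mathfrak{G}(\qss)$ is a \textsc{qso}-stratum'' --- simultaneously through the induction, since the proof of each clause for $\qss$ relies on the corresponding clauses for the sub-constituents of $\qss$. As syntactic objects, a \textsc{qs}-sequence has exactly one of three forms: a finite nonempty set $\Delta$; a concatenation $\qss_1\dots\qss_n$ of $n\geq 2$ \textsc{qss}-strata with mutually disjoint (and, inductively, nonempty) domains; or a pair $\FA{\Delta,\qss_1\dots\qss_n}$ of that shape. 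The first and third forms are precisely the \textsc{qss}-strata, and a length-$\geq 2$ concatenation is neither a set nor a pair, hence never a \textsc{qss}-stratum; so the stratum equivalence reduces to showing that $\mathfrak{G}$ sends forms one and three to \textsc{qso}-strata and form two to a non-\textsc{qso}-stratum.

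For $\qss=\Delta$, we have $\mathfrak{G}(\Delta)=\FA{\Delta,\es}$, obtained from $\FA{\es,\es}\in\QSO$ by the element-addition clause of Definition~\ref{def-qso} applied once per element of $\Delta$; then $\delta_\Delta=\Delta=\Delta_{\mathfrak{G}(\Delta)}$, and since $\prec$ is empty and $\Delta\neq\es$, any element of $\Delta$ witnesses by Proposition~\ref{prop-uqiqef}(3) that $\mathfrak{G}(\Delta)$ is a \textsc{qso}-stratum. For $\qss=\qss_1\dots\qss_n$, the induction hypothesis gives $\mathfrak{G}(\qss_i)\in\QSO$ with $\Delta_{\mathfrak{G}(\qss_i)}=\delta_{\qss_i}$, so these domains are mutually disjoint and the successive `$\circ$'-compositions forming $\mathfrak{G}(\qss)=\mathfrak{G}(\qss_1)\circ\dots\circ\mathfrak{G}(\qss_n)$ are all admissible and land in $\QSO$, with domain $\bigcup_i\delta_{\qss_i}=\delta_\qss$. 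For $\qss=\FA{\{x_1,\dots,x_k\},\qss_1\dots\qss_n}$, the same reasoning yields $\mathfrak{G}(\qss_1\dots\qss_n)\in\QSO$ with domain $\delta_{\qss_1\dots\qss_n}$; since $\{x_1,\dots,x_k\}$ is disjoint from that domain, the $k$ element-additions forming $\mathfrak{G}(\qss)=\mathfrak{G}(\qss_1\dots\qss_n)[x_1]\dots[x_k]$ keep us in $\QSO$ with domain $\{x_1,\dots,x_k\}\cup\delta_{\qss_1\dots\qss_n}=\delta_\qss$, and because element-addition introduces no ordered pairs, each $x_j$ is $\prec$-isolated in $\mathfrak{G}(\qss)$, so Proposition~\ref{prop-uqiqef}(3) (noting also $\FA{\{x_1,\dots,x_k\},\qss_1\dots\qss_n}\neq\FA{\es,\es}$) shows $\mathfrak{G}(\qss)$ is a \textsc{qso}-stratum.

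The only step needing a genuine argument is that the image of a concatenation is \emph{not} a \textsc{qso}-stratum. I would argue directly from Proposition~\ref{prop-uqiqef}(3): in $\mathfrak{G}(\qss_1\dots\qss_n)=\mathfrak{G}(\qss_1)\circ\dots\circ\mathfrak{G}(\qss_n)$ the relation $\prec$ contains $\Delta_{\mathfrak{G}(\qss_i)}\times\Delta_{\mathfrak{G}(\qss_j)}$ whenever $i<j$; since $n\geq 2$ and every block is nonempty, any element $x$ lies in some block $i$, and picking any $z$ in a block $j\neq i$ gives $x\prec z$ (if $i<j$) or $z\prec x$ (if $i>j$), so no element can serve as the witness demanded of a \textsc{qso}-stratum. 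Equivalently, one may cite the uniqueness of the `$\circ$'-decomposition in Proposition~\ref{prop-uqiqef}(4) together with the fact that a \textsc{qso}-stratum admits no decomposition into two nonempty \textsc{qs}-orders. Everything else is routine bookkeeping --- matching each closure operation of Definition~\ref{def-qso} with the corresponding clause of Definition~\ref{def-qss} and tracking domains --- so I expect this final ``not a stratum'' observation to be the main, though modest, obstacle.
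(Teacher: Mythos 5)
Your proof is correct and follows essentially the same route as the paper's: structural induction on the syntax of the \textsc{qs}-sequence, matching the clauses of Definition~\ref{def-qss} with the operations of Definition~\ref{def-qso}, tracking domains, and settling the stratum equivalence via the characterisation in Proposition~\ref{prop-uqiqef}(3). The only difference is cosmetic --- you spell out explicitly why the image of a concatenation of $n\geq 2$ strata has no $\prec$-isolated element, a step the paper leaves implicit in its converse direction.
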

\begin{proof}
We proceed by induction on the syntax of \textsc{qs}-sequences.

In the base case, $\qss=\Delta=\{x_1,\dots,x_n\}$ ($n\geq 1$). Then
$\mathfrak{G}(\qss)=(\Delta,\es)=\FA{\es,\es}[x_1]\dots[x_n]\in\QSO$.
Moreover, $\delta_\qss=\Delta=\Delta_{\mathfrak{G}(\qss)}$.

In the inductive case we have the following:

\emph{Case 1:}  $\qss=\qsss_1\dots\qsss_n$ ($n\geq 2$). 
Then, by the induction hypothesis, $\mathfrak{G}(\qsss_i)\in \QSO$
and $\delta_{\qsss_i}=\Delta_{\mathfrak{G}(\qsss_i)}$,
for every $1\leq i\leq n$. Hence, we also have
$\delta_{\mathfrak{G}(\qsss_i)}\cap\delta_{\mathfrak{G}(\qsss_j)}$,
for all $1\leq i< j\leq n$.
Therefore,  by
Definitions~\ref{def-qso} and~\ref{def-qss}, 
$\mathfrak{G}(\qss)=\mathfrak{G}(\qsss_1)\circ\dots\circ\mathfrak{G}(\qsss_n)\in\QSO$
and $\delta_\qss=\Delta_{\mathfrak{G}(\qss)}$.

\emph{Case 2:}  $\qss=\FA{\Delta,\qss'}$
and $\Delta=\{x_1,\dots,x_k\}$, where  $k\geq 1$.
Then, by the induction hypothesis, $\mathfrak{G}(\qss')\in \QSO$
and  $\delta_{\qss'}=\Delta_{\mathfrak{G}(\qss')}$.
Hence, $\Delta\cap\Delta_{\mathfrak{G}(\qss')}=\es$.
Therefore, by Definition~\ref{def-qso},
$\mathfrak{G}(\qss)=\mathfrak{G}(\qss')[x_1]\dots[x_k]\in\QSO$
and $\delta_\qss=\Delta\cup\delta_{\qss'}=
\Delta\cup \Delta_{\mathfrak{G}(\qss')}=\Delta_{\mathfrak{G}(\qss)}$.

To show the second part, suppose that $\qss$ is a \textsc{qss}-stratum.
Then there is $x\in\Delta_{\mathfrak{G}(\qss)}$ such that 
$x\not\prec_{\mathfrak{G}(\qss)}z\not\prec_{\mathfrak{G}(\qss)}x$, for all
$z\in \Delta_{\mathfrak{G}(\qss)}\setminus\{x\}$.
Hence $\mathfrak{G}(\qss)$ is a \textsc{qso}-stratum.
Conversely, if $\mathfrak{G}(\qss)$ is a \textsc{qso}-stratum, then
there is $x\in\Delta_{\mathfrak{G}(\qss)}$ such that 
$x\not\prec_{\mathfrak{G}(\qss)}z\not\prec_{\mathfrak{G}(\qss)}x$, for all
$z\in \Delta_{\mathfrak{G}(\qss)}\setminus\{x\}$.
Hence, $\qss$ has to be of the form $\FA{\Delta,\qss'}$, and so
$\qss$ is a \textsc{qss}-stratum.
\end{proof} 

\begin{proposition}
\label{prop-ciuess}
$\mathfrak{G}:\QSS\to\QSO\setminus\{\FA{\es,\es}\}$
is an injective mapping.   
\end{proposition}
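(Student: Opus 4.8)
The plan is to prove injectivity by induction on $|\delta_\qss|$, drawing on two facts already established: Proposition~\ref{prop-ciue} (so $\delta_\qss=\Delta_{\mathfrak{G}(\qss)}$, and $\qss$ is a \textsc{qss}-stratum iff $\mathfrak{G}(\qss)$ is a \textsc{qso}-stratum) and Proposition~\ref{prop-uqiqef}(4) (unique factorisation of a nonempty \textsc{qs}-order into \textsc{qso}-strata). Since every \textsc{qs}-sequence has a nonempty domain, $\mathfrak{G}(\qss)\neq\FA{\es,\es}$, so only injectivity needs proof. I would fix $\qss,\qss'$ with $\mathfrak{G}(\qss)=\mathfrak{G}(\qss')=:\qso$; then $\delta_\qss=\delta_{\qss'}$, and by Proposition~\ref{prop-ciue} either both $\qss,\qss'$ are \textsc{qss}-strata or neither is.

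First I would handle the case in which neither $\qss$ nor $\qss'$ is a \textsc{qss}-stratum. By Definition~\ref{def-qss} this forces $\qss=\qss_1\dots\qss_n$ and $\qss'=\qss'_1\dots\qss'_m$ with $n,m\geq2$ and all $\qss_i,\qss'_j$ \textsc{qss}-strata, so $\qso=\mathfrak{G}(\qss_1)\circ\dots\circ\mathfrak{G}(\qss_n)=\mathfrak{G}(\qss'_1)\circ\dots\circ\mathfrak{G}(\qss'_m)$ is displayed as two factorisations of $\qso$ into \textsc{qso}-strata (using Proposition~\ref{prop-ciue} for the factors). Uniqueness (Proposition~\ref{prop-uqiqef}(4)) then gives $n=m$ and $\mathfrak{G}(\qss_i)=\mathfrak{G}(\qss'_i)$ for every $i$; each $\qss_i$ has a strictly smaller domain (there are at least two nonempty strata), so the induction hypothesis yields $\qss_i=\qss'_i$, hence $\qss=\qss'$.

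It remains to treat the case in which both $\qss,\qss'$ are \textsc{qss}-strata, so each is either a single nonempty set or of the form $\FA{\Delta,\qss_1\dots\qss_n}$ with $n\geq2$. If $\qss$ is a single set, then $\prec_\qso=\es$; but whenever a \textsc{qss}-stratum has a body $\FA{\Delta',\qss'_1\dots\qss'_m}$ with $m\geq2$, the order $\mathfrak{G}(\qss'_1)\circ\dots\circ\mathfrak{G}(\qss'_m)$ is a $\circ$-composition of at least two nonempty \textsc{qs}-orders and therefore has a nonempty precedence relation, which the subsequent single-element extensions cannot remove --- so $\prec_\qso\neq\es$, a contradiction. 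Hence $\qss'$ is a single set too, and $\delta_\qss=\delta_{\qss'}$ forces $\qss=\qss'$. Otherwise $\qss=\FA{\Delta_1,\qss_1\dots\qss_n}$ and $\qss'=\FA{\Delta_2,\qss'_1\dots\qss'_m}$ with $n,m\geq2$, and the crux --- the step I expect to require the most care --- is to recover $\Delta_1$ from $\qso$ alone, namely as the set of elements of $\qso$ that are incident to no pair of $\prec_\qso$. The point is that the body $\mathfrak{G}(\qss_1)\circ\dots\circ\mathfrak{G}(\qss_n)$, being a composition of at least two nonempty \textsc{qs}-orders, has every element incident to some $\prec$-pair (an element of the $i$-th factor with $i<n$ precedes everything in the later factors, and an element of the last factor is preceded by everything in the earlier ones), whereas by the third clause defining $\mathfrak{G}$ the elements of $\Delta_1$ are introduced as fresh isolated points. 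The same description determines $\Delta_2$, so $\Delta_1=\Delta_2$, and restricting $\qso$ to $\Delta_\qso\setminus\Delta_1$ recovers $\mathfrak{G}(\qss_1\dots\qss_n)=\mathfrak{G}(\qss'_1\dots\qss'_m)$; since $\Delta_1\neq\es$ this common value has a strictly smaller domain, so the induction hypothesis gives $\qss_1\dots\qss_n=\qss'_1\dots\qss'_m$, whence $\qss=\qss'$. This closes the induction and proves injectivity.
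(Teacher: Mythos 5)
Your proof is correct and follows essentially the same route as the paper: induction on the size of the domain, splitting on whether the common image is a \textsc{qso}-stratum, recovering the top-level set as the $\prec$-isolated elements, and using the unique factorisation into \textsc{qso}-strata (Proposition~\ref{prop-uqiqef}(4)) in the non-stratum case. In fact you write out the non-stratum case and the single-set versus nested-stratum distinction explicitly, which the paper's published proof leaves implicit (it records only the stratum case).
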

\begin{proof}
By Proposition~\ref{prop-ciue}, $\mathfrak{G}$ is well-defined
(note that $\Delta_{\FA{\es,\es}}=\es$).

Suppose that $\qss,\qss'\in\QSS$ are such that
$\mathfrak{G}(\qss)=\mathfrak{G}(\qss')=\qso=\FA{\Delta,\prec}\in\QSO$. To show
$\qss=\qss'$ we proceed by induction of $|\Delta|$.

In the base case, $|\Delta|=1$, we have $\qss=\qss'=\Delta$.
In the inductive case, suppose that $|\Delta|>1$ and consider two cases.

\emph{Case 1:} There is 
$x\in\Delta$ such that 
$x\not\prec z\not\prec x$, for all
$z\in \Delta \setminus\{x\}$.
Then, $\qss=\FA{\Delta_o,\qss_o}$ and $\qss'=\FA{\Delta'_o,\qss'_o}$.
Moreover, 
$\Delta_o=\{y\in\Delta\mid \forall z\in\Delta\setminus\{y\}: y\not\prec z \not\prec\}
=\Delta'_o$.
We therefore have $\Delta_o =\Delta'_o$ and $\mathfrak{G}(\qss_o)=\mathfrak{G}(\qss'_o)$.
Thus, by the induction hypothesis, $\qss_o=\qss'_o$, and so $\qss=\qss'$. 
\end{proof}
 
To show a one-to-one relationship between nonempty \textsc{qso}-orders 
and \textsc{qs}-sequences, we introduce a mapping 
\[\mathfrak{I}:\QSO\setminus\{\FA{\es,\es}\}\to\QSS\] such that, 
for every $\qso\in\QSO\setminus\{\FA{\es,\es}\}$, $\mathfrak{I}(\qso)$
is defined as follows:
\begin{itemize}
\item 
If $\qso$ is a \textsc{qs}-stratum then 
$\qso=\qso'[x_1]\dots[x_n]$, where $n\geq 1$.
Then $\mathfrak{I}(\qso)=\{x_1,\dots,x_n\}$ if $\qso'=\FA{\es,\es}$,
and
$\mathfrak{I}(\qso)=\FA{\{x_1,\dots,x_n\},\mathfrak{I}(\qso')}$ if $\qso'\neq\FA{\es,\es}$.

\item
If $\qso$ is not a \textsc{qs}-stratum, then there are \textsc{qso}-strata
$\qso_1,\dots,\qso_n$ ($n\geq 2$) such that 
$\qso=\qso_1\circ\dots\circ\qso_n$.
Then 
$\mathfrak{I}(\qso)=\mathfrak{I}(\qso_1)\dots \mathfrak{I}(\qso_n)$. 
\end{itemize} 

\begin{theorem}
\label{th-eic}
$\mathfrak{G}:\QSS\to\QSO\setminus\{\FA{\es,\es}\}$ and 
$\mathfrak{I}:\QSO\setminus\{\FA{\es,\es}\}\to\QSS$
are inverse bijections.
\end{theorem}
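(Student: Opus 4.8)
The plan is to establish three things: that $\mathfrak{I}$ is well-defined (the recursion terminates and always lands in $\QSS$), that $\mathfrak{G}\circ\mathfrak{I}=\id$ on $\QSO\setminus\{\FA{\es,\es}\}$, and then to combine the latter with the injectivity of $\mathfrak{G}$ from Proposition~\ref{prop-ciuess} to conclude that the two maps are mutually inverse bijections. (Alternatively, one proves $\mathfrak{I}\circ\mathfrak{G}=\id_\QSS$ directly by induction on the syntax of \textsc{qs}-sequences, using Proposition~\ref{prop-ciue} to know $\mathfrak{G}$ maps \textsc{qss}-strata to \textsc{qso}-strata and Proposition~\ref{prop-uqiqef}(4) for uniqueness of the factorisation; but with $\mathfrak{G}$ injective and $\mathfrak{G}\circ\mathfrak{I}=\id$ one immediately gets $\mathfrak{G}$ bijective and $\mathfrak{I}=\mathfrak{G}^{-1}$, so $\mathfrak{I}\circ\mathfrak{G}=\id_\QSS$ comes for free.)

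The crux is the well-definedness of $\mathfrak{I}$. The key observation is that, for $\qso=\FA{\Delta,\prec}\in\QSO$, an element $x$ satisfies $x\not\prec_\qso z\not\prec_\qso x$ for all $z\in\Delta\setminus\{x\}$ exactly when $x$ is $\prec$-isolated (no incoming or outgoing edge). Hence, if $\qso$ is a \textsc{qso}-stratum, the set $\Delta_o$ of such elements is nonempty (Proposition~\ref{prop-uqiqef}(3)), we have $\prec\subseteq(\Delta\setminus\Delta_o)\times(\Delta\setminus\Delta_o)$, and $\qso=\qso'[x_1]\dots[x_n]$ with $\{x_1,\dots,x_n\}=\Delta_o$ and $\qso'=\qso|_{\Delta\setminus\Delta_o}\in\QSO$ (Proposition~\ref{prop-uqiqef}(2)); moreover $\qso'$ is either $\FA{\es,\es}$ or \emph{not} a \textsc{qso}-stratum, since an isolated element of $\qso'$ would stay isolated in $\qso$ and thus lie in $\Delta_o$. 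This is exactly the choice forced by Definition~\ref{def-qss}(third bullet), which only accepts $\FA{\{x_1,\dots,x_n\},\mathfrak{I}(\qso')}$ when $\mathfrak{I}(\qso')$ is a concatenation of at least two \textsc{qss}-strata --- equivalently, when $\qso'$ is a non-stratum (and so, by Proposition~\ref{prop-uqiqef}(4), factors into $\geq 2$ \textsc{qso}-strata). Consequently the two clauses defining $\mathfrak{I}$ are mutually exclusive, and both recurse on strictly smaller domains (in the stratum clause because $\Delta_o\neq\es$; in the other clause because $n\geq 2$ and every \textsc{qso}-stratum is nonempty), the factorisation into \textsc{qso}-strata being supplied uniquely by Proposition~\ref{prop-uqiqef}(4). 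An induction on $|\Delta|$ then yields $\mathfrak{I}(\qso)\in\QSS$, $\delta_{\mathfrak{I}(\qso)}=\Delta$, and that $\mathfrak{I}(\qso)$ is a \textsc{qss}-stratum iff $\qso$ is a \textsc{qso}-stratum.

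For $\mathfrak{G}\circ\mathfrak{I}=\id$ I would induct on $|\Delta_\qso|$ along the same case split. If $\qso$ is a \textsc{qso}-stratum, write $\qso=\qso'[x_1]\dots[x_n]$ as above: when $\qso'=\FA{\es,\es}$ we get $\qso=\FA{\{x_1,\dots,x_n\},\es}=\mathfrak{G}(\{x_1,\dots,x_n\})=\mathfrak{G}(\mathfrak{I}(\qso))$, and when $\qso'\neq\FA{\es,\es}$ the inductive hypothesis gives $\mathfrak{G}(\mathfrak{I}(\qso'))=\qso'$, so by clause~(3) of $\mathfrak{G}$, $\mathfrak{G}(\mathfrak{I}(\qso))=\mathfrak{G}(\mathfrak{I}(\qso'))[x_1]\dots[x_n]=\qso'[x_1]\dots[x_n]=\qso$. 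If $\qso$ is not a \textsc{qso}-stratum, let $\qso=\qso_1\circ\dots\circ\qso_n$ ($n\geq 2$) be its unique factorisation into \textsc{qso}-strata; then, since $|\Delta_{\qso_i}|<|\Delta_\qso|$, clause~(2) of $\mathfrak{G}$ and the inductive hypothesis give $\mathfrak{G}(\mathfrak{I}(\qso))=\mathfrak{G}(\mathfrak{I}(\qso_1))\circ\dots\circ\mathfrak{G}(\mathfrak{I}(\qso_n))=\qso_1\circ\dots\circ\qso_n=\qso$. Finally, injectivity of $\mathfrak{G}$ (Proposition~\ref{prop-ciuess}) together with $\mathfrak{G}\circ\mathfrak{I}=\id$ makes $\mathfrak{G}$ bijective; applying $\mathfrak{G}^{-1}$ to $\mathfrak{G}(\mathfrak{I}(\qso))=\qso$ gives $\mathfrak{I}=\mathfrak{G}^{-1}$, hence $\mathfrak{I}\circ\mathfrak{G}=\id_\QSS$ as well. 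I expect the main obstacle to be precisely the well-definedness of $\mathfrak{I}$: one must pin down the ``maximal top layer'' $\Delta_o$ of a \textsc{qso}-stratum and verify that stripping it never re-creates a stratum, so that the recursion neither loops nor is ambiguous; everything after that is a routine structural induction matching the clauses of $\mathfrak{G}$ and $\mathfrak{I}$.
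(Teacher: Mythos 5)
Your argument is correct, but it runs in the opposite direction to the paper's. The paper's proof is a one-liner: invoking Propositions~\ref{prop-ciue} and~\ref{prop-ciuess}, it reduces the theorem to checking $\mathfrak{I}(\mathfrak{G}(\qss))=\qss$ for every $\qss\in\QSS$ (a structural induction over the syntax of \textsc{qs}-sequences, left implicit), i.e.\ it works with the composition $\mathfrak{I}\circ\mathfrak{G}$. You instead first establish that $\mathfrak{I}$ is well defined --- identifying the maximal set $\Delta_o$ of $\prec$-isolated elements of a \textsc{qso}-stratum, showing that stripping it leaves either $\FA{\es,\es}$ or a non-stratum, and checking via Proposition~\ref{prop-uqiqef}(3,4) that the recursion is unambiguous, terminates, and lands in the shapes allowed by Definition~\ref{def-qss} --- and then prove $\mathfrak{G}\circ\mathfrak{I}=\id$ by induction on $|\Delta_\qso|$, deriving $\mathfrak{I}\circ\mathfrak{G}=\id_\QSS$ at the end from the injectivity of $\mathfrak{G}$ (Proposition~\ref{prop-ciuess}). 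What each route buys: the paper's reduction is shorter and its remaining check mirrors the induction already carried out for Proposition~\ref{prop-ciue}, but it silently presupposes that $\mathfrak{I}$ is a well-defined map into $\QSS$ and it does not by itself deliver surjectivity of $\mathfrak{G}$; your direction makes exactly these points explicit --- the maximal-peel reading of the stratum clause is what renders $\mathfrak{I}$ single-valued, and $\mathfrak{G}\circ\mathfrak{I}=\id$ gives surjectivity of $\mathfrak{G}$ outright --- so your write-up is the more self-contained of the two, at the cost of a somewhat longer case analysis.
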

\begin{proof}
By Propositions~\ref{prop-ciue} and~\ref{prop-ciuess}, it suffices to show
that $\mathfrak{I}(\mathfrak{G}(\qss))=\qss$, for every $\qss\in\QSS$.
\end{proof}

\begin{remark}[\textsc{qs}-sequences]
\label{rem-3}
Using the notion of \textsc{qs}-sequence, the following is an alternative
definition of a nonempty \textsc{qs}-order: 
\begin{quote}
$\FA{\Delta,\prec}$ is a \textsc{qs}-order if there is a \textsc{qs}-sequence
$\qss=\qsss_1 \dots \qsss_n$ ($n\geq 1$) such that each $\qsss_i$ is a \textsc{qs}-stratum,
$\delta_{\qsss_1},\dots, \delta_{\qsss_n}$
form a partition of $\Delta$, 
and $\prec=\prec_{\mathfrak{G}(\qss)}$.
\end{quote} 
Intuitively, this means that the intervals corresponding to events can be 
partitioned into \textsc{qs}-strata (each being represented by 
a tree whose root is the `base' of a stratum). 
Note that if $\qsss_1=\Delta_1,\dots,\qsss_n=\Delta_n$, then the above   
coincides with
the alternative definition of stratified orders (see Section~\ref{subs-1}).
\end{remark}
 
\section{Quasi-stratified structures}
\label{sect-qscs}

We are now ready to introduce relational structures which will be suitable 
as specifications of concurrent behaviours modelled by \textsc{qs}-orders.
Before that we introduce auxiliary notions.

\begin{definition}[\textsc{csc}-subset]
\label{def-oernier}
Let $\rs=\FA{\Delta,\prec,\sq}$ be a structure and $\Phi\subseteq\Delta$ be a nonempty set.
\begin{itemize}
\item 
$\Phi$ is a \emph{combined strongly connected subset} (or \textsc{csc}-subset)
if 
$\FA{\Phi,(\sq{\cup}\prec)|_{\Phi\times \Phi}}$ 
is a strongly connected directed graph. 
\item 
A \emph{pre-dominant} of $\Phi$ 
is any $x\in\Phi$ such that 
there is no $y\in\Phi$ satisfying $x\prec y$ or $y\prec x$.
We denote this by $x\in \predominants_\rs(\Phi)$.
\end{itemize} 
All \textsc{csc}-subsets of $\rs$ are denoted by $\CSCS(\rs)$, and
all \textsc{csc}-subsets of $\rs$ without
any pre-dominants by $\BCSCS(\rs)$. \EOD
\end{definition}
Intuitively, a \textsc{csc}-subset with pre-dominant $x$ is a potential
candidate for a nested stratum where $x$ belongs to the base (cf.\ Remark~\ref{rem-3}). 
 
\begin{definition}[\textsc{qsa}-structure]
\label{def-qsas} 
A structure 
is \emph{quasi-stratified acyclic} (or \textsc{qs}-acyclic) if it has no 
\textsc{csc}-subset without pre-dominant.
Then, a \emph{quasi-stratified acyclic structure} (or \textsc{qsa}-structure)
is a relational structure which is \textsc{qs}-acyclic.
All \textsc{qsa}-structures are denoted by $\QSAS$.
\EOD
\end{definition}

In the context of the discussion on `acyclicity' in Remark~\ref{rem-2},
in the case of \textsc{qsa}-structures, the role of `forbidden' 
cyclic substructures is played by 
\textsc{csc}-subsets without pre-dominants. The reason why such substructures are 
inadmissible can be explained in the following way. 
Suppose that $\Phi\in\BCSCS(\rs)$ and $\qso$ is a \textsc{qs}-order execution 
in which all events $\Phi$ are executed.
By definition, $\qso$ is either of the form $\qso'[x]$ or of the form 
$\qso'\circ\qso''$. The first case is not possible as then $x$ would be
pre-dominant of $\Phi$, contradicting $\Phi\in\BCSCS(\rs)$.
In the second case, 
all the events in $\Delta_{\qso'}$ must precede all the events in $\Delta_{\qso''}$,
but then, by $\Phi\in\BCSCS(\rs)$ and the strong connectivity property, 
there is a pair of events, $x\in\Delta_{\qso'}$ and $y\in\Delta_{\qso''}$,
such that $y\sq_\rs x$ or $y\prec_\rs x$, yielding to a contradiction with $x\prec_\qso y$. 

We end this section providing some basic properties of relational structures related with quasi-stratified acyclicity.

\begin{proposition}
\label{prop-jngserub}
Let $\rs$ and $\rs'$ be structures such that $\rs\pref\rs'$, and $\Phi\subseteq\Delta_\rs$.
Then:
\begin{enumerate}
 \item 
 $\CSCS(\rs)\subseteq\CSCS(\rs')$ and
 $\BCSCS(\rs)\subseteq\BCSCS(\rs')$.
 \item 
 $\CSCS(\rs|_\Phi)\subseteq\CSCS(\rs)$ and
 $\BCSCS(\rs|_\Phi)\subseteq\BCSCS(\rs)$.
\end{enumerate}
\end{proposition}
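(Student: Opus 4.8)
The plan is to verify both statements directly from the definitions of $\CSCS$, $\BCSCS$, and the relation $\pref$, keeping in mind that a \textsc{csc}-subset is a purely graph-theoretic notion applied to the relation $(\sq\cup\prec)$ restricted to $\Phi$, and that a pre-dominant is defined using only $\prec$.

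For part (1), I would first observe that $\rs\pref\rs'$ means $\Delta_\rs=\Delta_{\rs'}$, $\prec_\rs\subseteq\prec_{\rs'}$, and $\sq_\rs\subseteq\sq_{\rs'}$. Take any $\Phi\in\CSCS(\rs)$. Then $\FA{\Phi,(\sq_\rs\cup\prec_\rs)|_{\Phi\times\Phi}}$ is strongly connected; since $(\sq_\rs\cup\prec_\rs)|_{\Phi\times\Phi}\subseteq(\sq_{\rs'}\cup\prec_{\rs'})|_{\Phi\times\Phi}$, adding edges to a strongly connected graph on the same vertex set keeps it strongly connected, so $\Phi\in\CSCS(\rs')$. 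For the $\BCSCS$ inclusion, I would take $\Phi\in\BCSCS(\rs)$, so $\Phi\in\CSCS(\rs)\subseteq\CSCS(\rs')$ by the first half; it remains to check $\Phi$ has no pre-dominant in $\rs'$. Suppose $x\in\predominants_{\rs'}(\Phi)$, \ie there is no $y\in\Phi$ with $x\prec_{\rs'}y$ or $y\prec_{\rs'}x$. Since $\prec_\rs\subseteq\prec_{\rs'}$, a fortiori there is no $y\in\Phi$ with $x\prec_\rs y$ or $y\prec_\rs x$, so $x\in\predominants_\rs(\Phi)$, contradicting $\Phi\in\BCSCS(\rs)$. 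Hence $\Phi\in\BCSCS(\rs')$.

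For part (2), I would unfold $\rs|_\Phi=\FA{\Phi,\prec_\rs|_{\Phi\times\Phi},\sq_\rs|_{\Phi\times\Phi}}$. The subtlety here is that a \textsc{csc}-subset of $\rs|_\Phi$ is a subset $\Psi\subseteq\Phi$, and we must show $\Psi\in\CSCS(\rs)$. This is immediate because $(\sq_{\rs|_\Phi}\cup\prec_{\rs|_\Phi})|_{\Psi\times\Psi}=(\sq_\rs\cup\prec_\rs)|_{\Psi\times\Psi}$ when $\Psi\subseteq\Phi$ — restricting to $\Phi$ and then to $\Psi$ is the same as restricting to $\Psi$ — so the strongly connected graph witnessing $\Psi\in\CSCS(\rs|_\Phi)$ is literally the same graph witnessing $\Psi\in\CSCS(\rs)$. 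For $\BCSCS$, if $\Psi\in\BCSCS(\rs|_\Phi)$ then $\Psi\in\CSCS(\rs)$ by the above, and the set of pre-dominants is again unchanged: $x\in\predominants_{\rs|_\Phi}(\Psi)$ iff there is no $y\in\Psi$ with $x\prec_\rs|_\Phi y$ or $y\prec_\rs|_\Phi x$, which for $\Psi\subseteq\Phi$ is the same condition as $x\in\predominants_\rs(\Psi)$. So $\Psi$ has no pre-dominant in $\rs$ either, giving $\Psi\in\BCSCS(\rs)$.

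I do not expect any genuine obstacle here; the statement is essentially a bookkeeping exercise in the definitions. The only point requiring a moment's care is making explicit, in part (2), that double restriction $\bigl((\cdot)|_{\Phi\times\Phi}\bigr)|_{\Psi\times\Psi}=(\cdot)|_{\Psi\times\Psi}$ for $\Psi\subseteq\Phi$, so that both the strong-connectivity witness and the pre-dominant set transfer verbatim; and in part (1), that enlarging a relation can only help strong connectivity but can only shrink the set of pre-dominants, which is exactly the direction needed for both inclusions.
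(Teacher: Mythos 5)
Your proof is correct and follows essentially the same route as the paper's (much terser) argument: part (1) rests on the fact that adding arcs preserves strong connectivity while enlarging $\prec$ can only remove pre-dominants, and part (2) on the fact that restricting to $\Phi$ and then to $\Psi\subseteq\Phi$ leaves all relationships, hence both the connectivity witness and the pre-dominant set, unchanged. Your write-up just makes these observations explicit.
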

\begin{proof}
(1) follows directly from the definitions and the fact that 
adding additional arcs maintains the property of being a strongly connected subgraph,
and the fact that $\prec_\rs\subseteq\prec_{\rs'}$.

(2) follows from from the fact the all relationships
between the elements of $\Phi$
in $\rs|_\Phi$ are the same as in $\rs$.
\end{proof}

\begin{proposition}
\label{prop-radgih}
\textsc{qsa}-structures are projection-closed and intersection-closed.
\end{proposition}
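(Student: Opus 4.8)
The plan is to verify the two closure properties directly from the definition of \textsc{qsa}-structures, namely that they are relational structures (both relations irreflexive) that are \textsc{qs}-acyclic, which by Definition~\ref{def-qsas} means having no \textsc{csc}-subset without pre-dominant, i.e.\ $\BCSCS(\rs)=\es$.

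For \textbf{projection-closedness}, let $\rs\in\QSAS$ and $\Phi\subseteq\Delta_\rs$. First, $\rs|_\Phi$ is still a relational structure, since restricting irreflexive relations keeps them irreflexive. Then I would invoke Proposition~\ref{prop-jngserub}(2), which gives $\BCSCS(\rs|_\Phi)\subseteq\BCSCS(\rs)$. Since $\rs\in\QSAS$ means $\BCSCS(\rs)=\es$, we get $\BCSCS(\rs|_\Phi)=\es$, so $\rs|_\Phi$ is \textsc{qs}-acyclic, hence $\rs|_\Phi\in\QSAS$.

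For \textbf{intersection-closedness}, let $\rs,\rs'\in\QSAS$ and set $\rs''=\FA{\Delta_\rs\cap\Delta_{\rs'},\prec_\rs\cap\prec_{\rs'},\sq_\rs\cap\sq_{\rs'}}$. Again $\rs''$ is a relational structure since intersections of irreflexive relations are irreflexive. Now suppose for contradiction that $\Phi\in\BCSCS(\rs'')$. Then $\FA{\Phi,(\sq_{\rs''}\cup\prec_{\rs''})|_{\Phi\times\Phi}}$ is strongly connected; since $\sq_{\rs''}\cup\prec_{\rs''}\subseteq\sq_\rs\cup\prec_\rs$ and $\Phi\subseteq\Delta_\rs$, adding arcs preserves strong connectivity, so $\Phi\in\CSCS(\rs)$. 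Moreover $\Phi$ has no pre-dominant in $\rs''$: for every $x\in\Phi$ there is $y\in\Phi$ with $x\prec_{\rs''}y$ or $y\prec_{\rs''}x$, hence (since $\prec_{\rs''}\subseteq\prec_\rs$) the same $y$ witnesses that $x\notin\predominants_\rs(\Phi)$. Therefore $\Phi\in\BCSCS(\rs)$, contradicting $\rs\in\QSAS$. Hence $\BCSCS(\rs'')=\es$ and $\rs''\in\QSAS$.

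The argument is entirely routine; the only point requiring a little care is the direction of the inclusions in the intersection case --- passing from the intersected structure $\rs''$ to the larger $\rs$ both \emph{preserves} strong connectivity (more arcs can only help) and \emph{preserves} the absence of pre-dominants (a pre-dominant of $\Phi$ in $\rs$ would, a fortiori having even fewer incident $\prec$-arcs in $\rs''$... wait, that is backwards), so one must be careful to note that it is the \emph{witness} $y$ for non-pre-dominance that transfers upward, not pre-dominance itself. I expect no genuine obstacle here; the bulk of the work is already packaged in Proposition~\ref{prop-jngserub}.
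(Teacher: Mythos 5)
Your proof is correct and takes essentially the same route as the paper, which simply observes that both closure properties follow from Proposition~\ref{prop-jngserub}: your projection case cites part (2) verbatim, and your hand-unfolded intersection argument (strong connectivity and the non-pre-dominance witnesses transfer from the intersected structure to the larger one, forcing $\BCSCS$ to stay empty) is exactly the content packaged in that lemma. The momentary reversal you flag mid-text is resolved correctly in your main argument, so there is no gap.
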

\begin{proof}
It follows directly from Proposition~\ref{prop-jngserub}.
\end{proof}

\section{Maximal QS-structures}
\label{sec-max}

In this section we provide alternative definitions of maximal \textsc{qs}-structures together with some useful properties.

\begin{definition}[\textsc{qsm}-structure]
\label{def-qsms} 
A \emph{quasi-stratified maximal structure} (or \textsc{qsm}-structure) 
is a structure $\qsms = \FA{\Delta,\prec,\sq}$ such that the following hold, for all $x,y,z,t\in \Delta$:
\begin{equation*}
\label{axioms-maxqso}
\begin{array}{llr@{~~~}c@{~~~}l@{~~~~~}l@{~~~}lr@{~~~}c@{~~~}l@{~~~~~}l@{~~~}l}
 \textsc{qsms{:}1}
 &: 
 & x\not\sq x
 \\ 
 \textsc{qsms{:}2}
 & :
 &
 x\prec y
 & \Longleftrightarrow
 & x\sq y\not\sq x 
 \\ 
 \textsc{qsms{:}3}
 & :
 &
 x\neq y
 & \Longleftrightarrow
 & x\prec y \,~\lor~\,y\prec x \,~\lor~\,x\sq y \sq x 
 \\ 
 \textsc{qsms{:}4}
 &:
 &
 x\prec y \land z \prec t
 & \Longrightarrow
 & x\prec t \land z\prec y \,~\lor~\, x\prec z \land x\prec t ~~\lor 
 ~~z\prec x \land z\prec y  
 \,~\lor~\,t\prec y \land z\prec y \,~\lor~\, y\prec t \land x\prec t 

\end{array}
\end{equation*} 
All \textsc{qsm}-structures are denoted by $\QSMS$. 
\EOD
\end{definition}
Note that \textsc{qsms{:}4} corresponds to \textsc{qso{:}2} which
is a key axiom characterising \textsc{qs}-orders. Moreover,
 \textsc{qsm}-structures are projection-closed. 

\begin{example}
\label{ex-max}
\newsavebox\quasistr
 \sbox{\quasistr}{%
\begin{tikzpicture}[remember picture, node distance=1.5cm,>=arrow30]
 \node (n1) {$a$};
 \node (n4) [right of=n1] {$d$};
 \node (n2) [below of=n1, yshift=8mm, xshift=7.5mm] {$c$};
 \node (n3) [above of=n1, yshift=-8mm,xshift=7.5mm] {$b$};

 \draw (n1) [->, thick] edge (n2);
 \draw (n3) [->, thick] edge (n4);
 \draw (n1) [->, dashed] edge (n3);
 \draw (n2) [->, dashed] edge (n4);

\begin{pgfonlayer}{background}
 \node (p0) [draw,fill=gray!10,fit=(n1)(n2)(n3)(n4),inner sep=1.1mm,color=gray!10] {};
\end{pgfonlayer}
\end{tikzpicture}%
}

\newsavebox\maxstr
 \sbox{\maxstr}{%
\begin{tikzpicture}[remember picture, node distance=1.5cm,>=arrow30]
 \node (n1) {$a$};
 \node (n4) [right of=n1] {$d$};
 \node (n2) [below of=n1, yshift=8mm, xshift=7.5mm] {$c$};
 \node (n3) [above of=n1, yshift=-8mm,xshift=7.5mm] {$b$};

 \draw (n1) [->, thick] edge (n2);
 \draw (n1) [->, thick] edge (n4);
 \draw (n3) [->, thick] edge (n4);
 \draw (n2) [->, thick] edge (n4);
 \draw (n1) [->, dashed, out=60, in=-150] edge (n3);
 \draw (n3) [->, dashed, out=-120, in=30] edge (n1);
 \draw (n2) [->, dashed, out=110, in=-110] edge (n3);
 \draw (n3) [->, dashed, out=-70, in=70] edge (n2);

\begin{pgfonlayer}{background}
 \node (p0) [draw,fill=gray!10,fit=(n1)(n2)(n3)(n4),inner sep=1.1mm,color=gray!10] {};
\end{pgfonlayer}
\end{tikzpicture}%
}

\newsavebox\intervalrepr
  \sbox{\intervalrepr}{%
\begin{tikzpicture}[remember picture, node distance=1.4cm,>=arrow30]

    \node (ab) {};
	\node (bb) [above of=ab, yshift=-4mm] {};
	\node (be) [right of=bb, xshift=-5mm] {};
	\node (cb) [right of=be, xshift=-15mm] {};
	\node (ce) [right of=cb, xshift=-5mm] {};
	\node (ae) [below of=ce, yshift=4mm] {};
    \node (db) [right of=ce, xshift=-15mm, yshift=-5mm] {};
	\node (de) [right of=db, xshift=-5mm] {};    
	\node (f1) [above of=bb, yshift=-9mm] {};
	
    \draw (ab) [thick] edge node (a) [above] {$b$} (ae);
	\draw (bb) [thick] edge node (b) [above] {$a$} (be);
	\draw (cb) [thick] edge node (c) [above] {$c$} (ce);
	\draw (db) [thick] edge node (d) [above] {$d$} (de);

\begin{pgfonlayer}{background}
	\node (p1) [draw,fill=gray!10,fit=(ae)(f1)(de),inner sep=1.1mm,color=gray!10] {};
\end{pgfonlayer}
\end{tikzpicture}%
}

Figure~\ref{fig-max} shows a \textsc{qsa}-structure together with one possible
\textsc{qsm}-structure extending it (on the right of the diagram, a 
corresponding interval representation is provided; cf.\ \cite{JKKM-PN2024}). 
The \textsc{qsm}-structure in the middle diagram corresponds (through the mapping $\rho$ introduced in
Section~\ref{subs-2}) to the  
\textsc{qso}-order depicted in Figure~\ref{fig-quasi} (with elements labelled $a\ldots d$ from left to right).

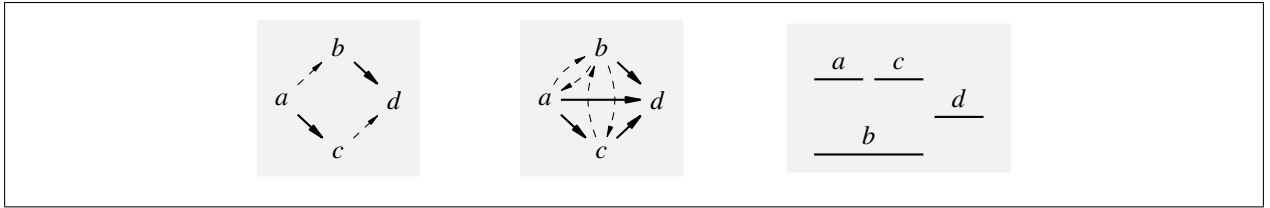
\begin{figure}[ht]
\begin{center}
\begin{tikzpicture}[node distance=3.5cm]
 \node (A) {\usebox{\quasistr}};
 \node (B) [right of=A] {\usebox{\maxstr}};
 \node (C) [right of=B, xshift = 4.5mm] {\usebox{\intervalrepr}};
 \end{tikzpicture}

 \end{center}
 \caption
{
 A \textsc{qsa}-structure together with one of its \textsc{qsm}-structures 
 and an interval representation of the corresponding \textsc{qs}-order. Solid and dashed arcs represent relations $\prec$ and $\sq$, respectively. 
 We take advantage of the fact that in the \textsc{qsm}-structures $\prec$ is included in $\sq$.
}
\label{fig-max}
 \end{figure}
\end{example}

The next result clarifies the relationship between 
\textsc{qs}-orders and \textsc{qsm}-structures.
Basically, we will show that $\rho(\QSO)=\QSMS$ (see Section~\ref{subs-2}).

\begin{proposition}
\label{prop-siveiv}
 A structure $\qsas = \FA{\Delta,\prec,\sq}$ belongs to $\QSMS$ iff 
\begin{subequations}
\begin{align}
&\FA{\Delta,\prec}\in\QSO ~~\textit{and}
\label{eq-efi-a} 
\\
&\sq=\{\FA{x,y}\in\Delta\times \Delta\mid y\not\prec x\neq y\}\;.
\label{eq-efi-b} 
\end{align}
\end{subequations} 
\end{proposition}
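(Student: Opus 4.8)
The plan is to prove the two implications separately, in each case reducing everything to the axiomatic characterisation of $\QSO$ supplied by Theorem~\ref{prop-axioms_qso} (namely \textsc{qso{:}1,2}), together with the elementary fact that every \textsc{qs}-order is a partial order and hence antisymmetric (Proposition~\ref{prop-vrev}). Throughout, the key observation is that \textsc{qsms{:}4} is syntactically identical to \textsc{qso{:}2}, so the $\prec$-part of the statement is almost free once \textsc{qso{:}1} is in place, and the real work is relating $\sq$ to $\prec$ via \textsc{qsms{:}1,2,3}.

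For the ($\Longleftarrow$) direction I would assume \eqref{eq-efi-a} and \eqref{eq-efi-b} and verify \textsc{qsms{:}1--4} in turn. \textsc{qsms{:}1} is immediate from \eqref{eq-efi-b}, since membership of $\FA{x,y}$ in $\sq$ forces $x\neq y$. For \textsc{qsms{:}2}, unfold \eqref{eq-efi-b}: $x\sq y$ means $y\not\prec x$ and $x\neq y$, while $y\not\sq x$ means $x\prec y$ or $x=y$; combining these, $x\sq y\not\sq x$ is equivalent to $x\prec y\wedge y\not\prec x$, which by antisymmetry of the partial order $\prec$ collapses to $x\prec y$. For \textsc{qsms{:}3}, if $x\neq y$ and neither $x\prec y$ nor $y\prec x$, then \eqref{eq-efi-b} gives both $x\sq y$ and $y\sq x$; the converse direction is immediate from irreflexivity of $\prec$ and the distinctness built into \eqref{eq-efi-b}. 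Finally \textsc{qsms{:}4} is \textsc{qso{:}2}, which holds because $\FA{\Delta,\prec}\in\QSO$ by Theorem~\ref{prop-axioms_qso}.

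For the ($\Longrightarrow$) direction, assume $\qsas\in\QSMS$. First I would establish \textsc{qso{:}1}: if $x\prec x$ then \textsc{qsms{:}2} would force $x\sq x\not\sq x$, which is absurd; and \textsc{qso{:}2} is just \textsc{qsms{:}4}, so Theorem~\ref{prop-axioms_qso} yields \eqref{eq-efi-a}. For \eqref{eq-efi-b} I would show both inclusions. If $x\sq y$, then $x\neq y$ (otherwise \textsc{qsms{:}1} fails) and $y\not\prec x$ (otherwise \textsc{qsms{:}2} applied to $y\prec x$ gives $x\not\sq y$), which is the left-to-right inclusion. Conversely, if $x\neq y$ and $y\not\prec x$, then \textsc{qsms{:}3} leaves only the options $x\prec y$ or $x\sq y\sq x$; in the latter case $x\sq y$ holds directly, and in the former case \textsc{qsms{:}2} gives $x\sq y\not\sq x$, so again $x\sq y$.

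The argument is essentially bookkeeping with biconditionals, and I do not anticipate a genuine obstacle; the one place where something beyond pure propositional manipulation is invoked is the appeal to antisymmetry of $\prec$ in handling \textsc{qsms{:}2}, which is legitimate because \textsc{qs}-orders are partial orders. The only real care required is to keep straight the directions of $\sq$ and of the various implications.
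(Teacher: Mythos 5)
Your proposal is correct and follows essentially the same route as the paper: both directions are handled by reducing membership in $\QSMS$ to the axiomatic characterisation of $\QSO$ from Theorem~\ref{prop-axioms_qso} (noting \textsc{qsms{:}4} is \textsc{qso{:}2}) and then checking \textsc{qsms{:}1--3} against Eq.~\eqref{eq-efi-b}, exactly as in the paper's proof, with your appeal to asymmetry of the strict partial order $\prec$ matching the paper's implicit use of the same fact. You merely spell out a few biconditional directions that the paper leaves to the reader; there is no substantive difference.
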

\begin{proof}
In the proof, we rely on the characterisation of \textsc{qs}-orders given in 
Theorem~\ref{prop-axioms_qso}.

$(\Longrightarrow)$
We observe that Eq.\eqref{eq-efi-a} holds as \textsc{qso{:}1} follows from
\textsc{qsms{:}1,2}, and \textsc{qso{:}2} is 
\textsc{qsms{:}4}.
To prove Eq.\eqref{eq-efi-b}, suppose that $x\sq y$.
Then, $x\neq y$ by \textsc{qsms{:}1}, and $y\not\prec x$ by \textsc{qsms{:}2}.
To show the opposite inclusion, suppose that $x,y\in\Delta$ are such that
$y\not\prec x\neq y$. Then, by \textsc{qsms{:}2}, we have
$x\prec y$ or $x\sq y\sq x$. Moreover, $x\prec y$ implies $x\sq y$,
by \textsc{qsms{:}2}. 

$(\Longleftarrow)$
We observe that \textsc{qsms{:}1} follows from Eq.\eqref{eq-efi-b}.
To show \textsc{qsms{:}2}, suppose that $x\prec y$. 
Then, by $\FA{\Delta,\prec}\in\QSO$, $x\neq y$ and $y\not\prec x$.
Hence, by Eq.\eqref{eq-efi-b}, $x\sq y\not\sq x$.
To show \textsc{qsms{:}3}, suppose that $x,y\in\Delta$ and $x\neq y$.
If $x\not\prec y\not\prec x$ then, by Eq.\eqref{eq-efi-b}, $x\sq y \sq x$.
Finally, \textsc{qsms{:}4} is 
\textsc{qso{:}2}. 
\end{proof}
 
We are now in a position to show that 
\textsc{qsa}-structures are a suitable specification device 
for behaviours represented by \textsc{qs}-orders.

\begin{theorem}
\label{prop-grtger}
$\QSMS =\QSAS^\sat$.
\end{theorem}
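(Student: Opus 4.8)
The plan is to prove the equality by two inclusions, using throughout the characterisation from Proposition~\ref{prop-siveiv}: $\qsms\in\QSMS$ iff $\FA{\Delta_\qsms,\prec_\qsms}\in\QSO$ and $\sq_\qsms=\{\FA{x,y}\mid y\not\prec_\qsms x\neq y\}$; in particular every $\qsms\in\QSMS$ has $\prec_\qsms\subseteq\sq_\qsms$, so $\prec_\qsms\cup\sq_\qsms=\sq_\qsms$.

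For $\QSMS\subseteq\QSAS^\sat$, I would first show $\qsms\in\QSAS$ for every $\qsms\in\QSMS$. Both relations are irreflexive, so it suffices to rule out a \textsc{csc}-subset without pre-dominant. Given $\Phi\in\CSCS(\qsms)$, the projection of the underlying \textsc{qs}-order onto $\Phi$ is again a \textsc{qs}-order by Proposition~\ref{prop-uqiqef}(2); if it is a \textsc{qso}-stratum, Proposition~\ref{prop-uqiqef}(3) supplies a pre-dominant of $\Phi$, and if it were a nontrivial composition $\qso_1\circ\dots\circ\qso_n$ ($n\geq2$) then every element of $\Delta_{\qso_1}$ would precede every element of $\Delta_{\qso_2}$, so a $\sq_\qsms$-path from a vertex of $\Delta_{\qso_2}$ back to a vertex of $\Delta_{\qso_1}$ (which exists by strong connectivity, since $\prec_\qsms\cup\sq_\qsms=\sq_\qsms$) would have to traverse an $\sq_\qsms$-edge $w'\to w$ with $w\prec_\qsms w'$, contradicting $\sq_\qsms=\{\FA{x,y}\mid y\not\prec_\qsms x\neq y\}$ --- so this case does not occur. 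Next I would show $\qsms$ is $\pref$-maximal in $\QSAS$: if $\qsms\pref\rs'\in\QSAS$ then, handling $\sq$ first, any pair in $\sq_{\rs'}\setminus\sq_\qsms$ would, by the formula for $\sq_\qsms$, come with $y\prec_{\rs'}x$ and make $\{x,y\}$ a \textsc{csc}-subset of $\rs'$ with no pre-dominant; hence $\sq_{\rs'}=\sq_\qsms$, and then any pair in $\prec_{\rs'}\setminus\prec_\qsms$ similarly produces a forbidden $\{x,y\}$ or contradicts $\sq_{\rs'}=\sq_\qsms$; so $\rs'=\qsms$.

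For $\QSAS^\sat\subseteq\QSMS$, I would reduce everything to one lemma: \emph{every $\rs\in\QSAS$ has an extension in $\QSMS$.} Granting it, a maximal $\rs\in\QSAS^\sat$ has some $\qsms\in\QSMS$ with $\rs\pref\qsms$, and since $\qsms\in\QSAS$ by the first inclusion, maximality gives $\rs=\qsms\in\QSMS$. To prove the lemma, first replace $\rs=\FA{\Delta,\prec,\sq}$ by $\FA{\Delta,\prec,\sq\cup\prec}$: this leaves $\sq\cup\prec$ unchanged, hence preserves $\CSCS$ and $\BCSCS$ and so stays in $\QSAS$, it extends $\rs$, and it satisfies $\prec\subseteq\sq$ (a property inherited by projections). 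So I may assume $\prec\subseteq\sq$, i.e.\ $\sq\cup\prec=\sq$, and induct on $|\Delta|$; the base case $|\Delta|\leq1$ is trivial (take $\rho(\FA{\Delta,\es})$). For $|\Delta|>1$, split on whether the digraph $\sq$ on $\Delta$ is strongly connected. If it is, then $\Delta$ itself is a \textsc{csc}-subset, so $\rs\in\QSAS$ gives it a pre-dominant $x$ --- an element incident to no $\prec$-edge --- and I would apply the induction hypothesis to the \textsc{qsa}-structure $\rs|_{\Delta\setminus\{x\}}$ (Proposition~\ref{prop-radgih}), obtaining $\rho(\qso')$, and then take $\rho(\qso'[x])\in\QSMS$ (Definition~\ref{def-qso}), which extends $\rs$ because $\prec_\rs$ avoids $x$ and in $\qso'[x]$ the node $x$ is $\prec$-unrelated to everything, so $\rho$ links it by $\sq$ in both directions to all others. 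If $\sq$ is not strongly connected, take a sink strongly connected component $N\subsetneq\Delta$ of it and set $P=\Delta\setminus N\neq\es$; since $N$ is a sink there is no $\sq$-edge (hence, as $\prec\subseteq\sq$, no $\prec$-edge) from $N$ into $P$, so applying the induction hypothesis to $\rs|_P$ and $\rs|_N$ and taking $\rho(\qso_P\circ\qso_N)\in\QSMS$ (Definition~\ref{def-qso}) works: $\prec_{\qso_P\circ\qso_N}=\prec_{\qso_P}\cup\prec_{\qso_N}\cup(P\times N)$ and the associated $\sq$ then visibly cover $\prec_\rs$ and $\sq_\rs$, precisely because no relation of $\rs$ runs from $N$ to $P$.

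The routine parts are the irreflexivity checks and the ``extends $\rs$'' verifications in the two inductive cases, which just chase Definition~\ref{def-qso} and the $\sq$-formula. The step I expect to be the main obstacle is the extension lemma, and within it the recognition that the correct dichotomy is strong connectivity of $\sq\,(=\sq\cup\prec)$: in the connected case the defining feature of $\QSAS$ --- no \textsc{csc}-subset without pre-dominant --- is exactly what yields a node that can be peeled off as a fresh element of a \textsc{qso}-stratum (mirroring Case~1 in the proof of Theorem~\ref{prop-axioms_qso}), while in the disconnected case a sink component plays the role of the set $N$ there, and the composition $\circ$ of Definition~\ref{def-qso} reassembles the pieces.
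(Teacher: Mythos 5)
Your argument is correct and essentially coincides with the paper's proof: the same reduction of $\QSAS^\sat\subseteq\QSMS$ to the lemma that every \textsc{qsa}-structure admits a \textsc{qsm}-extension, established by the same induction on $|\Delta|$ with the dichotomy ``$\Delta$ strongly connected under $\prec\cup\sq$ (peel off a pre-dominant) versus not (split along a partition with no backward relationships, which you realise concretely via a sink strongly connected component)'', together with the same maximality argument that adding any missing $\prec$- or $\sq$-pair creates a two-element \textsc{csc}-subset without pre-dominant. The only noticeable variation is that you prove $\QSMS\subseteq\QSAS$ directly, by projecting the underlying \textsc{qs}-order onto an arbitrary \textsc{csc}-subset and using the strata decomposition of Proposition~\ref{prop-uqiqef}, whereas the paper inducts on the derivation of the \textsc{qs}-order via Definition~\ref{def-qso}; both routes are sound.
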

\begin{proof}
$(\subseteq)$
Let $\qsms=\FA{\Delta,\prec,\sq}\in\QSMS$ and $\qso=\FA{\Delta,\prec}$. 
By Proposition~\ref{prop-siveiv}, $\qso\in\QSO$.
We first show that $\qsms\in\QSAS$, proceeding by induction on structure of
$\qso$ (see Definition~\ref{def-qso}). 
If $\qso=\FA{\es,\es}$ then, clearly, $\qsms\in\QSAS$.
In the inductive step, we consider two cases.
 
\emph{Case 1:} 
$\qso=\qso'[x]$, where $\qso'\in\QSO$. Then, by the induction hypothesis, 
$\qsms'=\qsms|_{\Delta\setminus\{x\}}\in\QSAS$.
We then observe that, by Proposition~\ref{prop-siveiv}, 
$\CSCS(\qsms)\setminus\CSCS(\qsms')=
\{\Phi\in\CSCS(\qsms) \mid x\in\Phi \}$.
And so, $x\in\predominants_\qsms(\Phi)$, for every 
$\Phi\in\CSCS(\qsms)\setminus\CSCS(\qsms')$. Hence $\qsms\in\QSAS$

\emph{Case 2:} 
$\qso=\qso'\circ\qso''$, where $\Delta_{\rs'}\neq\es\neq\Delta_{\rs''}$.
Then, by the induction hypothesis, we have $\qsms'=\qsms|_{\Delta_{\rs'}}\in\QSAS$
and $\qsms''=\qsms|_{\Delta_{\rs''}}\in\QSAS$.
Moreover, $(\prec\cup\sq)\cap (\Delta_{\rs''}\times\Delta_{\rs'})=\es$,
which means that 
$\CSCS(\qsms)=\CSCS(\qsms')\cup\CSCS(\qsms'')$.
 
Hence $\qsms\in\QSAS$. Moreover, by Proposition~\ref{prop-siveiv}, for all $x\neq y\in\Delta$,
$x\not\prec y$ implies $\{x,y\}\in \BCSCS(\qsms{\adding{x}{\prec}{y}})$ and 
$x\not\sq y$ implies $\{x,y\}\in \BCSCS(\qsms{\adding{x}{\sq}{y}})$.
Hence $\qsms\in\QSAS^\sat$.

$(\supseteq)$
Since \textsc{qsa}-structures are finite and
we have already proved that $\QSMS\subseteq\QSAS$, it suffices to show that $\Ext{\QSMS}{\qsas}\neq\es$,
for every $\qsas=\FA{\Delta,\prec,\sq}\in\QSAS$.
We proceed by induction on the size of the domain of $\qsas$.
Clearly, $\FA{\es,\es,\es}\in\Ext{\QSMS}{\FA{\es,\es,\es}}$
and $\FA{\{x\},\es,\es}\in\Ext{\QSMS}{\FA{\{x\},\es,\es}}$. In
the inductive step $|\Delta|>1$ and we consider two cases.

\emph{Case 1:} $\Delta\in\CSCS(\qsas)$. Then, by $\qsas\in\QSAS$, there is $x\in\predominants_\qsas(\Delta)$.
Let $\Delta'=\Delta\setminus\{x\}$.
As \textsc{qsa}-structures are projection-closed, by Proposition~\ref{prop-radgih}, $\qsas|_{\Delta'}\in\QSAS$.
Hence, by the induction hypothesis, there is 
$\FA{\Delta',\prec',\sq'}\in\Ext{\QSMS}{\qsas|_{\Delta'}}$.
It is straightforward to check that 
$ 
\FA{\Delta,\prec',\sq'\cup(\Delta'\times\{x\})\cup(\{x\}\times\Delta')}
$
belongs to $\Ext{\QSMS}{\qsas}$.

\emph{Case 2:} $\Delta\notin\CSCS(\qsas)$. Then $\Delta$ can be partitioned
as $\Delta'\uplus \Delta''$ in such a way that 
we have $(\Delta''\times\Delta')\cap(\prec\cup\sq)=\es$.
As \textsc{qsa}-structures are projection-closed, $\qsas|_{\Delta'}\in\QSAS$ and $\qsas|_{\Delta''}\in\QSAS$.
Hence, by the induction hypothesis, there are
$\qsms'\in \Ext{\QSMS}{\qsas|_{\Delta'}}$ and 
$\qsms''\in\Ext{\QSMS}{\qsas|_{\Delta''}}$.
It is straightforward to check that 
$
\FA{\Delta,(\prec_{\qsms'}\cup\prec_{\qsms''})\cup(\Delta'\times\Delta''),
(\sq_{\qsms'}\cup\sq_{\qsms''})\cup(\Delta'\times\Delta'')}
$
belongs to $\Ext{\QSMS}{\qsas}$. 
\end{proof}

Having at least one \textsc{qsm}-structure extension yields an alternative 
characterisation of \textsc{qsa}-structures.

\begin{proposition}
\label{prop-regrehere}
A structure is a \textsc{qsa}-structure iff one of its extensions 
is a \textsc{qsm}-structure.
\end{proposition}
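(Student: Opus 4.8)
The plan is to obtain this proposition as a short corollary of Theorem~\ref{prop-grtger} ($\QSMS = \QSAS^\sat$) together with the monotonicity of $\BCSCS$ along $\pref$ recorded in Proposition~\ref{prop-jngserub}(1); no new machinery is needed.

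For the forward implication, I would take a \textsc{qsa}-structure $\rs$ and exhibit a \textsc{qsm}-structure among its extensions. Since $\pref$-extensions keep the (finite) domain fixed, the set $\Ext{\QSAS}{\rs}$ is finite and nonempty (it contains $\rs$ itself), hence has a $\pref$-maximal element $\rs'$; any proper extension of $\rs'$ inside $\QSAS$ would also extend $\rs$, contradicting maximality of $\rs'$ in $\Ext{\QSAS}{\rs}$, so $\rs'\in\QSAS^\sat$. By Theorem~\ref{prop-grtger}, $\rs'\in\QSMS$. Equivalently, one may just quote the $(\supseteq)$ part of the proof of Theorem~\ref{prop-grtger}, which already establishes $\Ext{\QSMS}{\rs}\neq\es$ for every $\rs\in\QSAS$.

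For the backward implication, suppose some extension $\rs'$ of $\rs$ is a \textsc{qsm}-structure. First I would check that $\rs$ is genuinely a relational structure: a \textsc{qsm}-structure has irreflexive $\sq$ (axiom \textsc{qsms{:}1}) and irreflexive $\prec$ (otherwise \textsc{qsms{:}2} would give $x\sq x\not\sq x$), and since $\prec_\rs\subseteq\prec_{\rs'}$ and $\sq_\rs\subseteq\sq_{\rs'}$, both relations of $\rs$ are irreflexive as well. Next, by Theorem~\ref{prop-grtger} we have $\rs'\in\QSMS=\QSAS^\sat\subseteq\QSAS$, so $\BCSCS(\rs')=\es$; then Proposition~\ref{prop-jngserub}(1) applied to $\rs\pref\rs'$ yields $\BCSCS(\rs)\subseteq\BCSCS(\rs')=\es$, i.e.\ $\rs$ has no \textsc{csc}-subset without a pre-dominant, so $\rs\in\QSAS$.

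The argument is essentially immediate once Theorem~\ref{prop-grtger} is available, so there is no real obstacle; the only mild care-points are (i) invoking finiteness of the fixed domain so that a $\pref$-maximal extension inside $\QSAS$ is guaranteed to exist, and (ii) verifying in the backward direction that $\rs$ inherits irreflexivity from $\rs'$, so that it actually qualifies as a relational structure (not merely a structure) before the monotonicity of $\BCSCS$ is applied.
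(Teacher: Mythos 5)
Your proof is correct and follows essentially the same route as the paper: the forward direction uses finiteness of the domain to extract a $\pref$-maximal extension inside $\QSAS$, which is a \textsc{qsm}-structure by Theorem~\ref{prop-grtger}, while the backward direction uses \textsc{qsms{:}1,2} to see that the structure is relational and the monotonicity of $\BCSCS$ from Proposition~\ref{prop-jngserub}(1) together with $\QSMS\subseteq\QSAS$ to conclude it is \textsc{qs}-acyclic. The only cosmetic difference is that you argue the backward direction directly, whereas the paper phrases it as a proof by contradiction.
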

\begin{proof}
Let $\rs=\FA{\Delta,\prec,\sq}$ be a structure. 

($\Longrightarrow$)
If $\rs\in\QSAS$, then $\Ext{\QSMS}{\rs}\neq\es$ 
follows from the finiteness of $\rs$ and Theorem~\ref{prop-grtger}.

($\Longleftarrow$) 
Suppose that $\rs\notin\QSAS$. As $\Ext{\QSMS}{\rs}\neq\es$ and 
\textsc{qsms{:}1,2}, $\rs$ is a relational structure. 
Hence there is $\Phi\in\CSCS(\rs)$
such that $\predominants_\rs(\Phi)=\es$.
Suppose then that $\qsms\in\Ext{\QSMS}{\rs}$. 
Then $\Phi\in\BCSCS(\qsms)$, yielding a contradiction 
with $\QSMS\subseteq\QSAS$ proved in Theorem~\ref{prop-grtger}. 
\end{proof}

The next result captures some cases where extending a \textsc{csa}-structure
may (or may not) lead outside \textsc{csa}-structures. 
This way we provide a recipe to properly saturate non-maximal \textsc{qsa}-structures. 

\begin{proposition}
\label{prop-rgih}
Let $\qsas = \FA{\Delta,\prec,\sq}\in\QSAS$ and
$x\neq y\in\Delta$. Then:

\begin{enumerate}
 
\item
$x\prec y$ implies $\qsas{\adding{y}{\sq}{x}}\notin\QSAS$.

\item
$x\sq y$ implies $\qsas{\adding{y}{\prec}{x}}\notin\QSAS$.

\item 
$\qsas{\adding{x}{\prec}{y}}\in\QSAS$ 
or 
$\qsas{\adding{y}{\sq}{x}}\in\QSAS$.

\item 
$\qsas{\adding{x}{\prec}{y}}\notin\QSAS$ 
implies 
$\qsas{\adding{y}{\sq}{x}}\in\QSAS$
and $\satmap_\QSAS(\qsas{\adding{y}{\sq}{x}})=\satmap_\QSAS(\qsas)$.

\item 
$\qsas{\adding{x}{\sq}{y}}\notin\QSAS$ 
implies 
$\qsas{\adding{y}{\prec}{x}}\in\QSAS$
and $\satmap_\QSAS(\qsas{\adding{y}{\prec}{x}})=\satmap_\QSAS(\qsas)$.
 
\end{enumerate}
\end{proposition}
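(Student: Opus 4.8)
The plan is to lean on three facts already established above: the identification $\QSMS=\QSAS^\sat$ of Theorem~\ref{prop-grtger}; the criterion of Proposition~\ref{prop-regrehere} that a structure lies in $\QSAS$ precisely when it admits at least one $\QSMS$-extension; and the monotonicity $\BCSCS(\rs)\subseteq\BCSCS(\rs')$ whenever $\rs\pref\rs'$ from Proposition~\ref{prop-jngserub}(1). Since $x\neq y$, each of $\qsas\adding{x}{\prec}{y}$, $\qsas\adding{y}{\prec}{x}$, $\qsas\adding{x}{\sq}{y}$, $\qsas\adding{y}{\sq}{x}$ is again a relational structure, so membership in $\QSAS$ is entirely a matter of \textsc{qs}-acyclicity. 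Parts~(1)--(2) I would settle by exhibiting a concrete forbidden two-element substructure: if $x\prec y$ and we add $\FA{y,x}$ to $\sq$, then in $\qsas\adding{y}{\sq}{x}$ the set $\Phi=\{x,y\}$ is strongly connected for $\sq\cup\prec$ (via $x\prec y$ and $y\sq x$), so $\Phi\in\CSCS(\qsas\adding{y}{\sq}{x})$, while $x\prec y$ rules out both $x$ and $y$ as pre-dominants; hence $\Phi\in\BCSCS(\qsas\adding{y}{\sq}{x})$ and the structure falls outside $\QSAS$. Part~(2) is the mirror argument, adding $\FA{y,x}$ to $\prec$ so that $\{x,y\}$ is again a \textsc{csc}-subset whose only candidate pre-dominants are both disqualified by $y\prec x$.

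For~(3), by Proposition~\ref{prop-regrehere} I would fix a $\QSMS$-extension $\qsms$ of $\qsas$ and apply \textsc{qsms{:}3} to the distinct events $x,y$. If $x\prec_\qsms y$, then $\prec_\qsas\cup\{\FA{x,y}\}\subseteq\prec_\qsms$ and $\sq_\qsas\subseteq\sq_\qsms$, so $\qsas\adding{x}{\prec}{y}\pref\qsms$ and Proposition~\ref{prop-regrehere} gives $\qsas\adding{x}{\prec}{y}\in\QSAS$. Otherwise $y\prec_\qsms x$ --- whence $y\sq_\qsms x$ by \textsc{qsms{:}2} --- or $x\sq_\qsms y\sq_\qsms x$, so in both cases $y\sq_\qsms x$; then $\qsas\adding{y}{\sq}{x}\pref\qsms$, hence $\qsas\adding{y}{\sq}{x}\in\QSAS$.

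For~(4) and~(5), the membership claims follow at once from~(3); for~(5) one applies~(3) after swapping $x$ and $y$ and using the hypothesis $\qsas\adding{x}{\sq}{y}\notin\QSAS$. For the saturation equality, the inclusion $\satmap_\QSAS(\qsas\adding{y}{\sq}{x})\subseteq\satmap_\QSAS(\qsas)$ (respectively with $\adding{y}{\prec}{x}$) is immediate because $\qsas$ is a $\pref$-predecessor of the enlarged structure. For the reverse inclusion I would take $\qsms\in\satmap_\QSAS(\qsas)=\Ext{\QSMS}{\qsas}$, using Theorem~\ref{prop-grtger}, and argue that the newly added pair already lies in the relevant relation of $\qsms$: if it did not, \textsc{qsms{:}3} together with \textsc{qsms{:}2} would force the \emph{other} candidate pair into $\qsms$, so that $\qsas$ enlarged by that other pair would be a $\pref$-predecessor of $\qsms\in\QSMS$ and hence a \textsc{qsa}-structure by Proposition~\ref{prop-regrehere}, contradicting the failure assumed in the hypothesis. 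Concretely for~(4): if $y\not\sq_\qsms x$, then \textsc{qsms{:}3} leaves only $x\prec_\qsms y$, so $\qsas\adding{x}{\prec}{y}\pref\qsms$ and $\qsas\adding{x}{\prec}{y}\in\QSAS$, a contradiction; therefore $y\sq_\qsms x$ and $\qsms\in\Ext{\QSMS}{\qsas\adding{y}{\sq}{x}}$. Part~(5) is the same with $x\sq_\qsms y$ in place of $y\sq_\qsms x$.

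I expect the main obstacle to be this last step --- showing that a candidate addition which is not already present in a given maximal extension would force the complementary addition to be present there, and hence force the complementary extended structure to remain inside $\QSAS$. This is exactly where the precise shape of the $\QSMS$-axioms is needed (that $\prec$ together with the symmetric part of $\sq$ partition the set of ordered pairs of distinct events, and that $\prec$ is the asymmetric part of $\sq$), rather than merely the abstract machinery; everything else reduces to routine bookkeeping with $\pref$ and the two quoted characterisations of $\QSAS$.
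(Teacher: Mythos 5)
Your proposal is correct, and for parts (1), (2), (4) and (5) it is essentially the paper's own argument: (1)--(2) by exhibiting $\{x,y\}$ as a \textsc{csc}-subset without pre-dominant of the enlarged structure, and (4)--(5) by the trivial inclusion of saturations plus the observation that a maximal extension $\qsms$ of $\qsas$ omitting the added pair would, by \textsc{qsms{:}2,3}, contain the complementary pair, making the complementary enlargement a $\pref$-predecessor of $\qsms$ and hence (via Proposition~\ref{prop-regrehere}) a \textsc{qsa}-structure, contradicting the hypothesis. Where you genuinely diverge is part (3): the paper proves it by a self-contained combinatorial argument on \textsc{csc}-subsets --- assuming both enlargements fail, it takes witnesses $\Phi\in\BCSCS(\qsas{\adding{x}{\prec}{y}})$ and $\Psi\in\BCSCS(\qsas{\adding{y}{\sq}{x}})$, shows $\Phi\cup\Psi\in\CSCS(\qsas)$, and derives a contradiction from the location of its pre-dominant --- whereas you obtain (3) directly from the saturation machinery: pick any \textsc{qsm}-extension $\qsms$ of $\qsas$ (Theorem~\ref{prop-grtger} / Proposition~\ref{prop-regrehere}) and use \textsc{qsms{:}2,3} to see that either $x\prec_\qsms y$ or $y\sq_\qsms x$, so one of the two enlargements embeds into $\qsms$ and is therefore \textsc{qs}-acyclic. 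Your route is shorter and arguably cleaner, and it is legitimate here since Theorem~\ref{prop-grtger} and Proposition~\ref{prop-regrehere} precede this proposition and do not depend on it (indeed the paper itself already invokes Proposition~\ref{prop-regrehere} in part (4)); what it gives up is the elementary, extension-free character of the paper's argument, which establishes the dichotomy purely at the level of forbidden substructures and would survive even without the full saturation theorem.
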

\begin{proof} 
(1) follows from $\{x,y\}\in\BCSCS(\qsas{\adding{y}{\sq}{x}})$.

(2) follows from $\{x,y\}\in\BCSCS(\qsas{\adding{y}{\prec}{x}})$.
 
(3) 
 We will proceed by contradiction. Suppose that 
 $\rs=\qsas{\adding{x}{\prec}{y}}\notin\QSAS$ and
 $\rs'=\qsas{\adding{y}{\sq}{x}}\notin\QSAS$.
 As both $\rs$ and $\rs'$ are relational structures,
 there are $\Phi\in\BCSCS(\rs)$ and $\Psi\in\BCSCS(\rs')$.
 From $\qsas\in\QSAS$ it follows that $x,y\in \Phi\cap\Psi$. Moreover, 
 $\Phi\in\BCSCS(\rs)$ implies
 that there is a path, using $\prec$ and $\sq$, from $y$ to $x$ in $\qsas$ going through the elements of $\Phi$,
 and
 $\Psi\in\BCSCS(\rs')$ implies
 that there is a path, using $\prec$ and $\sq$, from $y$ to $x$ in $\qsas$ going through the elements of $\Psi$.
 Hence $\Phi\cup\Psi\in\CSCS(\qsas)$, and so from $\qsas\in\QSAS$ it follows 
 that there is $z\in\predominants_\qsas(\Phi\cup\Psi)$.
 It then follows that $z\neq x$, as otherwise we would have 
 $x\in\predominants_{\rs'}(\Psi)$, contradicting $\Psi\in\BCSCS(\rs')$.
 Similarly, $z\neq y$.
 Then, if $z\in\Phi$,
 we obtain a contradiction with $\Phi\in\BCSCS(\rs)$, and 
 if $z\in\Psi$,
 we obtain a contradiction with $\Psi\in\BCSCS(\rs')$. 
 Hence $\rs\in\QSAS$ or $\rs'\in\QSAS$.

 (4)
 By part (3), $\qsas{\adding{y}{\sq}{x}}\in\QSAS$. 
 Clearly, $\satmap_\QSAS(\qsas{\adding{y}{\sq}{x}})\subseteq\satmap_\QSAS(\qsas)$.
 Suppose that $\qsms\in \satmap_\QSAS(\qsas)\setminus \satmap_\QSAS(\qsas{\adding{y}{\sq}{x}})$.
 Then we must have $y\not\sq_\qsms x$, and so $x \prec_\qsms y$.
 As a result, $\qsms\in \satmap_\QSAS(\qsas{\adding{x}{\prec}{y}})$.
 Hence, by Proposition~\ref{prop-regrehere}, $\qsas{\adding{x}{\prec}{y}}\in\QSAS$, a contradiction.

 (5) Similarly as for part (4).
\end{proof}

Note that Proposition~\ref{prop-rgih}(4) or Proposition~\ref{prop-rgih}(5) 
implies  Proposition~\ref{prop-rgih}(3), but we preferred to carry out the proof
in smaller steps.  

\section{Closed QS-structures}
\label{sec-inv}
 
\textsc{qsa}-structures can be regarded as 
suitable specifications of \textsc{qs}-order behaviours.
In this section, we will look closer at \textsc{qsa}-structures
which are particularly suitable for, \eg property verification or 
causality analysis in concurrent behaviours.
In the general theory expounded in~\cite{DBLP:series/sci/2022-1020}, a special place is occupied 
by the following structures which enjoy 
properties of this kind.

\begin{definition}[closed \textsc{qsa}-structure]
\label{def-closed-qsas} 
A \textsc{qsa}-structure $\qsas\in\QSAS$ is \emph{closed} if     
\[\qsas=\FA{\Delta_\qsas,\prec_{\satmap_\QSAS(\qsas)},\sq_{\satmap_\QSAS(\qsas)}}\}\;\]
where
\[
\begin{array}{rcl}
\makebox[3cm][r]{\mbox{$\prec_{\satmap_\QSAS(\qsas)}$}} 
&=&\bigcap\{\prec_\qsms\mid\qsms\in\satmap_\QSAS(\qsas)\}
\\
\sq_{\satmap_\QSAS(\qsas)}&=&\bigcap\{\sq_\qsms\mid\qsms\in\satmap_\QSAS(\qsas)\}\;. 
\end{array}
\]
All closed \textsc{qsa}-structures are denoted by $\QSAS^\closed$. 
\EOD
\end{definition}
 
From the general results presented in~\cite{DBLP:series/sci/2022-1020} (Prop.7.5), 
and the fact that $\QSAS$ is a 
set of intersection closed relational structures, it follows that there is a \emph{structure closure}
mapping $\cls_\QSAS:\QSAS\to\QSAS^\closed$
such that the following hold.
\begin{itemize}
\item 
 A closed \textsc{qsa}-structure is the largest \textsc{qsa}-structure 
 which generates a given set of \textsc{qsm}-structures, \ie
 for all $\rs\in\QSAS^\closed$ and $\qsas\in\QSAS$:
\begin{equation}
 \rs\pref \qsas~\wedge~ \rs \neq\qsas
 \implies
 \satmap_\QSAS(\qsas)\neq\satmap_\QSAS(\rs)\;.
\end{equation}

\item
 $\cls_\QSAS$ is a closure mapping in the usual sense, \ie
 for every $\qsas\in\QSAS$:
\begin{equation} 
 \qsas\pref\cls_\QSAS(\qsas) =\cls_\QSAS(\cls_\QSAS(\qsas))\;.
\end{equation}

\item
 Structure closure preserves the
 generated \textsc{qsm}-structures, \ie
 for every $\qsas\in\QSAS$:
\begin{equation}  
 \satmap_\QSAS(\qsas)=\satmap_\QSAS(\cls_\QSAS(\qsas)) \;.
\end{equation}

\item
 Generating the same \textsc{qsm}-structures is equivalent to having the same closure, \ie
 for all $\qsas,\qsas'\in\QSAS$:  
\begin{equation}
\label{hhddh} 
\satmap_\QSAS(\qsas)=\satmap_\QSAS(\qsas') \iff \cls_\QSAS(\qsas)=\cls_\QSAS(\qsas')\;.
\end{equation}

\item
 Structure closure is monotonic, \ie for all $\qsas,\qsas'\in\QSAS$: 
\begin{equation}
\qsas\pref\qsas'\implies \cls_\QSAS(\qsas)\pref\cls_\QSAS(\qsas') \;.
\end{equation}

\item
The structure closure of a given $\qsas\in\QSAS$ is given by:
\begin{equation}
    \cls_\QSAS(\qsas) = \FA{\Delta_\qsas,\prec_{\satmap_\QSAS(\qsas)},\sq_{\satmap_\QSAS(\qsas)}}\;.
\end{equation}

\end{itemize}

The above, generic, characterisation of closed \textsc{qsa}-structures 
is not a practical one because it directly refers
to potentially (very) large set of maximal extensions of \textsc{qsa}-structures. 
To address this, we will now present an alternative algebraic characterisation 
of $\QSAS^\closed$ and $\cls_\QSAS$. 
We start by introducing an alternative definition of $\QSAS^\closed$
which will then be shown to coincide with $\QSAS^\closed$. 
 
\begin{definition}[\textsc{qsc}-structure]
\label{def-rrrrrs5}
A \emph{quasi-stratified closed structure} (or \textsc{qsc}-structure) 
is a structure 
$\qscs = \FA{\Delta,\prec,\sq}$ such that the following hold, for all $x\neq y\in \Delta$:
\begin{equation*}
\label{axioms-nevnr}
\begin{array}{llr@{~~~}c@{~~~}r@{~~~~~}l@{~~~}lr@{~~~}c@{~~~}l@{~~~~~}l@{~~~}l}
 \textsc{qscs{:}1}
 &:
 & 
 x\not\sq x \not\prec x 
 \\ 
 \textsc{qscs{:}2}
 &:
 &
 x\prec y
 & \Longrightarrow
 & y\not\sq x 
 \\ 
 \textsc{qscs{:}3}
 &:
 &
 \BCSCS(\qscs{\adding{x}{\prec}{y}}) \neq\es
 & \Longrightarrow
 & y\sq x 
 \\ 
 \textsc{qscs{:}4}
 &:
 &\BCSCS(\qscs{\adding{x}{\sq}{y}}) \neq\es
 & \Longrightarrow
 & y\prec x 
\end{array}
\end{equation*} 
All \textsc{qsc}-structures are denoted by $\QSCS$. \EOD
\end{definition}
In the rest of this section, we show that $\QSCS=\QSAS^\closed$.

\begin{example}
\label{ex-closed}
\newsavebox\quasistrX
 \sbox{\quasistrX}{%
\begin{tikzpicture}[remember picture, node distance=1.5cm,>=arrow30]
 \node (n1) {$a$};
 \node (n4) [right of=n1] {$d$};
 \node (n2) [below of=n1, yshift=8mm, xshift=7.5mm] {$c$};
 \node (n3) [above of=n1, yshift=-8mm,xshift=7.5mm] {$b$};

 \draw (n1) [->, thick] edge (n2);
 \draw (n3) [->, thick] edge (n4);
 \draw (n1) [->, dashed] edge (n3);
 \draw (n2) [->, dashed] edge (n4);

\begin{pgfonlayer}{background}
 \node (p0) [draw,fill=gray!10,fit=(n1)(n2)(n3)(n4),inner sep=1.1mm,color=gray!10] {};
\end{pgfonlayer}
\end{tikzpicture}%
}

\newsavebox\closedstr
 \sbox{\closedstr}{%
\begin{tikzpicture}[remember picture, node distance=1.5cm,>=arrow30]
 \node (n1) {$a$};
 \node (n4) [right of=n1] {$d$};
 \node (n2) [below of=n1, yshift=8mm, xshift=7.5mm] {$c$};
 \node (n3) [above of=n1, yshift=-8mm,xshift=7.5mm] {$b$};

 \draw (n1) [->, thick] edge (n2);
 \draw (n1) [->, thick] edge (n4);
 \draw (n3) [->, thick] edge (n4);
 \draw (n2) [->, dashed] edge (n4);
 \draw (n1) [->, dashed] edge (n3);

\begin{pgfonlayer}{background}
 \node (p0) [draw,fill=gray!10,fit=(n1)(n2)(n3)(n4),inner sep=1.1mm,color=gray!10] {};
\end{pgfonlayer}
\end{tikzpicture}%
}

In Figure~\ref{fig-closed} one can see a quasi-stratified structure from Example~\ref{ex-max} together with its closure.

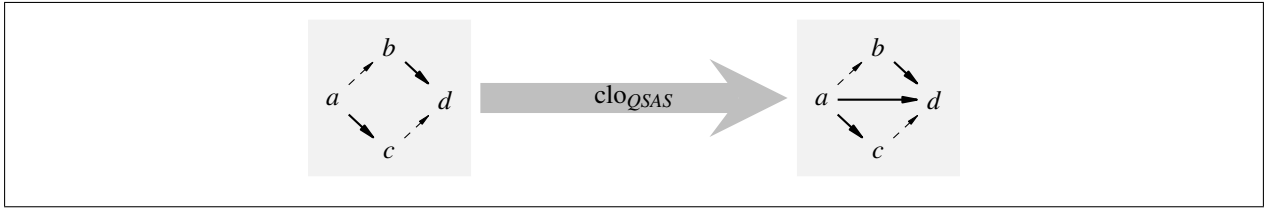
\begin{figure}[ht]
\begin{center}
\begin{tikzpicture}[node distance=6.5cm]
 \node (A) {\usebox{\quasistrX}};
 \node (B) [right of=A] {\usebox{\closedstr}};
\colorarc{lightgray}{4mm}{1mm}{7mm}(A) edge [-stealth] node [color=black] {$\cls_\QSAS$} (B);
 \end{tikzpicture}

 \end{center}
 \caption
{
 A \textsc{qsa}-structure and its closure. We take advantage of the fact that in \textsc{csc}-structures $\prec$ is included in $\sq$ (and only provide solid arcs for such relationships).
}
\label{fig-closed}
 \end{figure}
\end{example}

By Proposition~\ref{prop-jngserub}(1), \textsc{qsc}-structures are projection-closed
and, as shown below, are positioned in-between $\QSMS$ and $\QSAS$.

\begin{proposition} 
\label{prop-inclusions}
$\QSMS\subseteq\QSCS\subseteq\QSAS$.
\end{proposition}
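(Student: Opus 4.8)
The plan is to prove the two inclusions separately. Since $\QSMS=\QSAS^\sat$ by Theorem~\ref{prop-grtger}, the inclusion $\QSMS\subseteq\QSAS$ is already available, so it suffices to establish $\QSMS\subseteq\QSCS$ and $\QSCS\subseteq\QSAS$. Throughout I will use the fact, also from Theorem~\ref{prop-grtger}, that every $\qsms\in\QSMS$ lies in $\QSAS$, hence $\BCSCS(\qsms)=\es$, and the trivial observation that $\rs{\adding{x}{\prec}{y}}=\rs$ whenever $\FA{x,y}\in\prec_\rs$ (and similarly for $\sq$).

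For $\QSMS\subseteq\QSCS$, I would take an arbitrary $\qsms=\FA{\Delta,\prec,\sq}\in\QSMS$ and verify the four conditions of Definition~\ref{def-rrrrrs5}. Condition \textsc{qscs{:}1} follows from \textsc{qsms{:}1} (giving $x\not\sq x$) together with the fact that \textsc{qsms{:}2} makes $x\prec x$ equivalent to the false statement $x\sq x\not\sq x$; and \textsc{qscs{:}2} is the forward direction of \textsc{qsms{:}2}. For \textsc{qscs{:}3}, suppose $x\neq y$ and $\BCSCS(\qsms{\adding{x}{\prec}{y}})\neq\es$. If $\FA{x,y}$ were already in $\prec$, then $\qsms{\adding{x}{\prec}{y}}=\qsms$, so $\BCSCS(\qsms)\neq\es$, contradicting $\qsms\in\QSAS$; hence $x\not\prec y$, and then \textsc{qsms{:}3} forces $y\prec x$ or $x\sq y\sq x$, in either case (using \textsc{qsms{:}2} in the first subcase) yielding $y\sq x$. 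Condition \textsc{qscs{:}4} is handled symmetrically: if $\FA{x,y}\in\sq$ already, then $\qsms{\adding{x}{\sq}{y}}=\qsms$ has empty $\BCSCS$, a contradiction; so $x\not\sq y$, which by \textsc{qsms{:}2,3} rules out both $x\prec y$ (which would force $x\sq y$) and $x\sq y\sq x$, leaving $y\prec x$.

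For $\QSCS\subseteq\QSAS$, take $\qscs=\FA{\Delta,\prec,\sq}\in\QSCS$. By \textsc{qscs{:}1} both $\prec$ and $\sq$ are irreflexive, so $\qscs$ is a relational structure; it remains to show it is \textsc{qs}-acyclic, i.e.\ $\BCSCS(\qscs)=\es$. Suppose otherwise and pick $\Phi\in\BCSCS(\qscs)$. Since $\Phi$ is nonempty and has no pre-dominant, some $x\in\Phi$ satisfies $x\prec y$ or $y\prec x$ for a suitable $y\in\Phi$; relabelling, assume $x\prec y$, and note $x\neq y$ by irreflexivity. As $x\prec y$ already holds, $\qscs{\adding{x}{\prec}{y}}=\qscs$, so $\BCSCS(\qscs{\adding{x}{\prec}{y}})=\BCSCS(\qscs)\neq\es$; then \textsc{qscs{:}3} gives $y\sq x$, contradicting \textsc{qscs{:}2} applied to $x\prec y$. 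Hence $\BCSCS(\qscs)=\es$ and $\qscs\in\QSAS$.

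I do not expect a genuine obstacle here. The only mildly delicate point is the case analysis for \textsc{qscs{:}3,4} in the first inclusion, where one must invoke acyclicity of \textsc{qsm}-structures (via Theorem~\ref{prop-grtger}) precisely to exclude the degenerate situation in which the added arc is already present, after which the trichotomy supplied by \textsc{qsms{:}3} reads off the conclusion. The second inclusion is short once one notices that a $\prec$-arc inside a pre-dominant-free \textsc{csc}-subset makes the hypothesis of \textsc{qscs{:}3} available for free.
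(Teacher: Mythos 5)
Your proof is correct and follows essentially the same route as the paper: both inclusions are verified axiom-by-axiom, with Theorem~\ref{prop-grtger} (via $\QSMS\subseteq\QSAS$, hence $\BCSCS(\qsms)=\es$) used to rule out the degenerate case in \textsc{qscs{:}3,4}, and the second inclusion obtained by exhibiting a $\prec$-pair inside a hypothetical pre-dominant-free \textsc{csc}-subset and contradicting \textsc{qscs{:}2} via \textsc{qscs{:}3}. The extra details you spell out (irreflexivity of $\prec$ from \textsc{qsms{:}1,2}, and why the added arc being already present forces the contradiction) are exactly what the paper leaves implicit.
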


\begin{proof}
($\QSMS\subseteq\QSCS$)
 Let $\qsms = \FA{\Delta,\prec,\sq}\in\QSMS$.
 Then \textsc{qscs{:}1} and \textsc{qscs{:}2} hold for $\qsms$, by \textsc{qsms{:}1}
 and \textsc{qsms{:}2}, respectively.
 To show \textsc{qscs{:}3} for $\qsms$, suppose that $x\neq y\in\Delta$ 
 and
 $\BCSCS(\qsms{\adding{x}{\prec}{y}})$.
 Then, by Theorem~\ref{prop-grtger}, $x\not\prec y$. Hence, by \textsc{qsms{:}2,3}, 
 we obtain $y\sq x$.
 Showing \textsc{qscs{:}4} for $\qsms$ is similar. 
 As a result, $\QSMS\subseteq\QSCS$. 
 
($\QSCS\subseteq\QSAS$)
By \textsc{qscs{:}1}, it suffices to show that if 
$\qscs= \FA{\Delta,\prec,\sq}\in\QSCS$ then $\BCSCS(\qscs)=\es$. 
Indeed, suppose that $\Phi\in\BCSCS(\qscs)$.
 As $|\Phi|=1$ implies $\predominants_\qscs(\Phi)=\Phi\neq \es$,
 we have $|\Phi|\geq 2$. Hence, since $\Phi\in\BCSCS(\qscs)$,
 there are $x\neq y\in \Delta$ such that $x\prec y$.
 Hence, by \textsc{qscs{:}3}, we obtain $y\sq x$, a contradiction with \textsc{qscs{:}2}. 
\end{proof}

The next result identifies two ways in which 
\textsc{qscs}-structures can be extended to \textsc{qsas}-structures.

\begin{proposition}
\label{prop-rju}
Let $\qscs=\FA{\Delta,\prec,\sq}\in\QSCS$ and $x\neq y\in\Delta$. Then: 
\begin{enumerate}

\item $x\not\prec y$ implies $\qscs{\adding{y}{\sq}{x}}\in\QSAS$.

\item $x\not\sq y$ implies $\qscs{\adding{y}{\prec}{x}}\in\QSAS$.
\end{enumerate}
\end{proposition}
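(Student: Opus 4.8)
The plan is to derive both statements as immediate contrapositives of axioms \textsc{qscs{:}4} and \textsc{qscs{:}3}, after interchanging the roles of $x$ and $y$. First I would record a trivial preliminary reduction. Since $\qscs=\FA{\Delta,\prec,\sq}\in\QSCS$, axiom \textsc{qscs{:}1} tells us that $\prec$ and $\sq$ are irreflexive, and as $x\neq y$ the arc $\FA{y,x}$ is not a loop; hence both $\qscs{\adding{y}{\sq}{x}}$ and $\qscs{\adding{y}{\prec}{x}}$ are again relational structures. By Definitions~\ref{def-oernier} and~\ref{def-qsas}, a relational structure belongs to $\QSAS$ precisely when it has no \textsc{csc}-subset without a pre-dominant, that is, when its $\BCSCS(\cdot)$ is empty; so in each part it suffices to show that the relevant $\BCSCS(\cdot)$ is empty.

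For part~(1), I would apply \textsc{qscs{:}4}, which holds for $\qscs$ for every ordered pair of distinct elements of $\Delta$, to the pair $(y,x)$ (legitimate since $y\neq x$), obtaining $\BCSCS(\qscs{\adding{y}{\sq}{x}})\neq\es \Longrightarrow x\prec y$. Its contrapositive combined with the hypothesis $x\not\prec y$ gives $\BCSCS(\qscs{\adding{y}{\sq}{x}})=\es$, hence $\qscs{\adding{y}{\sq}{x}}\in\QSAS$. Part~(2) is entirely symmetric: applying \textsc{qscs{:}3} to the pair $(y,x)$ gives $\BCSCS(\qscs{\adding{y}{\prec}{x}})\neq\es \Longrightarrow x\sq y$, so the hypothesis $x\not\sq y$ forces $\BCSCS(\qscs{\adding{y}{\prec}{x}})=\es$, and therefore $\qscs{\adding{y}{\prec}{x}}\in\QSAS$.

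I do not expect a genuine obstacle here: all the real combinatorial content about when inserting an arc into a closed structure creates a `forbidden' \textsc{csc}-subset is already baked into the axioms \textsc{qscs{:}3}--\textsc{qscs{:}4}, and this proposition merely reads them in the contrapositive direction. The only things to be careful about are (a) keeping the direction of the implications straight --- the axioms say which arcs a closed structure already `anticipates', and it is their contrapositive that certifies the complementary arc can be added without leaving $\QSAS$ --- and (b) the harmless renaming of $x$ and $y$ when instantiating the universally quantified axioms. Recording this proposition here is convenient because it will serve as one of the ingredients in the subsequent proof that $\QSCS=\QSAS^\closed$.
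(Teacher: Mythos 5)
Your proof is correct and matches the paper's own argument: both parts follow by instantiating \textsc{qscs{:}4} and \textsc{qscs{:}3} at the swapped pair $(y,x)$ and reading them in the contrapositive (the paper phrases it as a contradiction, which is the same thing). Your explicit preliminary check that the extended structures remain relational structures is a detail the paper leaves implicit, but it does not change the approach.
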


\begin{proof} 
(1) 
If $\qscs{\adding{y}{\sq}{x}}\notin\QSAS$, then 
$\BCSCS(\qscs{\adding{y}{\sq}{x}})\neq\es$. 
Hence, by \textsc{qscs{:}4}, we obtain $x\prec y$, yielding a contradiction.

(2)
If $\qscs{\adding{y}{\prec}{x}}\notin\QSAS$, then 
$\BCSCS(\qscs{\adding{y}{\prec}{x}})\neq\es$. 
Hence, by \textsc{qscs{:}3}, we obtain $x\sq y$, yielding a contradiction. 
\end{proof}

We will now define a mapping which is intended to provided an alternative
definition of the structure closure mapping for \textsc{qsa}-structures.
To start with, let 
$\mathfrak{F}:\QSAS\to\QSAS$ be a mapping such that, for every $\qsas=\FA{\Delta,\prec,\sq}\in\QSAS$: 
\[
\begin{array}{rcl}
\mathfrak{F}(\qsas)&=&\FA
{\Delta,
\prec\cup\prec',
\sq\cup \sq'
} ~~~~~~~ 
\mbox{where:}
\\[2mm]
\prec'&=&
\{\FA{y,x}\in\Delta\times\Delta\mid x\neq y
~\wedge~ y\not\prec x~\wedge ~
\BCSCS(\qscs{\adding{x}{\sq}{y}})\neq \es\}
\\
\sq'&=&
\{\FA{y,x}\in\Delta\times\Delta\mid x\neq y~\wedge~ 
y\not\sq x~\wedge~\BCSCS(\qscs{\adding{x}{\prec}{y}})\neq \es\}\;.
\end{array}
\]

\begin{proposition}
\label{prop-fheif}
 $\mathfrak{F}$ is a well-defined function such that, for every $\qsas\in\QSAS$:
 \begin{enumerate}
 \item 
$\qsas\pref\mathfrak{F}(\qsas)\pref \cls_\QSAS(\qsas)$.
 \item
 $\mathfrak{F}(\qsas)=\qsas$ iff $\qsas\in\QSCS$.
 \end{enumerate}
\end{proposition}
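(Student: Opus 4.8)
The plan is to prove part~(1) first --- from which the well-definedness of $\mathfrak{F}$, i.e.\ that $\mathfrak{F}(\qsas)\in\QSAS$, will follow for free --- and then dispatch part~(2) by a short bookkeeping argument. Two observations are worth recording at the outset. First, every pair added to $\prec$ has the form $\FA{y,x}$ with $y\not\prec x$, so $\prec'\cap\prec=\es$, and likewise $\sq'\cap\sq=\es$; hence $\mathfrak{F}(\qsas)=\qsas$ holds if and only if $\prec'=\es$ and $\sq'=\es$. Second, $\mathfrak{F}(\qsas)$ is visibly a relational structure: $\prec\cup\prec'$ and $\sq\cup\sq'$ are irreflexive because $\prec,\sq$ are and $\prec',\sq'$ contain no diagonal pairs.

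The core step is the inclusion $\mathfrak{F}(\qsas)\pref\cls_\QSAS(\qsas)$. Since $\cls_\QSAS(\qsas)=\FA{\Delta_\qsas,\prec_{\satmap_\QSAS(\qsas)},\sq_{\satmap_\QSAS(\qsas)}}$ with $\prec_{\satmap_\QSAS(\qsas)}=\bigcap\{\prec_\qsms\mid\qsms\in\satmap_\QSAS(\qsas)\}$ and similarly for $\sq$, and since $\qsas\pref\cls_\QSAS(\qsas)$, it suffices to show $\prec'\subseteq\prec_{\satmap_\QSAS(\qsas)}$ and $\sq'\subseteq\sq_{\satmap_\QSAS(\qsas)}$. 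Take $\FA{y,x}\in\prec'$, so $x\neq y$, $y\not\prec x$, and $\BCSCS(\qsas{\adding{x}{\sq}{y}})\neq\es$, i.e.\ $\qsas{\adding{x}{\sq}{y}}\notin\QSAS$. By Proposition~\ref{prop-rgih}(5), $\qsas{\adding{y}{\prec}{x}}\in\QSAS$ and $\satmap_\QSAS(\qsas{\adding{y}{\prec}{x}})=\satmap_\QSAS(\qsas)$; hence every $\qsms\in\satmap_\QSAS(\qsas)$ is an extension of $\qsas{\adding{y}{\prec}{x}}$ and so satisfies $y\prec_\qsms x$, giving $\FA{y,x}\in\prec_{\satmap_\QSAS(\qsas)}$. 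The inclusion $\sq'\subseteq\sq_{\satmap_\QSAS(\qsas)}$ is proved symmetrically, invoking Proposition~\ref{prop-rgih}(4) in place of~(5). (If $\satmap_\QSAS(\qsas)=\es$ the inclusions hold vacuously; in fact, by Theorem~\ref{prop-grtger} and the finiteness argument in its proof, $\satmap_\QSAS(\qsas)\neq\es$.)

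With this in hand, part~(1) and well-definedness are immediate: $\mathfrak{F}(\qsas)$ is a relational structure, and applying Proposition~\ref{prop-jngserub}(1) to $\mathfrak{F}(\qsas)\pref\cls_\QSAS(\qsas)$ gives $\BCSCS(\mathfrak{F}(\qsas))\subseteq\BCSCS(\cls_\QSAS(\qsas))=\es$, so $\mathfrak{F}(\qsas)\in\QSAS$ and $\mathfrak{F}$ is a well-defined map $\QSAS\to\QSAS$; the remaining relation $\qsas\pref\mathfrak{F}(\qsas)$ is trivial since $\mathfrak{F}$ only adds arcs. For part~(2), I would first check that \textsc{qscs{:}1} and \textsc{qscs{:}2} hold for every $\qsas\in\QSAS$ --- \textsc{qscs{:}1} because a relational structure has irreflexive relations, and \textsc{qscs{:}2} because $x\prec y\wedge y\sq x$ would make $\{x,y\}$ a strongly connected subgraph of $(\sq\cup\prec)$ with no pre-dominant, contradicting $\qsas\in\QSAS$. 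Hence $\qsas\in\QSCS$ iff $\qsas$ satisfies \textsc{qscs{:}3} and \textsc{qscs{:}4}. Now \textsc{qscs{:}4}, read for $\qsas$, asserts precisely that for all $x\neq y$, $\BCSCS(\qsas{\adding{x}{\sq}{y}})\neq\es$ implies $y\prec x$, i.e.\ that no pair meets both defining conditions of $\prec'$, i.e.\ $\prec'=\es$; symmetrically, \textsc{qscs{:}3} is equivalent to $\sq'=\es$. Combining with the opening observation, $\mathfrak{F}(\qsas)=\qsas\iff\prec'=\sq'=\es\iff\qsas\in\QSCS$.

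There is no single hard obstacle here: all the substantive content has been off-loaded to Proposition~\ref{prop-rgih}(4),(5). The only points needing care are recognising that well-definedness is a by-product of the first inclusion rather than a separate verification, and the elementary remark $\prec'\cap\prec=\sq'\cap\sq=\es$, which lets ``$\mathfrak{F}(\qsas)=\qsas$'' be rephrased as ``$\prec'=\sq'=\es$'' and thereby makes the link to \textsc{qscs{:}3,4} transparent.
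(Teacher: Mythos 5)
Your proposal is correct and follows essentially the same route as the paper: the substance in part (1) is exactly the appeal to Proposition~\ref{prop-rgih}(4),(5) to show that each added pair is forced in every saturation (you read this off directly from the intersection formula for $\cls_\QSAS$, where the paper routes it through Eq.\eqref{hhddh}, a cosmetic difference), with well-definedness falling out of $\mathfrak{F}(\qsas)\pref\cls_\QSAS(\qsas)$ via Proposition~\ref{prop-jngserub}(1). Part (2) is handled in the paper by a one-line remark, and your expansion --- noting that \textsc{qscs{:}1,2} hold automatically for any \textsc{qsa}-structure and that \textsc{qscs{:}3,4} are precisely the statements $\sq'=\es$ and $\prec'=\es$ --- is a correct filling-in of that remark.
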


\begin{proof}
(1)
Suppose that $x\prec'y$. Then $x\neq y$ and 
$\BCSCS(\qsas{\adding{y}{\sq}{x}})\neq\es$.
Hence, by 
Proposition~\ref{prop-rgih}(5), 
\[
\qsas{\adding{x}{\prec}{y}}\in\QSAS 
~~\mbox{and}~~  \satmap_\QSAS(\qsas{\adding{x}{\prec}{y}})=\satmap_\QSAS(\qsas)\;.
\]
Thus, by Eq.\eqref{hhddh}, 
we have
\[
\qsas{\adding{x}{\prec}{y}}\pref \cls_\QSAS(\qsas{\adding{x}{\prec}{y}})=\cls_\QSAS(\qsas)\;. 
\]
Similarly, we may show that $x\sq'y$ implies 
$\qsas{\adding{x}{\sq}{y}}\pref \cls_\QSAS(\qsas)$.
As result, we have $\qsas\pref\mathfrak{F}(\qsas)\pref \cls_\QSAS(\qsas)$.
Thus $\mathfrak{F}(\qsas)\in\QSAS$, and so $\mathfrak{F}$ is well-defined.

(2) Follows directly from Definition~\ref{def-rrrrrs5} and the definition of $\mathfrak{F}$.
\end{proof}

\begin{theorem} 
\label{th-trwrt} 
 The mapping $\qsasTOqscs:\QSAS\to\QSCS$, 
 for every $\qsas\in\QSAS$ given by: 
 \[
 \qsasTOqscs(\qsas)
 =\mathfrak{F}^{2\cdot|\Delta_\qsas|^2}(\qsas)
 \]
is the structure closure of $\QSAS$, \ie 
$\QSCS=\QSAS^\closed$ and 
$\qsasTOqscs=\cls_\QSAS$. 
\end{theorem}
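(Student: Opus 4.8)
The plan is to prove the two claims separately: first the set equality $\QSCS=\QSAS^\closed$, and then, leaning on it, the functional identity $\qsasTOqscs=\cls_\QSAS$.

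For the inclusion $\QSAS^\closed\subseteq\QSCS$ I would argue directly: if $\rs\in\QSAS^\closed$ then $\rs=\cls_\QSAS(\rs)$, so Proposition~\ref{prop-fheif}(1) gives $\rs\pref\mathfrak{F}(\rs)\pref\cls_\QSAS(\rs)=\rs$, whence $\mathfrak{F}(\rs)=\rs$ (as $\pref$ is a partial order on structures over a fixed domain), and then $\rs\in\QSCS$ by Proposition~\ref{prop-fheif}(2). The reverse inclusion $\QSCS\subseteq\QSAS^\closed$ is the substantive part. Fix $\qscs=\FA{\Delta,\prec,\sq}\in\QSCS$; by Proposition~\ref{prop-inclusions} it lies in $\QSAS$, and by finiteness together with $\QSMS=\QSAS^\sat$ (Theorem~\ref{prop-grtger}) the set $\satmap_\QSAS(\qscs)=\Ext{\QSMS}{\qscs}$ is nonempty. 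Since $\cls_\QSAS(\qscs)=\FA{\Delta,\prec_{\satmap_\QSAS(\qscs)},\sq_{\satmap_\QSAS(\qscs)}}$, it is enough to prove $\prec_{\satmap_\QSAS(\qscs)}\subseteq\prec$ and $\sq_{\satmap_\QSAS(\qscs)}\subseteq\sq$ (the opposite inclusions being automatic, as every maximal extension of $\qscs$ extends $\qscs$). So suppose $x\neq y$ and $x\not\prec y$: by Proposition~\ref{prop-rju}(1), $\qscs{\adding{y}{\sq}{x}}\in\QSAS$, hence it admits a maximal extension $\qsms\in\QSMS$; then $\qsms\in\satmap_\QSAS(\qscs)$ with $y\sq_\qsms x$, so \textsc{qsms{:}2} forbids $x\prec_\qsms y$, witnessing $\FA{x,y}\notin\prec_{\satmap_\QSAS(\qscs)}$. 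The symmetric argument, using Proposition~\ref{prop-rju}(2) and again \textsc{qsms{:}2}, handles $\sq$. Therefore $\qscs=\cls_\QSAS(\qscs)\in\QSAS^\closed$, and so $\QSCS=\QSAS^\closed$.

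For $\qsasTOqscs=\cls_\QSAS$, fix $\qsas\in\QSAS$ and put $n=|\Delta_\qsas|$. Since $\mathfrak{F}$ maps $\QSAS$ into itself and, by Proposition~\ref{prop-fheif}(1), $\qsas'\pref\mathfrak{F}(\qsas')$ for every $\qsas'\in\QSAS$, the iterates form a $\pref$-increasing chain $\qsas\pref\mathfrak{F}(\qsas)\pref\mathfrak{F}^{2}(\qsas)\pref\cdots$ inside $\QSAS$. The quantity $|\prec_{\cdot}|+|\sq_{\cdot}|$ is non-decreasing along this chain, is bounded by $2n(n-1)\le 2n^{2}$, and increases strictly at every step that is not a fixed point of $\mathfrak{F}$, so a fixed point is reached after at most $2n^{2}$ steps; hence $\qscs^{*}:=\mathfrak{F}^{2n^{2}}(\qsas)=\mathfrak{F}(\qscs^{*})$, which by Proposition~\ref{prop-fheif}(2) gives $\qscs^{*}\in\QSCS$ (so $\qsasTOqscs$ is well-typed). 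Next I would show $\cls_\QSAS(\mathfrak{F}(\qsas'))=\cls_\QSAS(\qsas')$ for every $\qsas'\in\QSAS$: from $\qsas'\pref\mathfrak{F}(\qsas')\pref\cls_\QSAS(\qsas')$ together with monotonicity and idempotence of $\cls_\QSAS$ we get $\cls_\QSAS(\qsas')\pref\cls_\QSAS(\mathfrak{F}(\qsas'))\pref\cls_\QSAS(\cls_\QSAS(\qsas'))=\cls_\QSAS(\qsas')$. Iterating, $\cls_\QSAS(\qscs^{*})=\cls_\QSAS(\qsas)$. Finally, $\qscs^{*}\in\QSCS=\QSAS^\closed$ means $\qscs^{*}$ is closed, i.e.\ $\qscs^{*}=\cls_\QSAS(\qscs^{*})=\cls_\QSAS(\qsas)$; thus $\qsasTOqscs(\qsas)=\cls_\QSAS(\qsas)$ for all $\qsas\in\QSAS$, and together with the set equality this is the claim.

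The step I expect to be the main obstacle is the inclusion $\QSCS\subseteq\QSAS^\closed$ — concretely, showing that whenever a pair of events is not already related by $\prec$ (resp.\ $\sq$) in a \textsc{qsc}-structure, there is a \textsc{qsm}-extension in which it is still not so related, so that the intersection defining $\cls_\QSAS$ creates no new relationships. Propositions~\ref{prop-rju} and~\ref{prop-grtger}, combined with axiom \textsc{qsms{:}2}, supply exactly those witnessing extensions; once they are available, the remaining work (the stabilisation bound for $\mathfrak{F}$ and the invariance of $\cls_\QSAS$ under $\mathfrak{F}$) is routine.
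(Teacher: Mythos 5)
Your proposal is correct, and it reaches the conclusion by a somewhat different route than the paper. The paper proves the theorem by first establishing the $2\cdot|\Delta_\qsas|^2$ stabilisation of $\mathfrak{F}$ (so that $\qsasTOqscs$ lands in $\QSCS$ by Proposition~\ref{prop-fheif}(2)) and then invoking, as a black box, the general recipe of~\cite{DBLP:series/sci/2022-1020} (Prop.~7.8), verifying its hypotheses: intersection-closedness of $\QSAS$, $\QSCS\subseteq\QSAS$, monotonicity and non-decreasingness of the map, fixedness on $\QSCS$, and the existence of witnessing maximal extensions, the last point being exactly your combination of Theorem~\ref{prop-grtger}, Proposition~\ref{prop-rju} and \textsc{qsms{:}2}. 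You instead bypass Prop.~7.8 and argue directly from the listed properties of $\cls_\QSAS$ (the Prop.~7.5 package): you prove $\QSAS^\closed\subseteq\QSCS$ by the sandwich $\rs\pref\mathfrak{F}(\rs)\pref\cls_\QSAS(\rs)=\rs$ plus Proposition~\ref{prop-fheif}(2), prove $\QSCS\subseteq\QSAS^\closed$ by exhibiting, for each unrelated pair, a maximal extension in which it stays unrelated (Proposition~\ref{prop-rju} + Theorem~\ref{prop-grtger} + \textsc{qsms{:}2}), and then identify $\qsasTOqscs$ with $\cls_\QSAS$ via the invariance $\cls_\QSAS(\mathfrak{F}(\qsas))=\cls_\QSAS(\qsas)$ (monotonicity and idempotence of the closure) together with the same stabilisation bound. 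The technical core is identical in both proofs --- the witnessing-extension argument is the only substantive step --- so what differs is packaging: the paper's version is shorter and stays within the reusable general framework, while yours is self-contained modulo the Prop.~7.5 properties and makes explicit how the witnessing extensions force the intersection of saturations to add nothing. The only cosmetic omissions (handling the reflexive pairs $x=y$ in the intersection, and noting $\satmap_\QSAS(\qscs)\neq\es$ before taking intersections, which you do flag) are immediate from \textsc{qsms{:}1,2} and finiteness, so there is no gap.
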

 
\begin{proof} 
The number of relationships which the successive applications 
of $\mathfrak{F}$ can add is equal to $2\cdot|\Delta_\qsas|^2$.
Hence, $\mathfrak{F}(\qsasTOqscs(\qsas))=\qsas$,
Hence, by Proposition~\ref{prop-fheif}(2), $\qsasTOqscs(\qsas)\in\QSCS$
and is $\qsasTOqscs$ is well-defined.

The theorem is proven by applying a general result 
--- allowing one to deal with the closure operation and closed relational structures 
without referring to intersections of the
maximal extensions of structures --- 
formulated in~\cite{DBLP:series/sci/2022-1020} as Prop.7.8 
and reproduced below.

\begin{quote}  
\emph{
Let $\RR$ be an intersection-closed set of relational structures,
$\SSS\subseteq \RR$, and $f:\RR\to\SSS$
be a monotonic (\ie $\rs\pref\rs'\implies f(\rs)\pref f(\rs')$)
and 
non-decreasing (\ie $\rs\pref f(\rs)$) function.
Moreover, for all $\rs=\FA{\Delta,\prec,\sq}\in\SSS$ and $x\neq y\in\Delta$,
we have:
\begin{itemize}
\item 
$f(\rs)\pref \rs$.
\item 
If $x\not \prec y$ (or $x\not \sq y$), then there is 
$\overline{\rs}\in \satmap_\RR(\rs)$
such that $x\not \prec_{\overline{\rs}}y$ (resp.\ $x\not \sq_{\overline{\rs}} y$).
\end{itemize}
Then $f$ is the structure closure of $\RR$,
\ie $\SSS=\RR^\closed$ and $f(\rs)=\cls_\RR(\rs)$, for every $\rs\in\RR$. 
}
\end{quote}

We can apply the above result after observing the following: 
\begin{itemize} 
 \item 
 $\QSAS$ are intersection-closed by Proposition~\ref{prop-radgih}.
 
\item 
 $\QSCS\subseteq\QSAS$ by Proposition~\ref{prop-inclusions}.

\item
 $\qsasTOqscs$ is monotonic 
 (\ie $\qsas\pref\qsas'\implies \qsasTOqscs(\qsas)\pref \qsasTOqscs(\qsas')$)
 by Proposition~\ref{prop-jngserub}.

\item 
 $\qsasTOqscs$ is non-decreasing (\ie $\qsas\pref \qsasTOqscs(\qsas)$)
 by Proposition~\ref{prop-fheif}(1).

\item 
 For every $\qscs\in\QSCS$,
we have 
$\qsasTOqscs(\qscs)\pref\qscs$, as $\qsasTOqscs(\qscs)=\qscs$ because in such a case $\prec'=\sq'=\es$
in the definition of $\mathfrak{F}$.
 
\item 
 For all $\qscs=\FA{\Delta,\prec,\sq}\in\QSCS$ and $x\neq y\in\Delta$,
if $x\not \prec y$ (or $x\not \sq y$), then there is 
$\qsms\in \satmap_\QSAS(\qscs)$
such that $x\not \prec_\qsms y$ (resp.\ $x\not \sq_\qsms y$).
This follows from Theorem~\ref{prop-grtger} and Proposition~\ref{prop-rju}.
\end{itemize}
Hence, $\QSCS=\QSAS^\closed$ and 
$\qsasTOqscs=\cls_\QSAS$.
\end{proof} 

Theorem~\ref{th-trwrt} is a generalisations of Szpilrajn's Theorem~\cite{Szp30} 
stating that a partial order is the intersection of all
its total order extensions.
Note the complete algebraic characterisation of $\QSAS^\closed$ given in
Definition~\ref{def-rrrrrs5} admits redundant axioms. 
Although it is an open problem what might constitute a minimal 
and complete axiomatisation, the next result can be seen as a precursor for the development of
a minimal complete axiomatisation of closed \textsc{qsa}-structures.

\begin{proposition}
\label{prop-nevnr}
Let $\qscs = \FA{\Delta,\prec,\sq}\in\QSCS$ 
and $x,y,z,t\in \Delta$.
Then:
\begin{subequations}
\begin{align} 
\makebox[5cm][r]{\mbox{$x\prec y$}} 
~\Longrightarrow~ 
x\sq y 
\label{props-nevnr-b} 
\\
\makebox[5cm][r]{\mbox{$x\prec y \sq z \prec t$}} 
~\Longrightarrow~ 
x\prec t 
\label{props-nevnr-c}
\\
\makebox[5cm][r]{\mbox{$x\sq y\prec z ~\vee~ x\prec y\sq z$}} 
~\Longrightarrow~ 
x\sq z 
\label{props-nevnr-d}
\\
\makebox[5cm][r]{\mbox{$x\sq y\prec z\sq t\neq x$}} 
~\Longrightarrow~ x\sq t 
\label{props-nevnr-e}
\\
\makebox[5cm][r]{\mbox{$x\sq z\prec y\sq x $}} 
~\Longrightarrow~ z\sq x\sq y 
\label{props-nevnr-f}
\\
\makebox[5cm][r]{\mbox{$t\prec x\sq z\prec y\sq x$}} 
~\Longrightarrow~ 
t\prec z~\wedge~t\prec y 
\label{props-nevnr-g}
\\
\makebox[5cm][r]{\mbox{$ x\sq z\prec y\sq x\prec t$}} 
~\Longrightarrow~ 
z\prec t ~\wedge~ y\prec t 
\label{props-nevnr-h} 
\\
\makebox[5cm][r]{\mbox{$x\neq t\sq y \sq x \sq z\prec y$}} 
~\Longrightarrow~ 
t\sq x 
\label{props-nevnr-i} 
\\
\makebox[5cm][r]{\mbox{$z\prec y\sq x\sq z\sq t\neq x$}}
~\Longrightarrow~ 
x\sq t 
\label{props-nevnr-j} 
\\
\makebox[5cm][r]{\mbox{$x\prec z\sq y ~\wedge~ x\sq t\prec y$}} 
~\Longrightarrow~ 
x\prec y 
\label{props-nevnr-l} 
\end{align}
\end{subequations} 
\end{proposition}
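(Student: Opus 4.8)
The plan is to reduce each implication to a statement about the maximal extensions of $\qscs$ and then recover it by intersection. First I would invoke Theorem~\ref{th-trwrt}, giving $\QSCS=\QSAS^\closed$, so $\qscs$ is a \emph{closed} \textsc{qsa}-structure; by Definition~\ref{def-closed-qsas} this means $\prec_\qscs=\bigcap\{\prec_\qsms\mid\qsms\in\satmap_\QSAS(\qscs)\}$ and $\sq_\qscs=\bigcap\{\sq_\qsms\mid\qsms\in\satmap_\QSAS(\qscs)\}$, where $\satmap_\QSAS(\qscs)=\Ext{\QSMS}{\qscs}$ by Theorem~\ref{prop-grtger}, and this set is nonempty since $\qscs\in\QSAS$ (Proposition~\ref{prop-inclusions}) and $\QSAS$ is finite (Proposition~\ref{prop-regrehere}). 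Because $\qscs\pref\qsms$ for every such $\qsms$, each $\prec$- or $\sq$-atom appearing in the antecedent of one of \eqref{props-nevnr-b}--\eqref{props-nevnr-l} also holds in $\qsms$; and since every consequent is a conjunction of $\prec$- or $\sq$-atoms, it suffices to prove each implication inside an arbitrary $\qsms=\FA{\Delta,\prec,\sq}\in\satmap_\QSAS(\qscs)$, after which intersecting over all such $\qsms$ gives the implication for $\qscs$.

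Inside a fixed $\qsms$, I would use Proposition~\ref{prop-siveiv}: $\FA{\Delta,\prec}$ is a \textsc{qs}-order and $\sq=\{\FA{x,y}\in\Delta\times\Delta\mid y\not\prec x\neq y\}$, so $x\sq y$ means exactly $x\neq y$ and $y\not\prec x$, we have $\prec\subseteq\sq$, and $\prec$ satisfies the interval-type axiom \textsc{qso{:}2} of Theorem~\ref{prop-axioms_qso}. Then \eqref{props-nevnr-b} is immediate from $\prec\subseteq\sq$. For a consequent of the form $u\prec v$ I would apply \textsc{qso{:}2} to the two $\prec$-atoms of the antecedent, discard every disjunct that asserts some $p\prec q$ with $q\sq p$ in the antecedent (impossible, since $q\sq p$ says $p\not\prec q$) or that violates acyclicity, and read off $u\prec v$ from each surviving disjunct, at worst after one use of transitivity. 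For a consequent of the form $u\sq v$ the side condition $u\neq v$ is supplied by the antecedent, so I would argue by contradiction: assume $v\prec u$, apply \textsc{qso{:}2} to $v\prec u$ together with a $\prec$-atom of the antecedent (or, in \eqref{props-nevnr-d} and \eqref{props-nevnr-f}, just chain by transitivity), and check that every disjunct clashes with a $\sq$-atom of the antecedent or with acyclicity.

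Concretely, I would dispatch the ten cases as follows: \eqref{props-nevnr-c} from \textsc{qso{:}2} applied to $x\prec y$ and $z\prec t$ (the disjuncts forcing $z\prec y$ die against $y\sq z$); \eqref{props-nevnr-d} and \eqref{props-nevnr-f} by transitivity alone; \eqref{props-nevnr-e} from \textsc{qso{:}2} applied to $y\prec z$ and $t\prec x$ (every disjunct contradicting $x\sq y$ or $z\sq t$); \eqref{props-nevnr-g} from \textsc{qso{:}2} applied to $t\prec x$ and $z\prec y$ (only $t\prec z\wedge t\prec y$ survives $x\sq z$ and $y\sq x$); \eqref{props-nevnr-h} from \textsc{qso{:}2} applied to $z\prec y$ and $x\prec t$ (only $y\prec t\wedge z\prec t$ survives); \eqref{props-nevnr-i} from \textsc{qso{:}2} applied to $z\prec y$ and $x\prec t$ (every disjunct contradicting one of $y\sq x$, $x\sq z$, $t\sq y$); \eqref{props-nevnr-j} from \textsc{qso{:}2} applied to $z\prec y$ and $t\prec x$ (every disjunct contradicting one of $x\sq z$, $z\sq t$, $y\sq x$); and \eqref{props-nevnr-l} from \textsc{qso{:}2} applied to $x\prec z$ and $t\prec y$ (the disjuncts surviving $x\sq t$ and $z\sq y$ all yield $x\prec y$).

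I do not expect a genuine obstacle: once the reduction to the \textsc{qsm}-extensions is in place, each case is one application of \textsc{qso{:}2} plus transitivity. The points needing care are the orientation of $\sq$ (in a \textsc{qs}-order, $p\sq q$ is precisely $p\neq q\wedge q\not\prec p$, which is what eliminates the ``wrong'' \textsc{qso{:}2}-disjuncts), verifying that each consequent really is a conjunction of atoms so that the ``true in every maximal extension'' $\Rightarrow$ ``true in $\qscs$'' step via intersection is sound, and choosing the right substitution into \textsc{qso{:}2} in each of the ten cases --- this bookkeeping being the most error-prone part.
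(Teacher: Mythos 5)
Your proof is correct, and I checked the case-by-case claims: in a \textsc{qsm}-structure the translation $u\sq v\iff v\not\prec u\wedge u\neq v$ together with \textsc{qso{:}2} and transitivity does dispatch each of \eqref{props-nevnr-b}--\eqref{props-nevnr-l} exactly as you indicate (e.g.\ in \eqref{props-nevnr-g} only the disjunct $t\prec z\wedge t\prec y$ survives $x\sq z$ and $y\sq x$; in \eqref{props-nevnr-e}, \eqref{props-nevnr-i}, \eqref{props-nevnr-j} all five disjuncts die, refuting the assumed reverse precedence). The reduction step is also sound: $\qscs\pref\qsms$ transfers the antecedent atoms to every $\qsms\in\satmap_\QSAS(\qscs)$, the consequents are conjunctions of atoms, and by Theorem~\ref{th-trwrt} and Definition~\ref{def-closed-qsas} the relations of $\qscs$ are the intersections over $\satmap_\QSAS(\qscs)=\Ext{\QSMS}{\qscs}$ (Theorem~\ref{prop-grtger}), which is nonempty by Proposition~\ref{prop-regrehere}. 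However, your route is genuinely different from the paper's. The paper argues directly from the axioms of Definition~\ref{def-rrrrrs5}: for each implication it exhibits a concrete \textsc{csc}-subset without pre-dominant in the structure augmented by the relationship one wants to force (e.g.\ $\{x,y,z,t\}\in\BCSCS(\qscs{\adding{t}{\sq}{x}})$ for \eqref{props-nevnr-c}) and then invokes \textsc{qscs{:}3} or \textsc{qscs{:}4}, using Proposition~\ref{prop-inclusions} or \textsc{qscs{:}1,2} only to supply the needed inequalities. That proof is purely syntactic in the \textsc{qscs} axioms, which matters for the paper's subsequent discussion of a minimal complete axiomatisation; your proof instead buys mechanical, easily checkable case analysis at the cost of invoking the heavy representation results (Theorems~\ref{prop-grtger} and~\ref{th-trwrt}) --- legitimate here, since the proposition is stated after them, so there is no circularity.
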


\begin{proof} 
The proofs of (\ref{props-nevnr-h}) and (\ref{props-nevnr-j}) are
similar to the proofs of (\ref{props-nevnr-g}) and (\ref{props-nevnr-i}), respectively.
\begin{itemize}[leftmargin=+.4in]
\item[(\ref{props-nevnr-b})] 
 By \textsc{qscs{:}1}, $x\neq y$. 
 Then 
 $\{x,y\}\in\BCSCS(\qscs{\adding{y}{\prec}{x}})$. 
 Hence, by \textsc{qscs{:}3}, $x\sq y$.
 
\item[(\ref{props-nevnr-c})] 
 By Proposition~\ref{prop-inclusions}, $x\neq t$. Then 
 $\{x,y,z,t\} \in\BCSCS(\qscs{\adding{t}{\sq}{x}})$. 
 Hence, by \textsc{qscs{:}4}, $x\prec t$.
 
\item[(\ref{props-nevnr-d})] 
 By \textsc{qscs{:}2}, $x\neq z$. 
 Then 
 $\{x,y,z\} \in\BCSCS(\qscs{\adding{z}{\prec}{x}})$. 
 Hence, by \textsc{qscs{:}3}, $x\sq z$.
 
\item[(\ref{props-nevnr-e})] 
 Then
 $\{x,y,z,t\} \in\BCSCS(\qscs{\adding{t}{\prec}{x}})$. 
 Hence, by \textsc{qscs{:}3}, $x\sq t$.
 
\item[(\ref{props-nevnr-f})] 
 By \textsc{qscs{:}1}, $x\neq y$.
 Then 
 $\{x,z,y\}\in \BCSCS(\qscs{\adding{y}{\prec}{x}})$. 
 Hence, by \textsc{qscs{:}3}, $x\sq y$. \\
 Similarly, we can show that $z\sq x$.
 
\item[(\ref{props-nevnr-g})] 
 By \textsc{qscs{:}2}, $t\neq z$. 
 Then
 $\{x,y,z,t\}\in\BCSCS(\qscs{\adding{z}{\sq}{t}})$. 
 Hence, by \textsc{qscs{:}4}, $t\prec z$.\\
 Similarly, we can show that $t\prec y$.
 
\item[(\ref{props-nevnr-i})] 
 Since $t\neq x$,
 $\{x,y,z,t\}\in\BCSCS(\qscs{\adding{x}{\prec}{t}})$. 
 Hence, by \textsc{qscs{:}3}, $t\sq x$.
 
\item[(\ref{props-nevnr-l})] 
 By \textsc{qscs{:}2}, $x\neq y$.
 Then
 $\{x,y,z,t\} \in\BCSCS(\qscs{\adding{y}{\sq}{x}})$. 
 Hence, by \textsc{qscs{:}3}, $x\prec y$. 
\end{itemize}
\end{proof}

For example, Eqs.\eqref{props-nevnr-b}--\eqref{props-nevnr-e}
are part of the axiomatisation of complete relational structures in~\cite{JKKM-PN2024}.
We hypothesise that \textsc{qscs{:}1,2} together with
Eqs.\eqref{props-nevnr-b}--\eqref{props-nevnr-l} constitute 
a minimal and complete set of axioms which use up to four events.

\begin{remark}
Consider again the \textsc{qsa}-structure from Example~\ref{ex-max}, and analyse all its \textsc{qsm}-structure extensions (saturations) together with their representations
both as set of intervals and as \textsc{qs}-sequences (see Figure~\ref{fig-allmax}).
It is easy to check that in all the saturations we have an additional relationship $a\prec d$, which is consistent with the definition of the closure and Example~\ref{ex-closed}.
Moreover, relationship between $c$ and $d$ varies 
depending on the saturation, and takes all possible values. 
The set of \textsc{qsm}-structure extensions additionally verifies the generalisation of Szpilrajn's Theorem, demonstrating that the closure of a \textsc{qsa}-structure is the intersection of all its saturations.
\input{figures/allmax.tex}
\eod
\end{remark}

We end providing additional properties of \textsc{qsc}-structures.

\begin{proposition}
\label{prop-regergeee}
Let $\qscs= \FA{\Delta,\prec,\sq}\in\QSCS$, $\Phi\in\CSCS(\qscs)$ and $x,y,z\in\Delta$.
Then:
\begin{enumerate}
\item 
$x\sq y\prec z\sq x$ 
 implies
 $\{x,y,z\}\in\CSCS(\qscs)$ and $\{x\}=\predominants_\qscs(\{x,y,z\})$.
 \item 
 $\predominants_\qscs(\Phi) = \{x,y\} \land x\neq y$
 implies 
 $x\sq y \sq x$.
\end{enumerate} 
\end{proposition}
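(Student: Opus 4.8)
The plan is to prove the two parts independently, in each case reducing to the defining axioms \textsc{qscs:1}--\textsc{qscs:4} of \textsc{qsc}-structures, while also using that $\prec_\qscs\subseteq\sq_\qscs$ holds in such structures (Eq.\eqref{props-nevnr-b}) and that $\CSCS$ and $\BCSCS$ are monotone under adding arcs (Proposition~\ref{prop-jngserub}(1)).

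For part~(1): starting from $x\sq y\prec z\sq x$, I would first note that $x,y,z$ are pairwise distinct by the irreflexivity contained in \textsc{qscs:1}. Since $\prec_\qscs\subseteq\sq_\qscs$, the three edges $x\sq y$, $y\sq z$ (from $y\prec z$), $z\sq x$ form a directed triangle in $\FA{\{x,y,z\},(\sq{\cup}\prec)|_{\{x,y,z\}\times\{x,y,z\}}}$, so that graph is strongly connected and hence $\{x,y,z\}\in\CSCS(\qscs)$. Next I would apply Proposition~\ref{prop-nevnr}, Eq.\eqref{props-nevnr-f}: after matching its pattern $x\sq z\prec y\sq x$ to our chain (with its ``$z$'' playing the role of our $y$ and its ``$y$'' the role of our $z$), it yields the extra relationships $y\sq x$ and $x\sq z$. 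Then \textsc{qscs:2} forbids $x\prec y$ (contradicts $y\sq x$), $y\prec x$ (contradicts $x\sq y$), $x\prec z$ (contradicts $z\sq x$) and $z\prec x$ (contradicts $x\sq z$), while $x\not\prec x$ by \textsc{qscs:1}; hence $x\in\predominants_\qscs(\{x,y,z\})$. Finally, $y\prec z$ witnesses that neither $y$ nor $z$ is a pre-dominant of $\{x,y,z\}$, so $\predominants_\qscs(\{x,y,z\})=\{x\}$.

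For part~(2): suppose $\Phi\in\CSCS(\qscs)$ with $\predominants_\qscs(\Phi)=\{x,y\}$ and $x\neq y$. The idea is to invoke \textsc{qscs:3} once in each direction. Form $\rs=\qscs{\adding{y}{\prec}{x}}$. Since $\Phi$ is already strongly connected for $\sq_\qscs{\cup}\prec_\qscs$ and we only enlarged this relation, $\Phi\in\CSCS(\rs)$. I would then check that $\Phi$ has no pre-dominant in $\rs$: every $w\in\Phi\setminus\{x,y\}$ already had a $\prec_\qscs$-neighbour inside $\Phi$ (as $w\notin\predominants_\qscs(\Phi)$), and this persists in $\rs$; and both $x$ and $y$ acquire the $\prec_\rs$-neighbour supplied by the new arc $y\prec x$. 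Hence $\Phi\in\BCSCS(\rs)$, so $\BCSCS(\qscs{\adding{y}{\prec}{x}})\neq\es$, and \textsc{qscs:3} applied to the ordered pair $(y,x)$ gives $x\sq y$. The symmetric argument with $\qscs{\adding{x}{\prec}{y}}$ gives $y\sq x$, whence $x\sq y\sq x$.

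I do not anticipate a genuine obstacle; the two delicate points are purely bookkeeping. In part~(1) one must correctly align the hypothesis with the differently-named variables of Eq.\eqref{props-nevnr-f} and then keep straight which of the four ``reverse'' $\prec$-relationships each available $\sq$-relationship excludes via \textsc{qscs:2}. In part~(2) the key observation is that a single $\prec$-arc placed precisely between the two surviving pre-dominants $x$ and $y$ simultaneously destroys \emph{all} pre-dominants of $\Phi$, which is exactly what is needed to trigger \textsc{qscs:3}.
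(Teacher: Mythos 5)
Your proof is correct and follows essentially the same route as the paper: in part~(1) you derive the missing relationships $y\sq x$ and $x\sq z$ from Proposition~\ref{prop-nevnr} (using Eq.\eqref{props-nevnr-f}, where the paper applies Eq.\eqref{props-nevnr-d} twice, an immaterial difference) and then exclude all $\prec$-links at $x$ via \textsc{qscs{:}2}, while $y\prec z$ rules out $y,z$ as pre-dominants. In part~(2) you do exactly what the paper does --- add a $\prec$-arc between the two pre-dominants of $\Phi$ to obtain $\Phi\in\BCSCS$ and invoke \textsc{qscs{:}3} in both directions --- merely spelling out in more detail why $\Phi$ then has no pre-dominant.
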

\begin{proof} 
(1)
Clearly, $\{x,y,z\}\in\CSCS(\qscs)$, and 
 $y,z\notin\predominants_\qscs(\{x,y,z\})$ since $y\prec z$. By 
 Eq.\eqref{props-nevnr-d}, $y\sq x$ and $z\sq x$.
 Hence, by \textsc{qscs{:}2}, $x\in \predominants_\qscs(\{x,y,z\})$, 
 and so $\{x\}=\predominants_\qscs(\{x,y,z\})$.

 (2) 
 We have 
 $\Phi \in\BCSCS(\qscs{\adding{x}{\prec}{y}}) $. 
 Hence, by \textsc{qscs{:}3}, we obtain $y\sq x$.
 Similarly, we can show that $x\sq y$.
\end{proof}

\begin{proposition}
\label{prop-rvev}
Let $\qscs=\FA{\Delta,\prec,\sq}\in\QSCS$ and $x\neq y\in\Delta$
be such that $x\not\prec y\not\sq x$. 
Then both
$\qscs{\adding{y}{\sq}{x}}$ and
$\qscs{\adding{x}{\prec}{y}}$ belong to $\QSAS$, and the following hold: 
\begin{equation}
\label{eq-jddj}
\begin{array}{lcl}
 \satmap_\QSAS(\qscs)\setminus\satmap_\QSAS(\qscs{\adding{y}{\sq}{x}})&\neq&\es\\
 \satmap_\QSAS(\qscs)\setminus\satmap_\QSAS(\qscs{\adding{x}{\prec}{y}})&\neq&\es\;.\\
\end{array}
\end{equation} 
\end{proposition}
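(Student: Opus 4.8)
The plan is to reduce the statement entirely to results already proved: Proposition~\ref{prop-rju} (which tells us exactly when one new relationship can be added to a \textsc{qsc}-structure and still land in $\QSAS$), Proposition~\ref{prop-regrehere} (every \textsc{qsa}-structure has a \textsc{qsm}-structure extension), and Theorem~\ref{prop-grtger}, i.e.\ $\QSMS=\QSAS^\sat$, so that $\satmap_\QSAS(\rs)=\Ext{\QSMS}{\rs}$. Observe at the outset that $\qscs\in\QSCS\subseteq\QSAS$ by Proposition~\ref{prop-inclusions}.

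For the two membership claims the argument is immediate: since $x\not\prec y$, Proposition~\ref{prop-rju}(1) gives $\qscs{\adding{y}{\sq}{x}}\in\QSAS$; and since $y\not\sq x$, applying Proposition~\ref{prop-rju}(2) with the roles of $x$ and $y$ swapped gives $\qscs{\adding{x}{\prec}{y}}\in\QSAS$.

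For the two non-emptiness statements in~\eqref{eq-jddj}, the key auxiliary observation is a small dictionary valid in every \textsc{qsm}-structure $\qsms$ for distinct $x,y$: combining \textsc{qsms{:}2} with \textsc{qsms{:}3} one obtains $y\not\sq_\qsms x\iff x\prec_\qsms y$ and $x\not\prec_\qsms y\iff y\sq_\qsms x$. Granting this, for the first inequality of~\eqref{eq-jddj} I would invoke $\qscs{\adding{x}{\prec}{y}}\in\QSAS$ together with Proposition~\ref{prop-regrehere} to obtain some $\qsms\in\Ext{\QSMS}{\qscs{\adding{x}{\prec}{y}}}$. Then $\qscs\pref\qscs{\adding{x}{\prec}{y}}\pref\qsms$, so $\qsms\in\Ext{\QSMS}{\qscs}=\satmap_\QSAS(\qscs)$ by Theorem~\ref{prop-grtger}; and $x\prec_\qsms y$ forces $y\not\sq_\qsms x$ by the dictionary, hence $\qsms\notin\Ext{\QSMS}{\qscs{\adding{y}{\sq}{x}}}=\satmap_\QSAS(\qscs{\adding{y}{\sq}{x}})$, which is exactly the required witness. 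The second inequality of~\eqref{eq-jddj} is entirely symmetric: pick $\qsms\in\Ext{\QSMS}{\qscs{\adding{y}{\sq}{x}}}\subseteq\satmap_\QSAS(\qscs)$ and note that $y\sq_\qsms x$ forces $x\not\prec_\qsms y$, so $\qsms\notin\satmap_\QSAS(\qscs{\adding{x}{\prec}{y}})$.

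The main, and quite mild, obstacle is just justifying the \textsc{qsm}-structure dictionary above: the forward direction of each equivalence uses \textsc{qsms{:}3} to split into the three exhaustive alternatives $x\prec_\qsms y$, $y\prec_\qsms x$, $x\sq_\qsms y\sq_\qsms x$ and then \textsc{qsms{:}2} to eliminate the ones incompatible with the hypothesis, while the backward directions are direct consequences of \textsc{qsms{:}2}. Everything else is a matter of assembling the cited results and using transitivity of $\pref$.
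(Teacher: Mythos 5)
Your proof is correct and follows essentially the same route as the paper: Proposition~\ref{prop-rju} for the two membership claims, and then \textsc{qsms{:}2} to see that no \textsc{qsm}-structure extension of $\qscs{\adding{x}{\prec}{y}}$ can extend $\qscs{\adding{y}{\sq}{x}}$ (and vice versa), with nonemptiness of the saturations coming from Theorem~\ref{prop-grtger}/Proposition~\ref{prop-regrehere}. The only cosmetic difference is that you argue via explicit witness structures while the paper states the same facts as set inclusions ($\es\neq\satmap_\QSAS(\rs')\subseteq\satmap_\QSAS(\qscs)\setminus\satmap_\QSAS(\rs)$), which is an equivalent formulation.
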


\begin{proof} 
By Proposition~\ref{prop-rju}, $\rs=\qscs{\adding{y}{\sq}{x}}\in\QSAS$
and $\rs'=\qscs{\adding{x}{\prec}{y}}\in\QSAS$. 
Then, by \textsc{qsms{:}2},
$\satmap_\QSAS(\rs)\cap\satmap_\QSAS(\rs')=\es$.
Hence, 
\[
\begin{array}{lllll}
\es&\neq&\satmap_\QSAS(\rs)&\subseteq&\satmap_\QSAS(\qscs)\setminus\satmap_\QSAS(\rs')
\\
\es&\neq&\satmap_\QSAS(\rs')&\subseteq&\satmap_\QSAS(\qscs)\setminus\satmap_\QSAS(\rs)\;,
\end{array}
\] 
and so Eq.\eqref{eq-jddj} holds. 
\end{proof}

\section{Concluding remarks}

In this paper, we extended the general approach 
elaborated in~\cite{DBLP:journals/tcs/JanickiKKM21,DBLP:series/sci/2022-1020} 
to allow 
dealing with the semantics of concurrent systems with, \eg nested transactions. 
The specification of such behaviours 
was provided by relationships between events, namely \emph{precedence}
and \emph{weak precedence}, intuitively corresponding to the `earlier than' and 
`not later than' positions in individual executions. 
 
As far as applications are concerned,
the resulting framework has the potential of, \eg 
alleviating the state space explosion problem,
which is one of the most challenging problems in the 
verification of concurrent systems or optimal
(re)scheduling of transaction-based systems. 
 
Another direction of the planned work is related to the theory of 
traces~\cite{dr95}.
Traces that can be interpreted as sets of interval orders and 
represent concurrent histories --- 
called \emph{interval traces} --- were 
proposed and discussed in~\cite{DBLP:journals/iandc/JanickiY17}. 
Though~\cite{DBLP:journals/iandc/JanickiY17}
analysed their relationship with interval 
relational structures from~\cite{JanKou93,JanKou97}, 
the relationship with the \textsc{qsa}-structures introduced in this paper is yet to be developed. 
 
\bibliography{intervals}
 
\end{document}